\newtheorem{theorem}{Theorem}
\newtheorem{lemma}{Lemma}[section]
\newtheorem{definition}[lemma]{Definition}
\newtheorem{proposition}[lemma]{Proposition}
\newtheorem{observation}[lemma]{Observation}
\newtheorem{fact}[lemma]{Fact}
\newtheorem{problem}[lemma]{Problem}
\crefname{lemma}{Lemma}{Lemmas}
\crefname{appendix}{Appendix}{Appendices}
\crefname{proposition}{Proposition}{Propositions}
\crefname{observation}{Observation}{Observations}
\crefname{claim}{Claim}{Claims}
\crefname{figure}{Figure}{Figures}
\title{Faster Approximation Algorithms for Restricted Shortest Paths\\in Directed Graphs}
\author{Vikrant Ashvinkumar \thanks{Rutgers.}
\and Aaron Bernstein \thanks{Supported by Sloan Fellowship, Google Research Fellowship, NSF Grant 1942010, and Charles S. Baylis endowment at NYU.}
\and Adam Karczmarz \thanks{University of Warsaw and IDEAS NCBR, Poland. Partially supported by the ERC
CoG grant TUgbOAT no 772346 and the National Science Centre (NCN) grant no. 2022/47/D/ST6/02184.}}
\date{}
\newcommand{\eps}{\ensuremath{\varepsilon}}
\newcommand{\Ot}{\ensuremath{\widetilde{O}}}
\newcommand{\bracket}[1]{\left[#1\right]}
\newcommand{\paren}[1]{\left(#1\right)}
\newcommand{\card}[1]{\left\vert{#1}\right\vert}
\newcommand{\set}[1]{\left\{ #1 \right\}}
\newcommand{\ceil}[1]{{\left\lceil{#1}\right\rceil}}
\newcommand{\floor}[1]{{\left\lfloor{#1}\right\rfloor}}
\newcommand{\IR}{\ensuremath{\mathbb{R}}}
\newcommand{\IZ}{\ensuremath{\mathbb{Z}}\xspace}
\newcommand{\IN}{\ensuremath{\mathbb{N}}}
\newcommand{\poly}{\mathrm{poly}}
\newcommand{\prob}[1]{\Pr\bracket{#1}}
\newcommand{\expect}[1]{\Exp\bracket{#1}}
\DeclareMathOperator*{\Exp}{\ensuremath{{\mathbb{E}}}}
\newcommand{\algname}[1]{\underline{\textbf{\textsc{#1}}}\\}
\newenvironment{tbox}{\begin{tcolorbox}[
		enlarge top by=5pt,
		enlarge bottom by=5pt,
		 breakable,
		 boxsep=2pt,
                  left=5pt,
                  right=7pt,
                  top=10pt,
                  arc=0pt,
                  boxrule=1pt,toprule=1pt,
                  colback=white
                  ]%
	}
{\end{tcolorbox}}
\newcommand{\cA}{\mathcal{A}}
\newcommand{\cC}{\mathcal{C}}
\newcommand{\hatell}{\hat{\ell}}
\newcommand{\hatd}{\hat{d}}
\newcommand{\Gin}{G_{\textrm{in}}}
\newcommand{\lendel}[1]{\mathrm{comb}(#1)}
\newcommand{\eintra}{\mathtt{IntraBlock}}
\newcommand{\eforward}[1]{\mathtt{Forward}_{#1}}
\newcommand{\eback}[1]{\mathtt{Back}_{#1}}
\newcommand{\estar}[1]{\mathtt{Star}_{#1}}
\newcommand{\ehop}{\mathtt{Hop}}
\newcommand{\edead}{\mathtt{Dead}}
\newcommand{\dist}{\mathrm{dist}}
\newcommand{\rdist}{\dist}
\newcommand{\bump}{d_{\uparrow}}
\newcommand{\pisum}{\pi_{\textrm{sum}}}
\newcommand{\expround}[2]{\mathrm{expround}_{#1}\paren{#2}}
\newcommand{\dptable}[2]{\mathsf{DP}\paren{#1, #2}}
\newcommand{\pidp}[1]{#1\textrm{-}\textsc{DP}}
\newcommand{\LDD}{\textsc{Compute}\textrm{-}\textsc{LDD}}
\newcommand{\SCC}{\textsc{Compute}\textrm{-}\textsc{SCCs}}
\newcommand{\one}{\mathbf{1}}
\newcommand{\rsp}{\mathsf{Directed}\textrm{-}\mathsf{RSP}}
\newcommand{\grsp}{\mathsf{RSP}}
\newcommand{\apx}{O(\eps \log n)}
\newcommand{\hapx}{O(\eps \log n)}
\newcommand{\blocksize}{\Delta_{\textrm{block}}}
\newcommand{\blockrun}{\Delta_{\textrm{run}}}
\newcommand{\PFP}{\hyperref[def:good-edges]{Pareto Frontier Preserving}\xspace}
\newcommand{\PSC}{\hyperref[def:path-sum-constraints]{Path-Sum Constraint}\xspace}
\newcommand{\PSCs}{\hyperref[def:path-sum-constraints]{Path-Sum Constraints}\xspace}
\begin{document}
\maketitle
\begin{abstract}
    In the restricted shortest paths problem, we are given a graph $G$ whose edges are assigned two non-negative weights: lengths and delays, a source $s$, and a delay threshold $D$.
    The goal is to find, for each target $t$, the length of the shortest $(s,t)$-path whose total delay is at most~$D$.
    While this problem is known to be NP-hard~\cite{GareyJ79}, $(1+\eps)$-approximate algorithms running in $\Ot(mn)$ time\footnote{Throughout this paper, we use the standard $\Ot(\cdot)$ notation to suppress factors polylogarithmic in $n$.}~\cite{goel2001efficient,lorenz2001simple} given more than twenty years ago have remained the state-of-the-art for \emph{directed} graphs.
    An open problem posed by \cite{Bernstein12} --- who gave a randomized $m\cdot n^{o(1)}$ time bicriteria $(1+\eps, 1+\eps)$-approximation algorithm for undirected graphs  --- asks if there is similarly an $o(mn)$ time approximation scheme for directed graphs.

    We show two randomized bicriteria $(1+\eps, 1+\eps)$-approximation algorithms that give an affirmative answer to the problem: one suited to dense graphs, and the other that works better for sparse graphs.
    On directed graphs with a quasi-polynomial weights aspect ratio\footnote{For arbitrary aspect ratios $W$, our single-pair all-targets algorithms suffer a $\log(nW)$ factor overhead in the running time.}, our algorithms run in time $\Ot(n^2)$ and, $\Ot(mn^{3/5})$ or better, respectively.
    More specifically, the algorithm for sparse digraphs runs in time $\Ot(mn^{(3 - \alpha)/5})$ for graphs with $n^{1 + \alpha}$ edges for any real $\alpha \in [0,1/2]$.
\end{abstract}

\pagenumbering{roman}
\tableofcontents
\clearpage
\pagenumbering{arabic}
\setcounter{page}{1}

\section{Introduction}
Computing shortest paths in a graph is a fundamental problem in graph algorithms. In applications, however, the paths we look for may have multiple objectives or be subject to additional constraints. For example, when planning a long journey, there are many factors to take into account: the total price, number of transfers, trip length, carbon footprint, etc. An optimal path according to one metric may be completely unacceptable in terms of another. In such scenarios, one usually optimizes one objective assuming fixed budget constraints on the other, or explores trade-offs between different factors. Problems of this kind are collectively known as \emph{multi-constrained path} problems (see, e.g.~\cite{GarroppoGT10, XueZTT08}). 

The \emph{restricted shortest paths problem} ($\grsp$) captures the problem of shortest paths with exactly one objective and one constraint; it is one of the most natural and well-studied constrained shortest path problems, with wide theoretical and real-world applications, e.g., in quality of service (QoS) routing~\cite{lorenz1998qos, younis2003constraint}, scheduling~\cite{BriskornCLLP10, NaorST07}, and other areas of operations research (see, e.g.,~\cite{GarroppoGT10}).
In the $\grsp$ problem, we are given a graph $G=(V,E)$ accompanied with \emph{exactly two} non-negative functions $\ell,d$ assigning two unrelated weights -- called \emph{lengths}
and \emph{delays} in this paper --  to the edges of~$G$. 
For a source-target pair $s,t\in V$, and a delay threshold $D$, the goal is to find the shortest (wrt. lengths $\ell$) $(s,t)$-path in $G$ whose total delay (i.e., the sum of individual edges' delays) is no more than $D$.
Just as in the case of standard shortest paths, one can also consider the single-source and all-pairs variant of the $\grsp$ problem.

\paragraph{Complexity of exact $\grsp$.} Contrary to the unconstrained problem solvable in near-linear time, the restricted shortest paths problem is NP-hard in full generality~\cite{GareyJ79}.
As a result, one can hope to solve $\grsp$ efficiently only in special cases or by allowing approximate solutions.

Given \emph{non-negative integer} lengths and delays, a pseudopolynomial $O(Dm)$ time algorithm was shown in~\cite{joksch1966shortest} for $\grsp$.
More recently, it was shown in~\cite{abboud2022seth} that even this is somewhat unlikely to be improved much further; there are no $O(D^{1-\delta}m)$ or $O(Dm^{1-\delta})$ time algorithms unless the Strong Exponential Time Hypothesis is false.

An even more special case of $\grsp$ is the \emph{hop-bounded shortest path problem}, where the edge count of the sought path is constrained by $D$, or, equivalently, the delays are all equal to $1$.
A~straightforward dynamic programming solution, that can be seen as a truncated variant of the Bellman-Ford algorithm, solves the hop-bounded shortest path problem in $O(mD)=O(mn)$ time.
\cite{Kociumaka023} proved that this simple algorithm is conditionally optimal (under the so-called Min-Plus Convolution Hypothesis).
The lower bound holds for all densities and thresholds $D$, and even if~$G$ is undirected with lengths $\ell$ bounded polynomially in $n$ and integral.
Therefore, unless the Min-Plus Convolution Hypothesis is false, the $\grsp$ problem cannot be solved \emph{exactly} in $O(mn^{1-\delta})$ time for any $\delta > 0$ even in its most natural and simple special case.

\paragraph{Approximate $\grsp$.} The hardness results motivate looking for approximate algorithms for the $\grsp$ problem, especially if we seek to go beyond $\Theta(mn)$ runing time.

Indeed, the $\grsp$ problem has previously gained a lot of attention in the approximate setting~\cite{Bernstein12, goel2001efficient, Hassin92, lorenz2001simple, Warburton87}.
To discuss the previous work in more detail, let us first introduce some more notation.
\begin{definition}
Let $P^*$ be the shortest $(s,t)$-path with delay bounded by $D$ in $G$.
We say that an $(s,t)$-path $P\subseteq G$ constitutes an \emph{$(\alpha,\beta)$-approximate} solution to the $\grsp$ problem, if its length satisfies $\ell(P)\leq \alpha\cdot \ell(P^*)$ and its delay satisfies $d(P)\leq \beta\cdot D$.

We call an $\grsp$ algorithm $(\alpha,\beta)$-approximate if it produces $(\alpha,\beta)$-approximate solutions.
\end{definition}

\cite{goel2001efficient} gave an $\Ot(mn/\eps)$-time $(1,1+\eps)$-approximate algorithm for the single-source $\grsp$ problem. Their result leverages 
the observation that simple dynamic programming can, in fact, solve any instance of $\grsp$ with integral delays in $O(mD)$ time~\cite{joksch1966shortest}, and combines it with delay rounding/scaling in a standard way.
\cite{lorenz2001simple} showed a $(1+\eps,1)$-approximate algorithm for the single-source single-target case running in $\Ot(mn/\eps)$ time; this essentially follows by swapping the roles of lengths and delays and proceeding similarly\footnote{The algorithm of \cite{lorenz2001simple} requires some extra care to make it run in strongly polynomial time. This is also the reason for the single-target requirement.}
to~\cite{goel2001efficient}.  

Specifically for \emph{undirected} graphs,~\cite{Bernstein12} obtained a $(1+\eps,1+\eps)$-approximate single-source $\grsp$ algorithm running in $\Ot(m)\cdot (2/\eps)^{O\left(\sqrt{\log{n}}\log\log{n}\right)}$ time.
Note that for $\eps=\Omega(1)$, this bound is $mn^{o(1)}$, i.e., almost linear.
Apart from the rounding/scaling DP,~\cite{Bernstein12} exploits an emulator construction that crucially relies on the input graph being undirected.

Interestingly, the two $\Ot(mn/\eps)$ time algorithms~\cite{goel2001efficient, lorenz1998qos} remain the state-of-the-art approximate solutions for \emph{directed graphs} to date. 

\subsection{Our Results}
In this paper, we design the first approximate algorithms for restricted shortest paths in directed graphs polynomially improving upon the $\tilde{O}(mn)$ bounds.
Similarly to~\cite{Bernstein12, goel2001efficient}, our main focus is on the single-source variant defined formally below.
\begin{problem}[$\rsp$]\label{prob:rsp}
    Given a directed graph $G = (V, E, \ell, d)$, a source $s \in V$, and a delay threshold $D$, output for all $t \in V$ the length of the shortest $(s,t)$-path whose delay is at most $D$.
\end{problem}
In~\Cref{prob:rsp}, we concentrate only on returning path lengths for simplicity. All our algorithms for $\rsp$ can be extended to produce some actual $(s,t)$-path achieving the guaranteed lengths and delays with near-optimal overhead. The details can be found in~\Cref{sec:returning-paths}.

First of all, we give a near-optimal approximate\footnote{For simplicity of the stated bounds, in this paper we assume $\eps^{-1}=\poly(n)$. If $\eps^{-1}$ is larger than polynomial in~$n$, then neither the state-of-the-art nor our algorithms run in polynomial time.} algorithm for $\rsp$ in dense graphs:
\begin{theorem}[$\rsp$ on Dense Graphs; simplified version of \Cref{thm:dense} in~\Cref{sec:dense}]\label{thm:dense-simple}
    There is a Monte Carlo randomized $(1 + \eps, 1 + \eps)$-approximate algorithm for $\rsp$ that runs in $\Ot(n^2\log(W)/\eps^3)$ time, where~$W$ is the aspect ratio\footnote{Formally, the aspect ratio $W$ is the quantity $(\max_{e\in E_{\ell+}}\ell(e))/(\min_{e\in E_{\ell+}}\ell(e))$, where $E_{\ell+}$ is the subset of edges with positive lengths. In particular, the zero lengths do not influence the aspect ratio. It is worth noting that by swapping the roles of lengths and delays, the running time could be logarithmic in the aspect ratio of delays instead.} of lengths.
\end{theorem}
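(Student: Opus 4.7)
The plan is to combine two standard reductions---delay rounding and length-scale bucketing---with the \PFP Low-Diameter Decomposition that the paper's macros ($\eintra$, $\eforward{i}$, $\eback{i}$, $\estar{i}$, $\ehop$, $\edead$, $\LDD$) strongly suggest. The first reduction, due to \cite{goel2001efficient}, scales the delays by $n/(\eps D)$ and rounds up so that every edge delay becomes an integer in $\{0,1,\ldots,N\}$ with $N=\Ot(n/\eps)$, at the cost of a $(1+\eps)$ slack in the delay budget. The second, enabled by the additional $(1+\eps)$ length slack, handles each of $\Theta(\log W)$ length scales $2^i$ separately: at scale $i$, edges of length $\ll \eps 2^i/n$ are treated as zero-length and edges of length $\gg 2^i$ are dropped, so that the surviving lengths take only $\Ot(1/\eps)$ distinct values after a further rounding step.

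For each scale $i$ I would then apply $\LDD$ to partition $V$ into blocks whose internal $\ell$-diameter is $\Ot(2^i/\eps)$ with high probability. This induces the edge classification $E=\eintra\cup\eforward{i}\cup\eback{i}\cup\estar{i}\cup\ehop\cup\edead$, and the key structural claim I would need is that the subgraph on $\eintra\cup\eforward{i}\cup\estar{i}\cup\ehop$ is \PFP: every $(s,t)$-path in $G$ can be re-routed through this subgraph while losing only a $(1+\eps)$ factor in both length and delay. Given this, I would run a two-level DP on the reduced subgraph---inside each block the pseudopolynomial $O(m_{\text{block}}\cdot N)$ DP of \cite{joksch1966shortest}, and across blocks a $(\min,+)$-style relaxation over the $\Ot(n)$ block representatives via $\estar{i}$ and $\ehop$ edges, maintaining at every vertex an $\Ot(1/\eps)$-sized approximate Pareto frontier of (length, delay) pairs. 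Since the across-block graph has at most $\Ot(n^2)$ edges and each is relaxed $\Ot(1/\eps^2)$ times, the per-scale work is $\Ot(n^2/\eps^2)$; summing over $\Ot(\log W)$ scales yields the target $\Ot(n^2\log(W)/\eps^3)$.

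The main obstacle is establishing the \PFP property for the directed setting, where Bernstein's undirected emulator is unavailable. This requires a careful routing argument: one walks a near-optimal $(s,t)$-path through the LDD, replaces each inter-block crossing by either a $\estar{i}$ or $\ehop$ edge (at the price of at most the block diameter in extra length), and re-routes the intra-block portions along $\eintra$ edges. The per-crossing errors must telescope to an overall $(1+\eps)$ blow-up, which is exactly what the LDD diameter bound combined with a \PSC-type argument bounding the number of scales at which any single path is ``non-trivial'' is designed to guarantee. Finally, the Monte Carlo guarantee of $\LDD$ is amplified by $\polylog n$ independent repetitions (absorbed in $\Ot(\cdot)$), making the overall algorithm Monte Carlo randomized as claimed.
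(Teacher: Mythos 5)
Your proposal has several genuine gaps, and the two most serious ones concern the back-edges and the running time of the inner DP.

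First, the structural claim at the heart of your argument --- that after one application of $\LDD$ the subgraph on $\eintra\cup\eforward{i}\cup\estar{i}\cup\ehop$ suffices, i.e.\ that every near-optimal $(s,t)$-path can be re-routed there with only a $(1+\eps)$ loss --- is false. The Sparse Hitting property only guarantees that a path of small weight contains \emph{few} back-edges in expectation (e.g.\ $\Ot(\sqrt{n})$ of them at diameter scale $\sqrt{n}$); it does not let you avoid them, and a back-edge joining two different SCCs of $G\setminus\eback{}$ cannot be simulated by star-edges or forward motion. The paper therefore never deletes $\eback{}$: it keeps those edges, assigns them frequencies, and bounds their expected contribution to $\pi(P')$ via Sparse Hitting. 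Relatedly, you run the LDD on lengths alone, so your star-edges control only the $\ell$-diameter of a block; re-routing through them can blow up the delay arbitrarily. The paper avoids this by decomposing the \emph{combined} length-delay graph $\lendel{G}$ with $w=\ell+d$, so each star-edge can be assigned both length and delay $D_i$. Note also that \PFP is the soundness property of the auxiliary edges (each is dominated by a real path in $G$); the completeness property you actually need is the \PSC, and conflating the two hides exactly the step that fails.

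Second, the running time does not work out. Your inner DP is the pseudopolynomial $O(m_{\text{block}}\cdot N)$ algorithm with $N=\Ot(n/\eps)$; summed over blocks this is $\Ot(mN)=\Ot(n^{3}/\eps)$ on dense graphs, which you omit from the final tally and which already exceeds the target by a factor of $n$. Moreover, even setting that aside, a single level of LDD is not enough: the paper shows that one LDD at scale $\sqrt{n}$ only yields path-sum/frequency bounds of order $n^{3/2}$, and the $\Ot(n^{2})$ bound requires a full $\log n$-level hierarchy of LDDs at geometrically decreasing scales $D_i=n/2^{i}$, an ordering $\tau$ respecting that hierarchy with $\pi(uv)=|\tau(u)-\tau(v)|$, and a single run of the frequency-driven $\pidp{\pi}$ on $G\cup H$ (rather than a two-level block DP). The analysis then has to handle SCCs that are ``long'' for the optimal path --- detectable only in the analysis, not by the algorithm --- by non-constructively re-routing through star-edges, which is the step your telescoping argument would need but does not supply.
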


Note that if the aspect ratio is quasipolynomial in $n$ (say, if the weights come from the interval $[1,n^{\log{n}}]$), this improves upon the best known approximate algorithms \cite{goel2001efficient, lorenz2001simple} by a polynomial factor for all but sparse graphs with $m=\Ot(n)$. This, however, comes at the cost of being approximate in both the length and the delay,
a phenomenon that also manifested in~\cite{Bernstein12}.

What is also significant is that the algorithm is faster by a subpolynomial factor than the state-of-the-art~\cite{Bernstein12} (which gives the same approximation guarantee) even for dense \emph{undirected} graphs.

We also show an algorithm better suited to sparse and moderately dense graphs that improves upon~\Cref{thm:dense-simple} whenever $m=O(n^{3/2})$.
\begin{theorem}[$\rsp$ on Sparse Graphs; simplified version of \Cref{thm:sparse} in~\Cref{sec:sparse}]\label{thm:sparse-simple}
Let $m = n^{1 + \alpha}$, for $\alpha \in [0,1/2]$.
    There is a Monte Carlo randomized $(1+\eps,1+\eps)$-approximate algorithm for $\rsp$ that runs in $\Ot\paren{mn^{(3 - \alpha)/5} \log(W) / \eps^4}$ time, where $W$ is the aspect ratio of lengths.
\end{theorem}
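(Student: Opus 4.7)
I plan to bootstrap the dense-graph algorithm of \Cref{thm:dense-simple} by a short-hop / long-hop decomposition parameterized by a hop threshold $h$, to be optimized at the end. Classify any optimal $(s,t)$-path by whether its $(1+\eps,1+\eps)$-approximate representative uses at most $h$ edges or more. For the at-most-$h$ case, run the scaled Bellman--Ford-style dynamic program of \cite{goel2001efficient,lorenz2001simple} for $h$ iterations on $G$; this costs $\Ot(mh\log W /\eps)$ and yields $(1+\eps,1+\eps)$-approximate short-hop distances from $s$ to every vertex of $V$.

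\paragraph{Long-hop phase.} Sample a hitting set $S\subseteq V$ uniformly at random with $|S|=\Ot(n/h)$. A standard argument shows that whp every $(s,t)$-path of more than $h$ hops decomposes into consecutive at-most-$h$-hop segments whose interior junction points lie in $S$. Build a \emph{summary} digraph $H$ on $S\cup\{s\}$ whose arcs encode, for each pair $(u,v)\in(S\cup\{s\})\times S$, a discretized Pareto frontier of length/delay trade-offs of $(u,v)$-paths using at most $h$ hops in $G$; after $(1+\eps)$-rounding the frontier between any two vertices has $\Ot(1/\eps)$ breakpoints, so $H$ has $\Ot(n/h)$ vertices and $\Ot((n/h)^2/\eps)$ arcs. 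Invoking \Cref{thm:dense-simple} on $H$ recovers $(1+\eps,1+\eps)$-approximate $H$-distances in $\Ot((n/h)^2\log W/\eps^4)$ time, which are then combined with one further short-hop extension to produce approximate distances from $s$ to every target $t\in V$.

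\paragraph{Parameter choice.} Setting $h=n^{(3-\alpha)/5}$ balances the short-hop cost $\Ot(mh)=\Ot(mn^{(3-\alpha)/5})$ against the long-hop cost $\Ot((n/h)^2)=\Ot(n^{(4+2\alpha)/5})$, the former being dominant and matching the target bound for $m=n^{1+\alpha}$ with $\alpha\in[0,1/2]$.

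\paragraph{Main obstacle.} The chief technical challenge is constructing the summary graph $H$ within the $\Ot(mh)$ budget; a naive source-by-source $h$-hop SSSP from each $u\in S$ would cost $|S|\cdot\Ot(mh)=\Ot(mn)$, erasing all savings. The paper's macro vocabulary --- \textsc{Compute}-\textsc{LDD}, \textsc{Compute}-\textsc{SCCs}, and the edge classes \texttt{IntraBlock}, \texttt{Forward}, \texttt{Back}, \texttt{Star}, \texttt{Hop}, \texttt{Dead} --- strongly suggests that the authors build $H$ via a low-diameter decomposition respecting the SCC structure of $G$, bucketing edges by their role and amortizing multi-source $h$-hop computations across blocks. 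In undirected graphs a comparable economy is achieved by the emulators of \cite{Bernstein12}; the directed setting forces a considerably more delicate LDD-plus-blocking argument, and I expect the bulk of the proof to be devoted to verifying that $H$ preserves the length/delay Pareto frontier up to $(1+\eps,1+\eps)$ slack and that its construction runs in time $\Ot(mh)$.
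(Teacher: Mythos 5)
Your plan captures the right high-level flavor (hop-bounding, hitting sets, bootstrapping via the dense-graph machinery), and the arithmetic at the end is consistent with the claimed exponent. But the proposal leaves its own central obstacle unresolved, and the mechanism the paper actually uses to resolve it is meaningfully different from the one you sketch.

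Your plan requires a \emph{global} summary graph on a uniform hitting set $S$ of size $\tilde{O}(n/h)$, with each arc encoding an $h$-hop restricted Pareto frontier between a pair of vertices in $S$. You correctly flag that the obvious construction --- $h$-hop restricted SSSP from every $u\in S$ --- costs $\tilde{O}(|S|\cdot mh)=\tilde{O}(mn)$, but you then defer the fix to an unspecified ``LDD-plus-blocking'' amortization. This is exactly where the paper departs from your route: it never computes Pareto trade-offs between all pairs in a global hitting set. Instead, the LDD hierarchy (two-phase, with small/medium/large SCCs) produces a collection of disjoint \emph{blocks}, each of size $O(\blocksize)$, and the paper runs its All-Pairs $\rsp$ routine \emph{locally inside each block}. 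Because the blocks are vertex-disjoint, the total construction time is $\sum_j \tilde{O}(|E(B_j)|\cdot\blocksize)=\tilde{O}(m\,\blocksize)$, which is the quantity that must be paid in any case. The resulting hop-edges are not fed into a separate invocation of the dense algorithm: the auxiliary edges (star-edges from the LDD hierarchy and hop-edges within blocks) are added directly to $G$, a frequency function $\pi$ is defined over $G\cup H$, and a single $\pidp{\pi}$ run (\Cref{thm:pi-dp}) produces the answers. There is no short-hop/long-hop case split and no recursion through \Cref{thm:dense-simple}.

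A second, smaller gap: your construction has a single parameter $h$, while the paper has two independent parameters $\blocksize$ (block size, i.e.\ how many vertices are grouped for a local All-Pairs $\rsp$) and $\blockrun$ (how finely each block is sampled to produce hop-edges, and hence how many block-internal edges a path segment can traverse before it is shortcut). With a single parameter --- effectively forcing $\blockrun^2=\blocksize$ --- the balancing gives only $\tilde{O}(mn^{2/3})$. Decoupling $\blockrun$ from $\blocksize$ is what lets the $\pi$-value of intra-block edges be raised (reducing $\Pi$) while $H$ is allowed to become denser than $E(G)$; this is the step that drops the exponent to $(3-\alpha)/5$ for $m=n^{1+\alpha}$, $\alpha\in[0,1/2]$. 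Your parameter choice $h=n^{(3-\alpha)/5}$ reproduces the paper's exponent only because you reverse-engineered it; with a single hop threshold and no $\blockrun$-style refinement there is no balancing argument that actually yields $(3-\alpha)/5$ rather than $2/3$.

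Finally, the cyclic structure is not a side issue. Your global-summary-graph approach assumes a clean decomposition of any long path into $\le h$-hop $S$-to-$S$ segments, which is fine in a DAG; in a general directed graph a path can re-enter the same region, and the paper's verification of the Path-Sum Constraint on the sparse LDD hierarchy (\Cref{lem:sparse-restarts}, \Cref{lem:sparse-block-edges}) is specifically about bounding, in expectation, how often a path re-enters a block, charged against back-edges in ancestor SCCs. Your proposal has no analogue of this accounting, and it is not obvious how to get it from the hitting-set picture alone.
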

A user-friendly way to read the above running time is $\Ot\paren{(mn)^{4/5}}$ or $\Ot\paren{mn^{3/5}}$-or-better when $m = O(n^{3/2})$, but we keep the extra parameter $\alpha$ around to make our proofs cleaner.

In particular, for $\eps^{-1}=n^{o(1)}$ and quasipolynomial aspect ratio, our algorithm for sparse graphs improves upon the state-of-the-art approximation schemes for directed graphs~\cite{goel2001efficient, lorenz2001simple}  for the entire range of possible graph densities. 

While the error in both the lengths and delays is inherent to our approach to $\rsp$, the dependence on the aspect ratio in \Cref{thm:dense-simple,thm:sparse-simple} can be removed without overhead if we only care about solving the problem for a single source-target pair $(s,t)$ (see~\Cref{lem:st-strong-poly}). It is worth noting that \cite{lorenz2001simple} also needed the single source-target pair requirement to remove the dependence on the aspect ratio in their $(1+\eps,1)$-approximate $\Ot(mn)$-time algorithm.

On the way to proving~\Cref{thm:all-pairs-simple}, we also show a near-optimal algorithm for the all-pairs variant of $\rsp$, which we believe is of independent interest.
\begin{theorem}[All-Pairs $\rsp$; \Cref{thm:all-pairs}~in~\Cref{sec:all-pairs} simplified]\label{thm:all-pairs-simple}
There is a Monte Carlo randomized $(1,1+\eps)$-approximate algorithm that solves $\rsp$ for all sources $s\in V$ at once in
$\Ot(mn/\eps+n^2/\eps^2)$ time.
\end{theorem}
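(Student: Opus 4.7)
The plan is to combine the classical delay-scaling reduction with a randomized hop-based hitting-set decomposition that amortizes the $\Ot(mn/\eps)$-per-source cost of the single-source $\rsp$ algorithm across all $n$ sources, so that the all-pairs cost equals (roughly) the single-source cost plus a lower-order per-pair composition term.

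\textbf{Scaling.} First, round each $d(e)$ up to the nearest multiple of $\eps D/n$, yielding scaled integer delays in $\{0, 1, \ldots, H\}$ with $H = O(n/\eps)$. Any simple path accumulates additive delay error at most $\eps D$, so solving the scaled instance exactly yields a $(1, 1+\eps)$-approximation. The Goel-et-al.\ single-source DP fills the scaled table in $\Ot(mH) = \Ot(mn/\eps)$ time, so the naive all-pairs cost would be $\Ot(mn^2/\eps)$ --- this must be shaved.

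\textbf{Hitting set.} Fix a hop parameter $h$ and sample each vertex independently with probability $\Theta(\log n / h)$, forming a landmark set $L$ of size $\Ot(n/h)$ that (w.h.p.) intersects every simple path with more than $h$ edges. For each landmark $p \in L$, run the scaled single-source DP on $G$ and on $G^{\mathrm{rev}}$, recording for every $v$ the restricted $\dist(p, v)$ and $\dist(v, p)$ together with their $(1+\eps)$-coarsened Pareto frontiers of $(\text{length},\text{delay})$-pairs of size $\Ot(1/\eps)$. The preprocessing costs $\Ot(|L| \cdot mH) = \Ot(mn^2/(h\eps))$.

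\textbf{Per-source answer.} For each source $s$, compute an $h$-hop-bounded scaled DP from $s$ in $\Ot(mh)$ time to obtain $\dist_{\leq h}(s, v)$ for every $v$. For each target $t$, return the better of (i) the direct $\leq h$-hop distance $\dist_{\leq h}(s,t)$ and (ii) $\min_{p \in L}\bigl(\dist(s, p) \oplus \dist(p, t)\bigr)$, where $\oplus$ denotes a Pareto-frontier min-plus composition constrained by the total delay budget $D$. Any optimal $(s,t)$-path of more than $h$ edges contains some landmark by the hitting-set property, so option (ii) captures it; shorter paths are captured by (i).

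\textbf{Balancing and obstacles.} A suitable choice of $h$ balances landmark preprocessing, per-source hop-bounded DP, per-pair Pareto composition, and output enumeration. The main technical obstacle is handling optimal paths passing through \emph{multiple} landmarks: a single $s \to p \to t$ decomposition may be suboptimal when $h$ is small relative to the optimal path length. Resolving this likely requires a secondary $\rsp$ computation on the small $|L|$-vertex "skeleton graph" of landmark-to-landmark restricted distances. Bounding the Pareto-frontier sizes to $\Ot(1/\eps)$ via an auxiliary coarser rounding keeps the skeleton computation within $\Ot(n^2/\eps^2)$ amortized across sources, matching the stated bound. Making this bookkeeping rigorous --- in particular showing that the coarser rounding does not destroy the $(1, 1+\eps)$ guarantee and that the skeleton-$\rsp$ cost amortizes properly --- is the crux of the proof.
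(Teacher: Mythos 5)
Your overall strategy---hitting sets of landmarks combined with hop-bounded dynamic programs---is the right family of ideas, and is indeed what the paper uses. But the single-level version you describe cannot be balanced to the claimed bound, and this is a genuine gap rather than bookkeeping. Your landmark preprocessing runs the \emph{full} (unbounded-hop) single-source DP from each of $|L| = \Ot(n/h)$ landmarks, costing $\Ot(mn^2/(h\eps))$, while the per-source $h$-hop DPs cost $\Ot(mnh/\eps)$ in total. The product of these two terms is $\Theta(m^2n^3/\eps^2)$ \emph{independently of $h$}, so the best you can do by choosing $h$ is their geometric mean, $\Ot(mn^{3/2}/\eps)$ at $h=\sqrt{n}$ --- polynomially worse than the target $\Ot(mn/\eps + n^2/\eps^2)$. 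Pushing $h$ up to $\Theta(n)$ to make the landmark phase cheap just turns the per-source phase into the naive $\Ot(mn^2/\eps)$. The ``skeleton graph on $L$'' you propose does not escape this: the bottleneck is not composing landmark-to-landmark information but the cost of computing the full tables \emph{from} each landmark in the first place. (Incidentally, the multi-landmark decomposition issue you flag as the crux is actually a non-issue in your setup: since the landmark tables are unbounded-hop, any single landmark on the last $h$ vertices of the optimal path suffices for a valid $s\to p\to t$ split. The real crux is the runtime.)

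The missing idea is to make the construction \emph{recursive}, with $O(\log n)$ levels of landmark sets $V_0 = V \supseteq$-in-spirit $V_1, \ldots, V_{q-1}$ of sizes $\Ot(n/2^k)$, processed from the sparsest level down. At level $k$, the DP run from each $s \in V_k$ is itself hop-bounded to $2^{k+1}$ hops --- but on a graph $G_s$ augmented with $\Ot(|V_{k+1}|/\eps)$ shortcut edges from $s$ to the level-$(k{+}1)$ landmarks, each encoding an already-computed level-$(k{+}1)$ table entry (one per $(1+\eps)$-delay bucket). Correctness follows because any path with more than $2^{k+1}$ hops contains a level-$(k{+}1)$ landmark among its last $2^{k+1}$ vertices, so it is captured by one shortcut plus a suffix of at most $2^{k+1}-1$ edges. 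The cost at level $k$ is then $\Ot\bigl(|V_k| \cdot 2^{k+1} \cdot (m + |V_{k+1}|/\eps)/\eps\bigr) = \Ot(mn/\eps + n^2/(2^k\eps^2))$, which telescopes to $\Ot(mn/\eps + n^2/\eps^2)$ over all levels --- this is exactly the cancellation your single level cannot achieve. One further point your proposal would then need to address: each level dilates the delay by a $(1+\eps)^{O(1)}$ factor, so the $O(\log n)$-level recursion accumulates a $(1+\eps)^{O(\log n)}$ delay error, which the paper repairs by running the whole construction with $\eps$ scaled down by $\Theta(\log n)$, at only a polylogarithmic cost.
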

Note that the $\Omega(mn)$ term is required in~\Cref{thm:all-pairs-simple} simply because for $D=\infty$ the problem being solved is equivalent to the standard exact APSP problem, for which it is conjectured there is no truly subcubic algorithm.
To the best of our knowledge, no non-trivial solutions for (approximate) All-Pairs $\rsp$ have been described so far.

\subsection{Techniques and Organization}

\paragraph{Techniques.}
We now give a brief summary of our main techniques; see~\Cref{sec:overview} for a more detailed high-level overview.

We start with a dynamic programming approach to solving $(1,1+\eps)$-approximate \linebreak $\rsp$ in~\Cref{sec:dp}.
The dynamic program, called the $\pidp{\pi}$, can be thought of as a generalization of the rounding/scaling dynamic programs leveraged in~\cite{Bernstein12, goel2001efficient}.
Crucially, our variant takes as input, in addition to the graph $G$ and source $s$, a \emph{frequency function} $\pi:E\to \mathbb{N}$.
The function $\pi$ defines, roughly speaking, how frequently the individual edges should be inspected by the DP to keep the error bounded.
If the frequency function is chosen suitably, the DP can run substantially faster than $\Ot(mn)$ time.
For example, a good $\pi$ function can be chosen very easily for DAGs, so that one obtains an $\Ot(n^2)$ time $(1,1+\eps)$-approximate algorithm for $\rsp$ in acyclic graphs. To the best of our knowledge, no specialized algorithms for DAGs beyond~\cite{goel2001efficient} have been described in the prior literature.

Our main results for $\rsp$ in general graphs are obtained, roughly speaking, by computing a good frequency function to be fed into the $\pidp{\pi}$. A good $\pi$ function for the input graph~$G$ may not necessarily exist or can be difficult to compute. We overcome this difficulty by also computing a certain auxiliary subgraph $H$ such that adding $H$ to $G$ does not improve the optimal solutions in~$G$; we then show how to compute a good frequency function $\pi$ for $G \cup H$. We show two entirely different constructions of $\pi, H$ for dense and sparse graphs, leading to our two main results (\Cref{thm:dense-simple,thm:sparse-simple}). In both cases, we start with a relatively simple construction of $\pi, H$ for DAGs, and then show how to extend these to general graphs via directed Low-Diameter Decompositions~\cite{BernsteinGW20,BernsteinNW22}, among other tools.

\paragraph{Related Work.} The high-level idea of using a frequency function to guide the underlying algorithm was also used in the very recent combinatorial max-flow algorithm of \cite{BernsteinBST24}; the details of computing this frequency function are very different, however, simply because max-flow and $\rsp$ are such different problems.

Our algorithm bears more similarity to the decremental shortest path algorithm of \cite{BernsteinGW20}, whose \emph{approximate topological ordering} is similar in spirit to a frequency function. Our two main results (\Cref{thm:dense-simple,thm:sparse-simple}) take as their starting point simple frequency functions for dense and sparse DAGs respectively, and these constructions are quite similar to the DAG algorithms of \cite{BernsteinGW20}, although as we discuss later, our interface of a frequency function gives us additional versatility that leads to a slight polynomial improvement for sparse DAGs.

Our algorithms for general directed graphs, however, are quite different from those of \cite{BernsteinGW20}, largely because the underlying problems are different ($\rsp$ vs. dynamic shortest paths). We both use a hierarchy of directed LDDs, but the constructions are very different. In particular, our construction is top-down whereas that of \cite{BernsteinGW20} is bottom-up; this leads to different interactions between the levels of the hierarchy, and hence a completely different analysis.

\paragraph{Organization.} %

\Cref{sec:preliminaries} covers the preliminaries.
\Cref{sec:dp} introduces the $\pidp{\pi}$ algorithm, which is the foundation for all our results.
In~\Cref{sec:overview} we provide a high-level overview of our main results, with the goal of providing some intuition and easing the reader into the more technical sections to follow.
In~\Cref{sec:framework} we formally reduce the problem of $\rsp$ to the problem of finding auxiliary edges $H$ and a good $\pi$ function on $G\cup H$.
\Cref{sec:dense,sec:sparse} describe how the $\pi,H$ pairs can be found efficiently in dense and sparse digraphs, respectively, thus completing the corresponding proofs of \Cref{thm:dense-simple,thm:sparse-simple}.
\Cref{sec:all-pairs} is devoted to our All-Pairs $\rsp$ algorithm of~\Cref{thm:all-pairs-simple}, obtained by combining dynamic programming with the shortcutting approach used in~\cite{Bernstein16} for decremental APSP.
Finally, in~\Cref{sec:open-problems} we discuss some directions for further work on $\rsp$.

\section{Preliminaries}
\label{sec:preliminaries}

\paragraph{Numbers and Sets. }
For a positive integer $n \in \IN$, we denote $\set{1, 2, \ldots, n}$ with $[n]$.
When the context is clear, for integers $a,b \in \IN$ we use $[a, b]$ to denote $\set{a, a+1, \ldots, b}$ (rather than all real numbers between $a$ and $b$).
For positive integers $k < n$, we write $k | n$ to indicate that $k$ divides $n$.

\paragraph{Graphs. }
Throughout the paper, we consider directed graphs $G = (V, E)$.
When it is clear, we use $n = \card{V}$ and $m = \card{E}$.
To suppress uninteresting details, we assume that $G$ is given as a simple graph\footnote{Our results, nevertheless, hold for $G$ given with parallel edges, which is an especially relevant concern for $\rsp$ seeing that choosing the right edge to take between a pair of vertices explores a tradeoff between the length and delay available to the rest of the delay restricted path. We explain how to address this in \Cref{sec:parallel-edges}.}, but our techniques then modify $G$ so that it may contain parallel edges.
We denote an edge connecting vertices $u$ to $v$ with $uv$; typically, it will be unambiguous which edge we are referring to when there are multiple edges directed from $u$ to $v$.
A path $P$ is a sequence of vertices $v_0, v_1, \ldots, v_h$ such that $v_i v_{i+1}$ are edges for all $i \in [h - 1]$; we say that $P$ is a $h$-hop path.
We also call $P$ a $(v_0,v_h)$-path.
We usually treat $P$ as the set of its edges, and refer to a set of contiguous edges as a subpath of $P$.
Sometimes, we may intersect a graph with an object that is unambiguously a set of edges; in this case, we are also treating the graph as the set of its edges.
If we write $s \in V$ or $t \in V$ and do not qualify them further, they stand for ``source'' and ``target'' respectively.

The graphs in this paper will come equipped with lengths $\ell: E \rightarrow \IR_{\ge 0}$ and delays
${d: E \rightarrow \IR_{\ge 0}}$, and sometimes weights $w: E \rightarrow \IR_{\ge 0}$.
The lengths, delays, and weights of a set of edges (in particular, a path) $P$ are written with $\ell(P) = \sum_{e \in P} \ell(e)$, $d(P) = \sum_{e \in P} d(e)$, and 
$w(P) = \sum_{e \in P} w(e)$, respectively.

For $C \subseteq V$, we use $G[C]$ to denote the subgraph of $G$ induced on $C$.
For $H \subseteq E$ we use $G \setminus H$ to denote the subgraph of $G$ attained by removing $H$ and $G \cup H$ to denote the supergraph of $G$ attained by adding $H$.
We often write $H\cap G[C]$ as shorthand for $H\cap E(G[C])$.

For $u,v\in V$ and $D\in\mathbb{R}_{\geq 0}$, let us denote by $\rdist(s,t,D)$ the smallest possible length of an $(s,t)$-path $P$ in $G$ satisfying $d(P)\leq D$.
When considering delay-constrained paths $P$ in a different graph $H \neq G$, we may sometimes use the notation $\rdist_H(s,t,D)$.

The strongly connected components (SCCs) of a directed graph $G$ are maximal vertex sets $C \subseteq V$ such that, for all $u, v \in C$ there is a path in $G$ from $u$ to $v$.
A classic in the algorithm design canon shows that we can compute the SCCs in linear time and, combining with another classic, they can be presented in topological order in the same time.
\begin{proposition}[Theorems 13, 14 in \cite{tarjan1972depth} and \cite{tarjan1976edge}]
\label{prop:scc}
    The strongly connected components of a directed graph $G$ can be computed in topological order of the DAG where each SCC is contracted in $O(n + m)$ time.
\end{proposition}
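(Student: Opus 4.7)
The plan is to invoke Tarjan's classical SCC algorithm, which performs a single depth-first search (DFS) in $O(n + m)$ time and outputs the SCCs as they are discovered. First I would run DFS starting from an arbitrary unvisited vertex (restarting from another unvisited vertex when the current DFS tree completes), maintaining for each vertex $v$ a discovery timestamp $\mathrm{dfs}(v)$, a lowlink value $\mathrm{low}(v)$ initialized to $\mathrm{dfs}(v)$, and a secondary stack $S$ that holds every vertex from its discovery until its containing SCC is completed. Upon returning from a tree child $u$ of $v$, I would update $\mathrm{low}(v) \gets \min(\mathrm{low}(v), \mathrm{low}(u))$, and for each non-tree edge $vu$ with $u$ still on $S$, I would update $\mathrm{low}(v) \gets \min(\mathrm{low}(v), \mathrm{dfs}(u))$. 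When $\mathrm{low}(v) = \mathrm{dfs}(v)$ at the moment $v$ is completed, I would pop $S$ down to and including $v$ and output the popped set as an SCC.

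Correctness of the decomposition is by the standard invariant: $v$ is the earliest-discovered vertex of its SCC exactly when $\mathrm{low}(v) = \mathrm{dfs}(v)$, and at that moment the portion of $S$ at or above $v$ coincides precisely with the vertex set of its SCC. The running time is $O(n+m)$ because each vertex is pushed and popped from $S$ exactly once, and each edge is inspected a constant number of times during DFS.

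To obtain the SCCs in topological order of the condensation DAG $\hat{G}$, I would observe that Tarjan's procedure natively outputs them in \emph{reverse} topological order: if $C$ and $C'$ are SCCs with a condensation edge $C \to C'$, then the first vertex of $C \cup C'$ visited by DFS lies in $C$ or in $C'$; in either case, every vertex of $C'$ must be completed (and hence $C'$ as an SCC must be output) before the DFS-root of $C$ completes, simply because $C'$ is reachable from $C$ but not vice versa. Hence $C'$ is output before $C$, and reversing the output list --- an $O(n)$ postprocessing step --- produces the desired topological order. The only step requiring any real argument is this output-order claim, but it follows directly from the nesting of DFS completion intervals across mutually reachable components, so there is no genuine obstacle beyond citing the two classical references.
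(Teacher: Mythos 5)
Your proposal is correct and matches the paper's intent exactly: the paper simply cites Tarjan's classical results, and what you describe is precisely that algorithm together with the standard observation that it emits SCCs in reverse topological order of the condensation, so a final list reversal yields the stated $O(n+m)$ bound. Your case analysis for the output-order claim (first-discovered vertex of $C \cup C'$ lying in $C$ versus in $C'$) is the standard and correct justification.
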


\paragraph{Pareto Frontier. }

Refer to \Cref{fig:pareto} for this paragraph, and a depiction of $(1+\eps, 1+\eps)$-approximate solutions.
We say that an $(s,t)$-path $P$ Pareto dominates an $(s,t)$-path $P'$ if \linebreak
${\ell(P) \le \ell(P')}$ and $d(P) \le d(P')$, and that $P$ weakly Pareto dominates $P'$ if at least one of the inequalities is an equality; visually, the point on a length-delay plot corresponding to $P'$ is northeast to that of $P$.
For an $(s,t)$-path $P$, let $R_P = \set{(x,y): x \ge d(P) \textrm{ and } y \ge \ell(P)}$.
The Pareto Frontier for $s,t$ is the boundary of $\bigcup_{(s,t)\textrm{-paths } P} R_P$.

\begin{figure}[h]
    \centering
    \includegraphics[scale=0.6]{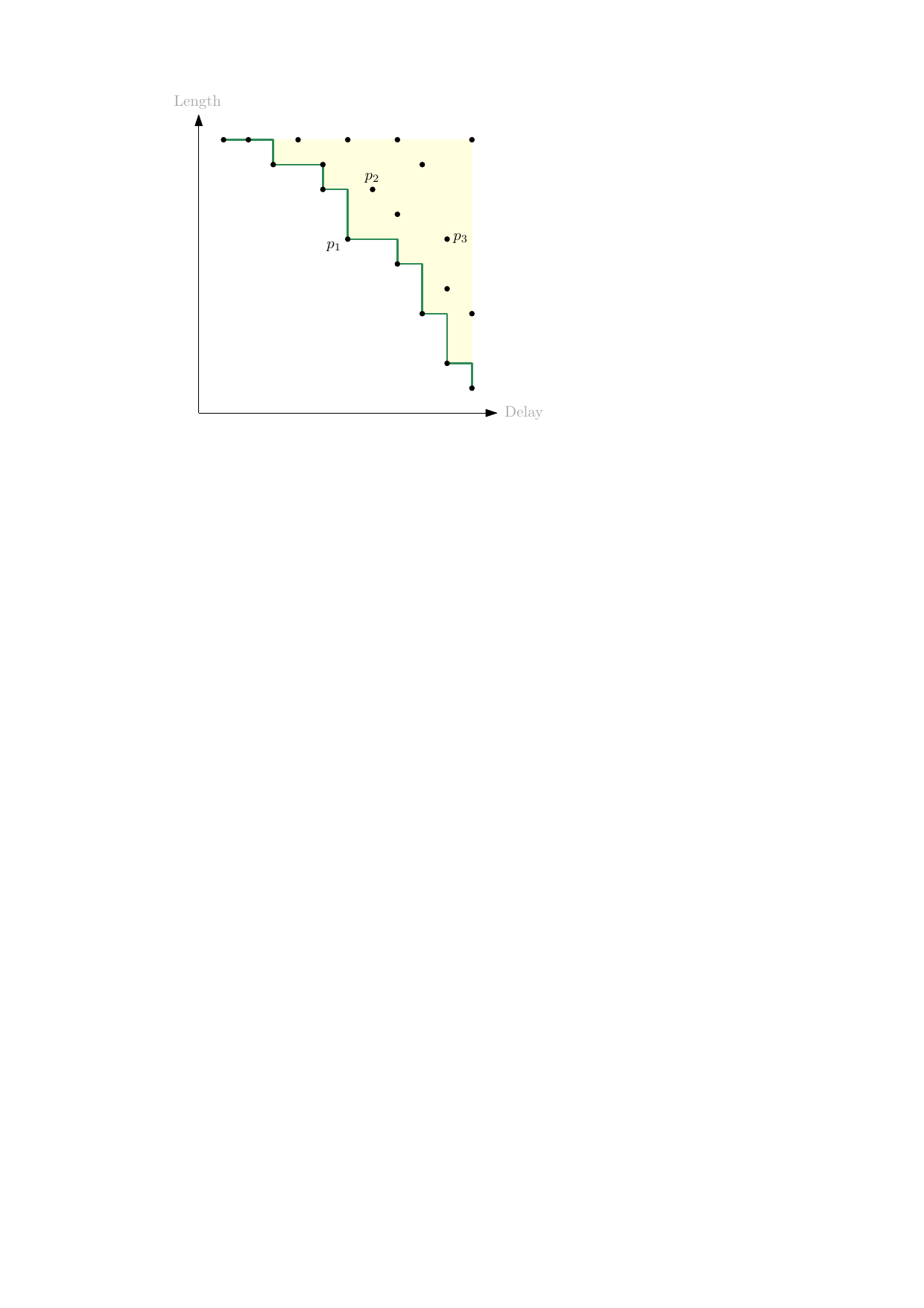}
    \includegraphics[scale=0.6]{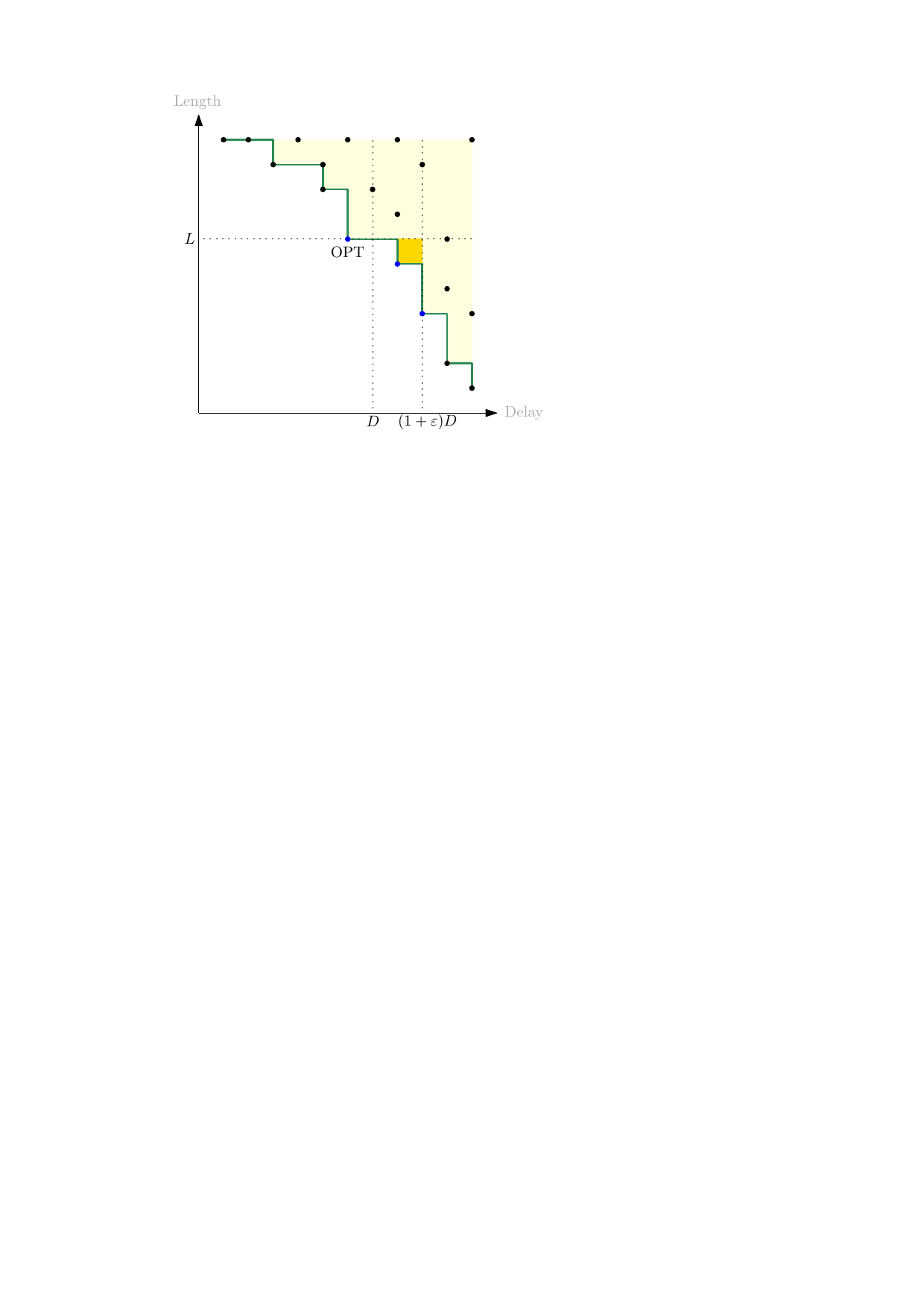}
    \includegraphics[scale=0.6]{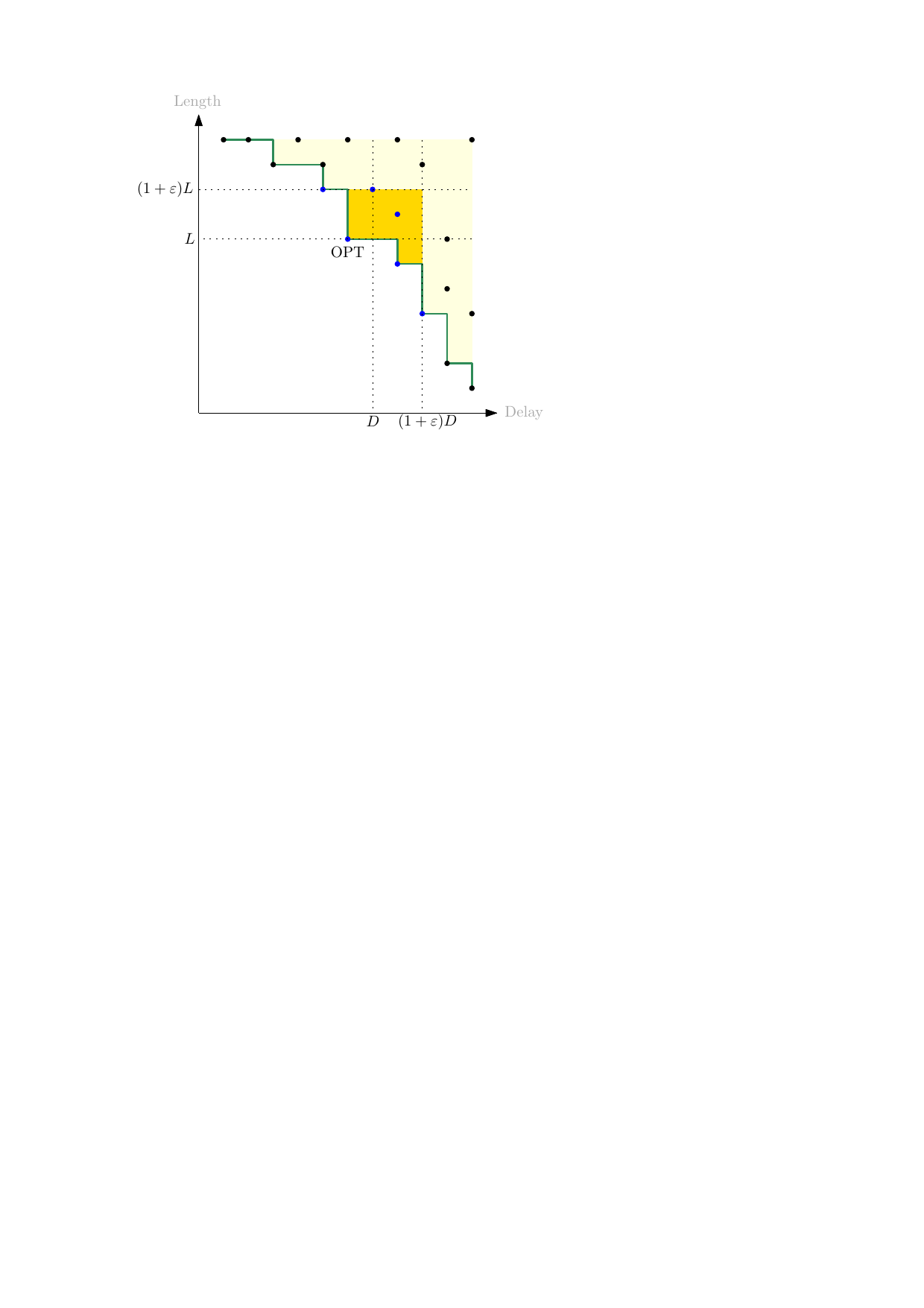}
    \caption{
        Length-delay plots of $(s,t)$-paths for a fixed $s$ and $t$.
        Green line indicates the Pareto Frontier.
        Left:
        $p_1$ Pareto dominates $p_2$ since $p_2$ lies to the northeast of $p_1$, and (weakly) Pareto dominates $p_3$ since $p_3$ lies to the east of $p_1$.
        Neither $p_2$ nor $p_3$ Pareto dominate each other as neither are in the northeast region of the other.
        Center: The gold region and blue points indicate valid answers a $(1, 1+\eps)$-approximation can give.
        Right: The gold region and blue points indicate valid answers a $(1+\eps, 1+\eps)$-approximation can give.
    }
    \label{fig:pareto}
\end{figure}

\paragraph{Directed Low-Diameter Decomposition. }
Our results utilize directed Low-Diameter Decompositions (LDDs), an efficient computation of which has recently been used in \cite{BernsteinNW22} for the Negative-weight Single Source Shortest Paths problem.
An LDD is, at a high level, a set of edges $B$ such that, when removed from $G$, the SCCs of $G$ have bounded diameter.
For clarity of exposition, we use the definition of LDD in \cite{bringmann2023negative} which gives a stronger notion of LDDs compared to what is used in \cite{BernsteinNW22}.
\begin{definition}[(Strong) Directed Low-Diameter Decomposition]
\label{def:ldd}
    Given a directed graph $G$ with non-negative edge weights $w$ and a parameter $D > 0$,
    a Low-Diameter Decomposition with overhead~$\rho$ 
    is a random edge set $B \subseteq E(G)$ with the following properties:
    \begin{itemize}
        \item Sparse Hitting: For any edge $e \in E(G), \prob{e \in B} = O\paren{\frac{w(e)}{D}\rho + \frac{1}{\poly(n)}}$ where the degree of the polynomial is an arbitrarily large constant.
        \item Bounded Diameter: For any strongly connected component $C$ in $G \setminus B$, for any $u, v \in C$, we have  $\dist_{G \setminus B}(u,v) \le D$.
    \end{itemize}
\end{definition}
Sparse Hitting is a desirable property for the following reason: Suppose there is a path $P$ with small weight with respect to the diameter bound.
By linearity of expectation, only a small number of edges in $P$ are expected to be a part of $B$, the LDD output.
Then, as we walk along the projection of $P$ onto $G$ where the SCCs of $G \setminus B$ have been contracted, most of the steps are increasing in the topological order of the SCCs. This imposes an additional structure that can be exploited algorithmically.

We make use of the near-linear time construction of strong LDDs from \cite{bringmann2023negative}.%
\begin{proposition}[Theorem 5 in \cite{bringmann2023negative}]
\label{prop:ldd}
    There is a Low-Diameter Decomposition with overhead $O(\log^3 n)$, computable in time $O\paren{(m + n \log\log n)\log^2 n}$ with high probability (and in expectation).
\end{proposition}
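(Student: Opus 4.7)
The plan is to recall that this is a black-box citation of \cite{bringmann2023negative}, and sketch the recursive ball-growing construction that underlies their proof; essentially the same structural idea goes back to Bernstein--Nanongkai--Williamson \cite{BernsteinNW22}, with the refinement in \cite{bringmann2023negative} being a tighter analysis and a cleaner handling of boundary edges to get the strong LDD guarantee in \Cref{def:ldd}. I would present the construction as a recursive procedure $\textsc{LDD}(G,D)$ returning an edge set $B$, and argue the two required properties (Sparse Hitting and Bounded Diameter) plus the running time bound by induction on $|V(G)|$.

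The core step I would implement is as follows: pick an arbitrary pivot vertex $v$ (or a sampled one), and run a truncated Dijkstra in $G$ and in the reverse graph $G^R$ from $v$ up to weighted distance $D/4$ to obtain the out-ball $B^+_v$ and in-ball $B^-_v$. If either $|B^+_v|$ or $|B^-_v|$ is at most $2n/3$ (a ``small side'' case), then we can peel off that ball by choosing a random radius $R$ drawn from a geometric-like distribution with mean $\Theta(D/\log n)$, adding to $B$ all edges crossing the boundary of the radius-$R$ ball, and recursing separately on the ball and its complement. If both balls are larger than $2n/3$ (a ``large SCC'' case), then $v$ together with the intersection $B^+_v\cap B^-_v$ certifies a large candidate SCC of small diameter, and the construction recurses on the complement after again paying the boundary cut. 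The Sparse Hitting bound $O(w(e)/D\cdot \rho)$ with $\rho=O(\log^3 n)$ then follows by a standard memoryless-style argument: conditioned on the choices made above, the probability that an edge of weight $w(e)$ is cut at any particular level of recursion is $O(w(e)\log n/D)$, and each edge participates in $O(\log^2 n)$ levels (one factor for the depth of the ``small side'' recursion that halves the working set, one factor from the diameter halving across scales, plus the $1/\poly(n)$ slack from truncating the geometric distribution).

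To verify Bounded Diameter, I would observe that after all cuts, every remaining SCC $C$ of $G\setminus B$ is contained in some ball of radius $D/4$ around a pivot in both the forward and reverse directions, so for any $u,v\in C$ the concatenation of $u\!\to\!\text{pivot}$ and $\text{pivot}\!\to\!v$ stays inside $C$ and has length at most $D/2\le D$. For the running time, each recursive call does a truncated Dijkstra in $O((m_C+n_C)\log n)$ time on its subproblem $C$; choosing the pivot cleverly (e.g., by the ``smaller-side'' trick of \cite{BernsteinNW22}) ensures that the total work telescopes to $O((m+n\log\log n)\log^2 n)$ because each vertex participates in $O(\log n)$ levels and each Dijkstra charge is bounded by the size of the smaller side of the split.

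The main obstacle, and the reason this is a nontrivial theorem worth citing, will be the \emph{large-SCC} case: when both $B^+_v$ and $B^-_v$ are large, we cannot cheaply split the graph without cutting edges we cannot afford, and we must argue that this case either resolves itself quickly (because the candidate SCC is genuinely large and we can recurse only on its complement) or that the random radius argument still yields $O(w(e)/D)$ cut probability per edge. Handling this robustly, together with the amortization that delivers the $(m+n\log\log n)\log^2 n$ runtime rather than $mn^{o(1)}$, is exactly where the analysis of \cite{bringmann2023negative} improves over \cite{BernsteinNW22}, and I would simply quote their analysis for this step rather than redo it.
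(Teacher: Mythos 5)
The paper offers no proof of this proposition at all — it is imported verbatim as Theorem 5 of the cited reference — and you correctly treat it the same way, deferring to \cite{bringmann2023negative} for the substantive analysis. Your accompanying sketch of the ball-growing construction is a reasonable gloss in the spirit of that work (though details such as the $2n/3$ pivot balance and the exact accounting of the $\log$ factors do not precisely match the actual light/heavy-vertex algorithm there), but since you explicitly quote the citation for the hard steps, this matches the paper's treatment.
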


\section{A Frequency-Driven Dynamic Program}
\label{sec:dp}
The starting point of our results -- and perhaps the most crucial building block -- is a dynamic program  whose running time is driven by a function $\pi : E \rightarrow \IN$ which defines the frequency with which each edge is inspected.
For simplicity, let us assume that frequencies are at most polynomial in $n$. 
Differently defined $\pi$'s yield algorithms with different guarantees and, as we will see later, the crux of our results boils down to efficiently finding a good~$\pi$.
As a first pass, it is instructive to think of $\pi$ as the function that is constantly $1$.

We will use the dynamic program to obtain $(1,1+\eps)$-approximate solutions to the $\rsp$ problem for arbitrary delays $D$ in an input interval $[D_{\min},D_{\max}]$ such that $D_{\max}/D_{\min}=\poly(n)$.
To state the guarantees that the DP provides formally, we need to introduce the following important notation.
For any path (or more generally, subset of $E$) $P$, let $\pi(P) = \sum_{e \in P} \pi(e)$.
Let $\rdist^{\pi\leq h}(s,t,x)$ denote the smallest length of an $(s,t)$-path~$P$ in $G$ such that $\pi(P)\leq h$ and $d(P)\leq x$.
Moreover, in the special case when $\pi(e)=1$ for all $e\in E$, let us put $\rdist^h(s,t,x):=\rdist^{\pi\leq h}(s,t,x)$.

In~\Cref{sec:dp-proof}, we prove the following.

\begin{theorem}\label{thm:pi-dp}
    Let $\pisum:=\sum_{e\in E}\pi(e)=O(\poly(n))$ and let $\Pi = \sum\limits_{e \in E}\frac{1}{\pi(e)}$.
    Let $h\in [1,\pisum]$ be an integer. Let $D_{\min}\leq D_{\max}$ be such that $D_{\max}/D_{\min}=\poly(n)$. Let $s\in V$ be a source.
    
    After $O\paren{h \cdot \Pi \log(n)/\eps}$
    time preprocessing, for any $t\in V$ and $D\in[D_{\min},D_{\max}]$, in constant time we can compute the value $\ell(P_t)$, where $P_t$ is some $(s,t)$-path in $G$ satisfying:
    \begin{itemize}
        \item $\ell(P_t)\leq \rdist^{\pi\leq h}(s,t,D)$,
        \item $d(P_t) \le (1 + \eps) D$.
    \end{itemize}
\end{theorem}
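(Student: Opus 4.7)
The plan is to adapt the rounding/scaling Bellman--Ford dynamic program of \cite{goel2001efficient,Bernstein12}, replacing the usual hop counter by the weighted quantity $\pi(P)$. The main data structure, at each vertex $v$ and each iteration $i$, is an \emph{approximate Pareto frontier} $F_i[v]$ of (length, delay) pairs realized by $(s,v)$-paths with $\pi(P)\leq i$. I would keep $|F_i[v]|=O(\log(n)/\eps)$ by bucketing delays into $O(\log(D_{\max}/D_{\min})/\eps)=O(\log(n)/\eps)$ geometric bins of ratio $(1+\eps)$ over $[D_{\min},D_{\max}]$ (using $D_{\max}/D_{\min}=\poly(n)$), storing only the minimum length encountered in each bin. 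The geometric $(1+\eps)$ bucket ratio translates directly into the $(1+\eps)D$ delay slack in the theorem.

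Next, the DP iterates $i=1,\ldots,2h$ (the doubling is discussed below): at iteration $i$, for each edge $e=uv$ with $\pi(e)=p\le h$ such that $p\mid i$, I would propagate $F_{i-p}[u]$ to $F_i[v]$ by shifting every $(L,d)\in F_{i-p}[u]$ to $(L+\ell(e),d+d(e))$ and rebucketing/merging into $F_i[v]$; each such shift-merge runs in $O(\log(n)/\eps)$ time on the bucketed frontier. The key efficiency gain is that each edge $e$ is inspected only $\lfloor 2h/\pi(e)\rfloor$ times, summing across $E$ to $\sum_e 2h/\pi(e)=O(h\Pi)$ edge relaxations, and hence $O(h\Pi\log(n)/\eps)$ total work. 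A query for $(t,D)$ is then answered in $O(1)$ by a precomputed lookup in the bucket of $F_{2h}[t]$ that contains $D$, with the returned length being $\ell(P_t)$ for the concrete path $P_t$ realizing that bucket's minimum.

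The length guarantee $\ell(P_t)\leq \rdist^{\pi\leq h}(s,t,D)$ follows because the DP never inflates lengths: every entry of $F_i[v]$ comes from an actual $(s,v)$-path and is non-decreasing as we prune to one representative per bucket. The delay guarantee follows from the $(1+\eps)$ bucketing. The most delicate piece --- and the part I would prove with care --- is correctness of the ``$e$ relaxed only when $\pi(e)\mid i$'' schedule. For an arbitrary path $P^*=e_1\ldots e_k$ with $\pi(P^*)\leq h$, the cumulative prefix sums $\pi(e_1),\pi(e_1)+\pi(e_2),\ldots$ need not be multiples of any single $\pi(e_j)$, so a naive reading of the schedule would miss $P^*$. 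I would resolve this by viewing the DP as shortest paths in a layered DAG on $V\times\{0,\ldots,2h\}$ with arcs $(u,kp)\to(v,(k+1)p)$ for each edge $e=uv$ with $\pi(e)=p$: threading $P^*$ through this DAG rounds the current layered level up to the next multiple of $\pi(e_j)$ at each step, overshooting by at most $\pi(e_j)-1$. The total overshoot along $P^*$ is thus at most $\pi(P^*)\le h$, so a level budget of $2h$ suffices to realize $P^*$; this is what justifies iterating up to $2h$ and only costs a constant factor in the final bound. The remaining delay-slack composition across rebucketings and the constant-time query construction reduce to standard scaling arguments.
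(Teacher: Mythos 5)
Your scheduling idea is essentially the paper's: relax an edge $e$ only at indices divisible by $\pi(e)$, and bound the resulting ``round-up'' overshoot along a path $P^*$ by $O(\pi(P^*)) \le O(h)$, so a budget of $2h$ suffices. That part of your argument is sound. The genuine gap is in the interaction between the delay bucketing and the running time. With bucket ratio $(1+\eps)$, keeping only the minimum-length representative per bucket does \emph{not} give delay slack $(1+\eps)D$: at each of the up to $2h$ relaxations along a path, the surviving representative may have delay a full factor $(1+\eps)$ larger than the discarded path the optimal solution actually extends, and these losses compound multiplicatively. So your invariant only yields $d(P_t)\le (1+\eps)^{\Theta(h)}D$. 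The ``standard scaling argument'' you defer to is to replace $\eps$ by $\eps'=\Theta(\eps/h)$ --- but then your frontier has $O(\log(D_{\max}/D_{\min})/\eps') = O(h\log(n)/\eps)$ buckets, each of your $O(h\Pi)$ shift-merge operations costs $O(h\log(n)/\eps)$ rather than $O(\log(n)/\eps)$, and the total becomes $O(h^2\Pi\log(n)/\eps)$, a factor $h$ worse than claimed. You cannot have both the $(1+\eps)$ bucket ratio (needed for your runtime) and the $(1+\eps)D$ delay guarantee (which forces ratio $1+\Theta(\eps/h)$).

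The paper avoids this because its table has only one budget axis: the exponent $k$ of the $\eps'$-rounded delay, with $\eps'=\eps/(2h+4)$, ranging over $O(h\log(n)/\eps)$ values. The divisibility schedule is imposed on \emph{that} axis (an edge $e$ is rounded to powers of $(1+\eps')^{\pi(e)}$, so it need only be examined when $\pi(e)\mid k$), and each examination reads and writes a single cell in $O(1)$ time. Hence the $h\log(n)/\eps$-sized axis multiplies $\Pi$ exactly once, giving $O(h\Pi\log(n)/\eps)$, whereas in your two-axis layout (layer $i\in[2h]$ times a delay frontier) the rescaled frontier size multiplies the $O(h\Pi)$ relaxations a second time. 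To repair your proof you would need to merge the two axes as the paper does, or otherwise arrange that each relaxation touches $O(1)$ frontier cells rather than all of them.
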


In the following, for a frequency function $\pi: E \rightarrow \IN$, we denote the algorithm of \Cref{thm:pi-dp} with $\pidp{\pi}$. We also sometimes call the $h$ parameter in~\Cref{thm:pi-dp} the \emph{depth} of $\pidp{\pi}$.

\subsection{Examples of Useful Frequencies}
\label{subsec:examples}
Before we prove~\Cref{thm:pi-dp}, we state two easy but important frequency functions; the first will in fact be used as a subroutine in one of our main results (see \Cref{thm:sparse-simple}). The second provides motivation to our main results (see \Cref{thm:dense-simple,thm:sparse-simple}) and, at a more basic level, gives a quick demonstration of why $\pidp{\pi}$ is a promising idea.

\paragraph{Example 1: Number of Hops. }
Let $\one: E \rightarrow \set{1}$ be the constantly $1$ function.
Then the $\pidp{\one}$ can be used to find, in $\Ot(mh/\eps)$ time, for all $t \in V$, the length of a path $P_t$ satisfying $\ell(P_t)\leq\rdist^h(s,t,D)$ and $d(P_t) \le (1 + \eps) D$.
In particular, for $h=n$, the $\pidp{\one}$ solves the \linebreak $(1,1+\eps)$-approximate
$\rsp$ in $\Ot(mn/\eps)$ time.

The $\pidp{\one}$ subroutine is a key ingredient to implementing an $\Ot(mn/\eps + n^2/\eps^2)$ time
\linebreak $(1, 1 + \eps)$-approximate algorithm for the All-Pairs $\rsp$ (see \Cref{thm:all-pairs-simple} and \Cref{sec:all-pairs}), which in turn is used as a subroutine for \Cref{thm:sparse-simple}.

\paragraph{Example 2: Topological Order Difference. }
Suppose $G$ is a DAG, and let $\tau: V \rightarrow [n]$ define a topological order on its vertices.
We extend $\tau$ to edges as follows: $\tau(uv) = \tau(v) - \tau(u)$.
Then $\pidp{\tau}$ run with $h=n$ is a $(1,1 + \eps)$-approximate algorithm for $\rsp$ that runs in $\Ot(n^2/\eps)$ time.
To see that the running time is $\Ot(n^2/\eps)$, observe that $\tau(P) \le n$ for all paths $P$ since $G$ is a DAG and $\tau$ is a topological order.
Next, observe that
\begin{align*}
    \Pi
    = \sum_{e \in E} \frac{1}{\tau(e)} 
    = \sum_{u \in V} \sum_{\substack{v \in V:  uv \in E}} \frac{1}{\tau(v) - \tau(u)}
    \le \sum_{u \in V} \sum_{i \in [n]} \frac{1}{i}
    = O(n \log n).
\end{align*}
Putting the two observations together shows that $\pidp{\tau}$ runs in $\Ot(n^2/\eps)$ time on DAGs.
Our main results extend this idea to directed graphs that are not necessarily acyclic.

\subsection{Details of the Dynamic Program}
\label{sec:dp-proof}
The reader who is willing to take the statement of \Cref{thm:pi-dp} as is, without any further justification, may skip this subsection and go directly to \Cref{sec:overview} which gives an overview of our main techniques.

We obtain~\Cref{thm:pi-dp} via dynamic programming.
Roughly speaking, the dynamic program will compute, for a fixed source $s$ and all vertices $t \in V$, the length of the shortest $(s,t)$-path $P$ whose \emph{$\eps$-rounded delay} (which is a good proxy for the delay) is bounded above by a threshold.
First, let us say more about what we mean by $\eps$-rounded delay.

\paragraph{Edge Delay Rounding. }
To avoid dependence on the aspect ratio of delays\footnote{Though a dependence on the aspect ratio of lengths will sadly creep into our main results later.}, the rounded delay bumps small delays up to
\begin{align*}
    \delta \coloneqq \frac{\eps}{\pisum}D_{\min}.
\end{align*}
That is, the rounded delay of an edge $e$ with delay $d(e)$ is
\begin{align*}
    \bump(e) = \max(d(e), \delta).
\end{align*}
Observe that the ratio between $D\in [D_{\min},D_{\max}]$ and $\min_e(\bump(e))$ is then at most $(D_{\max}/D_{\min})\eps/\pisum$, which for our purposes is small.

\paragraph{Path Delay Rounding. }
For any real $\gamma > 1$ and $x > 0$, we define $\expround{\gamma}{x}$ to be the smallest integer power of $\gamma$ that is at least as large as $x$, and $\expround{\gamma}{0} = -\infty$.
Then, the $\eps$-rounded delay of a path $P$ is defined inductively as so:
\begin{align*}
    d'(P) = 
        \begin{cases}
            0 & \textrm{if } P = \emptyset \\
            \expround{(1 + \eps)^{\pi(e)}}{d'(P') + \bump(e)} & \textrm{if } P = P' \cdot e
        \end{cases}
\end{align*}
For reasons that will become clear very shortly, take note that if $d'(P' \cdot e) = (1 + \eps)^{i}$, then $i$ is an integer multiple of $\pi(e)$.

\paragraph{The Dynamic Program. }
Let $\pi(P) = \sum_{e \in P} \pi(e)$ for any path $P$.
For all $t \in V$ and ${k \in \IZ \cup \set{-\infty}}$, define
\begin{align*}
    \dptable{t}{k} \coloneqq & \textrm{ the length of the shortest } (s,t)\textrm{-path } P \\& \textrm{ such that } d'(P) \le (1 + \eps)^k,
\end{align*}
with boundary values $\dptable{s}{k} = 0$ for all $k$ (i.e. empty paths are paths) and $\dptable{t}{i} = \infty$ for all $t \neq s$ and $(1 + \eps)^i < \delta$ (since $d'(P) \ge \delta$ for non-empty $P$).

The non-boundary values of $\dptable{\cdot}{\cdot}$ can be computed by the following recurrence:

\begin{tbox}
    \algname{Recurrence for $\dptable{\cdot}{\cdot}$}
    \[\dptable{t}{k} = \min\paren{
        \dptable{t}{k-1},
        \min_{\substack{ut \in E :\\ \pi(ut)|k}}\paren{
            \dptable
            {u}
            {\floor{\log_{(1 + \eps)}\paren{(1 + \eps)^k - \bump(ut)}}}
            + \ell(ut)
        }
    }\]
\end{tbox}

\begin{proof}[Proof of Recurrence]
    We give a proof by induction on the number of edges in a path.
    \begin{itemize}
        \item Base Case:
            If $P = st = e$ is the shortest path such that $d'(P) \le (1 + \eps)^k$, then $d'(P) = \expround{(1 + \eps)^{\pi(e)}}{\bump(e)} = (1 + \eps)^i$ where $i \le k$ is an integer multiple of $\pi(e)$.
            The length $\ell(P)$ will thus be stored in $\dptable{t}{i}$, which is computed by accessing a boundary value, and this will be propagated to $\dptable{t}{k}$ by the first argument of the outer minimum function.
        \item Induction Step:
            If $P = P' \cdot ut$ for a non-empty $P'$ is the shortest path such that $d'(P) \le (1 + \eps)^k$, then $d'(P) = \expround{(1 + \eps)^{\pi(ut)}}{d'(P') + \bump(ut)} = (1 + \eps)^i$ where $i \le k$ is an integer multiple of $\pi(e)$.
            We have \[d'(P') \le (1 + \eps)^\floor{\log_{(1 + \eps)}\paren{(1 + \eps)^i - \bump(ut)}}\] and hence, by the induction hypothesis, $\ell(P')$ will be stored in $\dptable{u}{\floor{\log_{(1 + \eps)}\paren{(1 + \eps)^i - \bump(ut)}}}$.
            The length $\ell(P)$ will thus be stored in $\dptable{t}{i}$, and this will be propagated to $\dptable{t}{k}$ by the first argument of the outer minimum function.
    \end{itemize}
\end{proof}

\paragraph{Finding Lengths of Shortest Paths With Bounded Delay. }
To conclude this subsection, we show how to efficiently compute and then transfer the values of $\dptable{\cdot}{\cdot}$, which are based on $\eps$-rounded delays $d'$, to lengths of shortest paths whose true delays $d$ are bounded by $(1 + \eps) D$.

First, let us see why $\eps$-rounded delays are a good proxy for delays.
\begin{observation}
\label{obs:rounded-delay-is-proxy}
    For any $(s,t)$-path $P\subseteq G$, the $\eps$-rounded delay satisfies the following:
    \begin{align*}
        d'(P) \le (1 + \eps)^{\pi(P)}\paren{d(P) + \pi(P) \cdot \delta}
    \end{align*}
\end{observation}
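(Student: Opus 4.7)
} The plan is to prove the inequality by induction on the number of edges in $P$. The base case $P=\emptyset$ is immediate since both sides evaluate to $0$ (using $\pi(\emptyset)=0$ and $d'(\emptyset)=0$).

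For the inductive step, write $P = P' \cdot e$ and let $k = \pi(e)$. The key fact I would invoke is that rounding up to the next integer power of a base $\gamma \ge 1$ inflates the argument by at most a factor of $\gamma$: for any $x>0$, $\expround{\gamma}{x} \le \gamma \cdot x$. Applying this with $\gamma = (1+\eps)^{k}$ and $x = d'(P') + \bump(e) > 0$ gives
\[
d'(P) \;\le\; (1+\eps)^{k}\bigl(d'(P') + \bump(e)\bigr).
\]
Substituting the inductive hypothesis $d'(P') \le (1+\eps)^{\pi(P')}\bigl(d(P') + \pi(P')\cdot\delta\bigr)$ and bounding $\bump(e) \le (1+\eps)^{\pi(P')}\bump(e)$ (since $\pi(P')\ge 0$) yields
\[
d'(P) \;\le\; (1+\eps)^{\pi(P')+k}\bigl(d(P') + \pi(P')\cdot\delta + \bump(e)\bigr).
\]
Then the trivial inequality $\bump(e) = \max(d(e),\delta) \le d(e) + \delta$ and the fact that $\pi(e) \ge 1$ (so $\pi(P')+1 \le \pi(P')+k = \pi(P)$) combine to give $d(P') + \pi(P')\cdot\delta + \bump(e) \le d(P) + \pi(P)\cdot\delta$, which closes the induction.

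There is really no obstacle here, but the one subtle point worth flagging explicitly is the guarantee $\expround{\gamma}{x}\le \gamma x$ for $x>0$, which one might be tempted to read off carelessly from the definition. It follows because the unique integer exponent $j$ with $\gamma^{j-1} < x \le \gamma^{j}$ satisfies $\gamma^{j} \le \gamma\cdot x$. Since in the inductive step $\bump(e) \ge \delta > 0$ (assuming $D_{\min}>0$), the argument of $\expround{\cdot}{\cdot}$ is strictly positive and the edge case $\expround{\gamma}{0}=-\infty$ never arises.
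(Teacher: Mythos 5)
Your proof is correct and follows essentially the same route as the paper's: induction on the number of edges, using $\expround{\gamma}{x}\le\gamma x$, $\bump(e)\le d(e)+\delta$, and $\pi(e)\ge 1$; the only (immaterial) difference is that you discharge the rounding bound at the start of the inductive step, whereas the paper carries the $\expround{\cdot}{\cdot}$ operator through the chain of inequalities via monotonicity and converts it at the end. Your explicit justification of $\expround{\gamma}{x}\le\gamma x$ and of why the $x=0$ edge case never arises is a nice touch that the paper leaves implicit.
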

\begin{proof}
    The proof is by induction on the number of edges in $P$.
    The base case is immediate.
    Let $P = P' \cdot e$.
    The induction step is completed by
    \begin{align*}
        d'(P' \cdot e)
        & =
        \expround{(1 + \eps)^{\pi(e)}}{d'(P') + \bump(e)}
        \\
        & \le
        \expround{(1 + \eps)^{\pi(e)}}{(1 + \eps)^{\pi(P')}\paren{d(P') + \pi(P') \cdot \delta} + \bump(e)}
        \tag{Induction Hypothesis}
        \\
        & \le
        \expround{(1 + \eps)^{\pi(e)}}{(1 + \eps)^{\pi(P')}\paren{d(P') + \pi(P') \cdot \delta} + d(e) + \delta}
        \tag{$\bump(e) \le d(e) + \delta$}
        \\
        & \le
        \expround{(1 + \eps)^{\pi(e)}}{(1 + \eps)^{\pi(P')}\paren{d(P) + \pi(P) \cdot \delta}}
        \tag{$1 \le \pi(e)$ and $1 \le (1 + \eps)^{\pi(P')}$}
        \\
        & \le
        (1 + \eps)^{\pi(P)} \paren{d(P) + \pi(P) \cdot \delta}.
    \end{align*}
\end{proof}

We can now use \Cref{obs:rounded-delay-is-proxy} to make a connection between the $\eps$-rounded delays and delays in $\dptable{\cdot}{\cdot}$, which will then be useful for extracting the output for a $(1,1+\eps)$-approximate solution to $\rsp$.

\begin{lemma}
\label{lem:delay-in-dp-table}
    Let $P$ be a simple $(s,t)$-path in $G$. If $D\in [D_{\min},D_{\max}]$ and $d(P) \le D$, then the following~holds:
    \begin{align*}
        \dptable{t}{\ceil{\log_{1 + \eps}\paren{(1 + \eps)D}} + \pi(P)} \le \ell(P).
    \end{align*}
\end{lemma}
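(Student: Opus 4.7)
The plan is to directly apply \Cref{obs:rounded-delay-is-proxy} to $P$ and then verify that the resulting upper bound on $d'(P)$ lies below $(1+\eps)^k$ for $k=\ceil{\log_{1+\eps}((1+\eps)D)}+\pi(P)$; once that is established, the statement follows immediately from the definition of $\dptable{t}{k}$ as the length of the shortest $(s,t)$-path whose $\eps$-rounded delay is at most $(1+\eps)^k$.

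First, I would invoke \Cref{obs:rounded-delay-is-proxy} to obtain $d'(P) \le (1+\eps)^{\pi(P)}(d(P)+\pi(P)\cdot\delta)$. Next, I need to control the additive slack $\pi(P)\cdot\delta$. Since $P$ is simple, each edge appears at most once on $P$, so $\pi(P) = \sum_{e\in P}\pi(e) \le \sum_{e\in E}\pi(e) = \pisum$. Combined with the definition $\delta = (\eps/\pisum)D_{\min}$ and the assumption $D \ge D_{\min}$, this gives $\pi(P)\cdot\delta \le \eps D_{\min} \le \eps D$. Substituting into the bound from \Cref{obs:rounded-delay-is-proxy} and using the hypothesis $d(P)\le D$ yields
\[
d'(P) \le (1+\eps)^{\pi(P)}\bigl(D + \eps D\bigr) = (1+\eps)^{\pi(P)+1}\,D.
\]

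Finally, by definition of the ceiling, $(1+\eps)^{\ceil{\log_{1+\eps}((1+\eps)D)}} \ge (1+\eps)D$, and so
\[
(1+\eps)^{\ceil{\log_{1+\eps}((1+\eps)D)} + \pi(P)} \;\ge\; (1+\eps)^{\pi(P)+1}\,D \;\ge\; d'(P).
\]
Hence $P$ is a candidate $(s,t)$-path considered in the definition of $\dptable{t}{k}$, which gives $\dptable{t}{k}\le\ell(P)$, as required.

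I do not expect any genuine obstacle here: the lemma is essentially bookkeeping that reconciles the rounded-delay proxy used by the DP with the true delay bound $D$, and every ingredient (the proxy inequality, the choice of $\delta$, the simplicity of $P$, and the definition of $\dptable{\cdot}{\cdot}$) is already in place. The only mildly delicate point is ensuring that the dependence on $D$ (rather than $D_{\min}$) in the slack term does not break the bound, which is handled by the single multiplicative $(1+\eps)$ factor absorbed into the ceiling exponent.
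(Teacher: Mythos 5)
Your proof is correct and follows essentially the same route as the paper's: apply \Cref{obs:rounded-delay-is-proxy}, bound the slack term via $\pi(P)\le\pisum$ (the paper assumes this directly; you justify it from simplicity of $P$), use $\delta=(\eps/\pisum)D_{\min}$ and $D\ge D_{\min}$ to absorb the slack into a single $(1+\eps)$ factor, and conclude from the definition of $\dptable{\cdot}{\cdot}$. No issues.
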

\begin{proof}
    Let $P$ be a path such that $\pi(P) \le \pisum$.
    Then the rounded delay of $P$ is bounded by:
    \begin{align*}
        d'(P)
        & \le
        (1 + \eps)^{\pi(P)}\paren{d(P) + \pi(P) \cdot \delta}
        \tag{\Cref{obs:rounded-delay-is-proxy}}
        \\
        & \le
        (1 + \eps)^{\pi(P)}\paren{D + \pisum \cdot \delta}
        \tag{$\pi(P) \le \pisum$ and $d(P) \le D$}
        \\
        & =
        (1 + \eps)^{\pi(P)}\paren{D + \eps D_{\min}}
        \tag{$\delta = \frac{\eps}{\pisum}D_{\min}$}
        \\
        & =
        (1 + \eps)^{\pi(P)}\paren{D + \eps D}
        \tag{$D\geq D_{\min}$}
        \\
        & \le
        (1 + \eps)^{\ceil{\log_{1 + \eps}\paren{(1 + \eps)D}} + \pi(P)}.
    \end{align*}
    By definition, $\dptable{t}{\ceil{\log_{1 + \eps}\paren{(1 + \eps)D}} + \pi(P)}$ is the length of the shortest path with $\eps$-rounded delay bounded above by $(1 + \eps)^{\ceil{\log_{1 + \eps}\paren{(1 + \eps)D}} + \pi(P)}$ for which, by the above inequality, $\ell(P)$ is a candidate.
\end{proof}

We are now ready to prove~\Cref{thm:pi-dp}.

\begin{proof}[Proof of~\Cref{thm:pi-dp}]
    Let $\eps' = \eps / (2h+4)$.
    We compute the non-boundary values  $\dptable{t}{k}$ wrt. error parameter $\eps'$ (instead of just $\eps$, which also means redefined $\delta:=(\eps'/\pisum)D_{\min}$) for all $t \in V$, where \[k \in \bracket{\floor{\log_{1 + \eps'} \delta}, \ceil{\log_{1 + \eps'}\paren{(1 + \eps')D_{\max}}} + h}.\]
    (Values $\dptable{t}{k}$ for $k$ below this range are boundary values.)
    To produce the desired value for a query $(t,D)$, we simply return 
    $\dptable{t}{\ceil{\log_{1 + \eps'}\paren{(1 + \eps')D}} + h}$.
    
    \underline{Running Time:}
    First, observe that there are
    \[ O(\log((1+\eps)D_{\max}/\delta)/\eps'+h)=O(\log((D_{\max}/D_{\min})\pisum/\eps')/\eps'+h)=O\paren{h \cdot \log(n)/\eps} \]
    values that $k$ takes in the above-stated range.
    An edge $e$ is inspected for an $O(1 / \pi(e))$ fraction of the values $k$ can take, settling the running time by summing over all edges.

    \underline{Correctness:}
    For an arbitrary $t$, let $P_t^*$ be any $(s,t)$-path satisfying $\pi(P_t^*) \le h$ and $d(P_t^*)\leq D$.
    By~\Cref{lem:delay-in-dp-table},
    \[ \dptable{t}{\ceil{\log_{1 + \eps'}\paren{(1 + \eps')D}} + h} \le
    \dptable{t}{\ceil{\log_{1 + \eps'}\paren{(1 + \eps')D}} + \pi(P_t^*)}\ \leq
    \ell(P_t^*).\]
    Let $P_t$ be an $(s,t)$-path such that $\ell(P_t)=\dptable{t}{\ceil{\log_{1 + \eps'}\paren{(1 + \eps')D}} + h}$. By the arbitrary choice of $P_t^*$ (wrt. length), we have established $\ell(P_t)\leq \rdist^{\pi\leq h}(s,t,D)$.
    Note that the delay of $P_t$ is:
    \begin{align*}
        d(P_t)
        & \le
        d'(P_t)
        \\
        & \le
        (1 + \eps')^{\ceil{\log_{1 + \eps'}\paren{(1 + \eps')D}} + h}
        \tag{By the definition of $\dptable{\cdot}{\cdot}$}
        \\
        & \le
        (1 + \eps')^h (1 + \eps')^2 D
        \\
        & =
        \paren{1 + \frac{\eps}{2(h+2)}}^{h + 2} D
        \\
        & \le 
        e^{\eps/2} D
        \\
        & \le
        (1 + \eps) D.
    \end{align*}
\end{proof}

\section{High-Level Overview}
\label{sec:overview}

In this section, we provide a sketch of some of the key ideas behind our results.
The exposition here is quite brief and informal and is only meant to broadly highlight the aforementioned ideas and provide some initial intuition.

\paragraph{Simplifying Assumptions.} 
For simplicity, we focus on a single source $s$ and single target $t$ and assume $\eps$ is a small fixed constant.
We also focus on a decision version of the problem where we are given a length $L$ and a delay $D$ and, loosely speaking, the goal is to determine if there exists an $(s,t)$-path of length $\lesssim L$ and delay $\lesssim D$ (we will elaborate on this shortly).
We assume for this overview that $L = D = n$; in \Cref{sec:framework}, we will show that this assumption can be achieved by rescaling lengths and delays.
We call this toy version of the problem Simplified $\rsp$.

\paragraph{Computing a Frequency Function.}
\Cref{thm:pi-dp} suggests that the basic dynamic programming algorithm can be made efficient if we first find a suitable frequency function $\pi$.
In particular, say that we are able to define $\pi$ with the following guarantee: if there exists an $(s,t)$-path $P$ with $\ell(P) \leq n$ and $d(P) \leq n$, then there exists an $(s,t)$-path $P'$ such that $\ell(P') \leq (1+\eps)n$ and $d(P') \leq (1+\eps)n$ and, moreover, $\pi(P') = \Ot(n)$.
If we have such a $\pi$ then, intuitively, we can solve Simplified $\rsp$ by running $\pidp{\pi}$ (\Cref{thm:pi-dp}) with depth $h = \Ot(n)$ and delay threshold $(1 + \eps)D$ (with some rescaling of $\eps$), and then finally checking if the output length is also no more than $(1 + \eps)n$.
By \Cref{thm:pi-dp}, the runtime is $\Ot(n \sum_e 1/\pi(e))$.
Our goal in finding a good function $\pi$ is thus to minimize $\sum_e 1/\pi(e)$ while ensuring that the above guarantee on $\pi$ still holds.

\paragraph{Adding Shortcut Edges.}
In fact we do not know how to find a good function $\pi$ directly for the graph $G$.
We will add some new edges $H$ to the graph $G$.
We construct $H$ so as to always satisfy: for any edge $uv \in H$, there exists a $(u,v)$-path $P_{uv}$ in $G$ with $\ell(P_{uv}) \leq \ell(uv)$ and $d(P_{uv}) \leq d(uv)$; this property ensures that for any path in $G \cup H$ there is a corresponding path in $G$ that is at least as good; we call such a $H$ \emph{Pareto Frontier Preserving}.

By allowing paths $P'$ in $G \cup H$, we will be able to control $\pi(P')$, which makes it easier to find a good function $\pi$.
Combining everything we have so far, we can show that Simplified $\rsp$ can be reduced to the problem of computing a good pair $\pi, H$.

\begin{proposition}[Simplified version of \Cref{lem:gap-solver} chained into \Cref{lem:reduction-to-gap}]
\label{prop:simplified-reduction}
    To solve Simplified $\rsp$, it is sufficient to compute a Pareto Frontier Preserving edge set $H$ and a frequency function $\pi$ on $G \cup H$ that together satisfy the \emph{Path-Sum Constraint}, which is the following property:
    \begin{quote}If there exists an $(s,t)$-path $P$ in $G$ with $\ell(P) \leq n$ and $d(P) \leq n$, then there exists an $(s,t)$-path $P'$ in $G \cup H$ such that $\ell(P') \leq (1+\eps)n$ and $d(P') \leq (1+\eps)n$ and $\pi(P') = \Ot(n)$.
    \end{quote}
    The running time of Simplified $\rsp$ is then $\Ot(n\cdot \Pi)$ plus the time taken to compute $\pi,H$, where $\Pi = \sum_{e \in G \cup H} 1/\pi(e)$.
\end{proposition}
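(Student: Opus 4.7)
The plan is to invoke $\pidp{\pi}$ on the augmented graph $G \cup H$, with depth $h$ chosen to match the $\Ot(n)$ bound granted by the \PSC and with the internal accuracy parameter of \Cref{thm:pi-dp} set to a suitably scaled $\eps' = \Theta(\eps)$. I would then query this data structure at the pair $(t, (1+\eps)n)$ and, for the decision flavor of Simplified $\rsp$, answer ``yes'' exactly when the returned value $\ell(P_t)$ is at most $(1+\eps)n$; otherwise, by contrapositive of the \PSC, no $(s,t)$-path in $G$ of length and delay both $\leq n$ can exist.

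For correctness, suppose the input admits an $(s,t)$-path $P$ in $G$ with $\ell(P) \leq n$ and $d(P) \leq n$. The \PSC supplies an $(s,t)$-path $P'$ in $G \cup H$ with $\ell(P') \leq (1+\eps)n$, $d(P') \leq (1+\eps)n$, and $\pi(P') \leq h$. This $P'$ is a feasible candidate for $\rdist^{\pi \leq h}(s,t,(1+\eps)n)$ in $G \cup H$, so \Cref{thm:pi-dp} returns a path $P_t \subseteq G \cup H$ with $\ell(P_t) \leq (1+\eps)n$ and $d(P_t) \leq (1+\eps')(1+\eps)n$. Since $H$ is \PFP, I can then replace each $H$-edge of $P_t$ by its certifying $G$-subpath to obtain a genuine $(s,t)$-path in $G$ whose length and delay are no larger in either coordinate; after a single global rescaling $\eps \leftarrow c\eps$ for a small constant $c$, both coordinates fall within $(1+\eps)n$, yielding the claimed $(1+\eps, 1+\eps)$ bicriteria guarantee.

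For the running time, plugging $h = \Ot(n)$ into \Cref{thm:pi-dp} gives preprocessing cost $O(h \cdot \Pi \log n / \eps') = \Ot(n \cdot \Pi)$, with $O(1)$ per query, on top of whatever time is required to produce $\pi$ and $H$. The main obstacle I anticipate is purely one of bookkeeping: three different sources each contribute a multiplicative $(1+\eps)$ slack — the \PSC itself, the $\eps'$-rounding inside $\pidp{\pi}$, and the inflation of the query delay threshold from $n$ to $(1+\eps)n$ — and I must choose $\eps'$ and rescale $\eps$ so that these compose into a single $(1+\eps)$ factor in each coordinate without blowing up the asymptotic runtime. Beyond this accounting, the proposition is a clean composition of the interfaces provided by \Cref{thm:pi-dp} and the two structural guarantees on the pair $(\pi, H)$.
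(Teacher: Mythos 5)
Your proposal is correct and follows essentially the same route as the paper's own argument (the \textsc{Gap $\rsp$ Solver} in the proof of \Cref{lem:gap-solver}): run $\pidp{\pi}$ on $G \cup H$ at depth $h = \Ot(n)$, use the \PSC to certify completeness (the path $P'$ is a feasible candidate, so the DP's answer is small in the \texttt{YES} case) and the \PFP property to certify soundness (any path the DP finds in $G \cup H$ is dominated by a genuine path in $G$), then absorb the stacked $(1+\Theta(\eps))$ factors by rescaling $\eps$. Your accounting of the running time via \Cref{thm:pi-dp} also matches the paper's.
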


\paragraph{Roadmap.}
The main technical contribution of the paper is showing how to efficiently compute $\pi, H$ that we can plug into \Cref{prop:simplified-reduction} (or its more technical analog in the later sections).
We use two different constructions: one for dense graphs and one for sparse graphs.
In both cases, we start by showing a relatively simple construction of $\pi, H$ for a DAG, and then show how to generalize this construction to arbitrary directed graphs using, among other tools, a directed Low-Diameter Decomposition.

\subsection{Computing $\pi,H$ in Dense Graphs}
In this subsection, we outline how to compute $\pi, H$ that satisfy the Path-Sum Constraint, and for which $\Pi = \sum_{e \in G \cup H} 1/\pi(e) = \Ot(n)$; plugging this into \Cref{prop:simplified-reduction} yields the $\Ot(n^2)$-time algorithm of \Cref{thm:dense-simple}.

\paragraph{Simple Case: Dense DAGs.}
Example 2 in \Cref{subsec:examples} already contains the desired $\pi$ when $G$ is a DAG: compute a topological order $\tau$ and set $\pi(uv) = \tau(v) - \tau(u)$ (in this case $H = \emptyset$).
It is easy to check that $\pi(P) \leq n$ for \emph{any} path $P$, so the Path-Sum Constraint is trivially satisfied.
As discussed in \Cref{subsec:examples}, we have $\Pi = \Ot(n)$, as desired.

\subsubsection{Dense Directed Graphs: First Attempt (One LDD)}
\label{sub:overview-dense-slow}
To compute $\pi, H$ for general graphs, we will again compute an ordering of the vertices $\tau: V \rightarrow [n]$ and set $\pi(uv) = \card{\tau(v) - \tau(v)}$.
But in general graphs, it is much less clear what the ordering should be, and it is more difficult to verify the Path-Sum Constraint: now that some edges can go backward in the ordering, $\pi(P)$ could be as large as $\Theta(n^2)$.

We will use an LDD to decompose the graph into low-diameter components and DAG edges.
Note, however, that the LDD only applies to a single edge weight function, while our edges have both a length and a delay.
To this end, we define: 

\begin{restatable}[Combined Length-Delay Graph]{definition}{comblendel}
\label{def:lendel}
    Let $\lendel{G} = (V, E, w = \ell + d)$, which we refer to also as the combined length-delay graph of $G$, be the graph which is topologically equal to~$G$ and equipped with the non-negative weight function $w = \ell + d$.
\end{restatable}

To get an intuition for how we use LDDs to obtain an ordering $\tau$, we start by showing how to compute a $\pi$ that satisfies the Path-Sum Constraint and has $\Pi = \Ot(n^{3/2})$.
We then show in the following subsubsection how to improve this to the desired $\Pi = \Ot(n)$.

\paragraph{Defining $H$.}
For dense graphs, our construction of the new edges $H$ is very simple.
Say that we compute an LDD of $\lendel{G}$ for some diameter bound $D$ (not to be confused with the delay threshold, which we have fixed to $n$ for Simplified $\rsp$).
For \emph{every} SCC $C$ of the resulting decomposition, we will pick an arbitrary representative $c \in C$, and then for every other $v \in C$ we will add edges $vc$ and $cv$ to $H$ (we refer to these as \emph{star}-edges).
Since $C$ has diameter $D$ in $\lendel{G}$, we can safely set $\ell(cv) = \ell(vc) = d(cv) = d(vc) = D$.
These star-edges will have relatively small $\pi$-value, so they help ensure the existence of a path $P'$ that satisfies the Path-Sum Constraint.
Note, however, that rerouting $P'$ along the star-edges can increase the cost and delay of $P'$, so although we add star-edges to every SCC $C$, the path $P'$ will only use a smaller number of them (more details below).

\paragraph{Constructing $\pi, H$.}
The construction of $\pi,H$ is as follows:
\begin{tbox}
\begin{enumerate}
    \item $\eback{} \gets \LDD(\lendel{G}, D = \sqrt{n})$. \textcolor{blue}{\texttt{($\eback{}$ is a set of edges)}}
    \item $\cC = \set{C_1, C_2, \ldots, C_k} \gets \SCC(G \setminus \eback{})$.
    \item For every $C_i$, add the star-edges described above to $H$.
    \item Let $\tau$ be the ordering such that:
        \begin{itemize}
            \item For any edge $uv \in G$ such that the SCC $u$ belongs to precedes the SCC $v$ belongs to in a topological ordering of SCCs in $\cC$, we have $\tau(u) < \tau(v)$.
            \item $\tau$ is contiguous in any SCC $C \in \cC$.
                That is: $\{\tau(v) : v \in C\} = [x,x+\card{C}-1]$ for some integer $x$.
            \item Set $\pi(uv) = \card{\tau(v) - \tau(u)} / \sqrt{n}$ (note the down-scaling\footnote{Even though~\Cref{thm:pi-dp} assumes integral $\pi$, non-integral frequencies are not a problem, see~\Cref{sec:framework}.}).
        \end{itemize}
\end{enumerate}
\end{tbox}

\paragraph{Analyzing $\pi$.}
It is not hard to check that $\Pi = \Ot(n^{3/2})$: the contribution of $H$ to $\Pi$ is negligible (because $\card{H} = O(n)$), and by the same reasoning as Example 2 of \Cref{subsec:examples}, we have $\sum_{uv \in E} 1 / \pi(uv) = \sqrt{n} \cdot \sum_{uv \in E} (1/\card{\tau(v)-\tau(u)}) = \Ot(n^{3/2})$.
All that remains is to verify that the Path-Sum Constraint is satisfied.

The LDD partitions the edges of $G$ into three types: \emph{intra-component} edges inside some $G[C_i]$, the inter-component edges of $\eback{}$, and finally the inter-component edges that obey the ordering~$\tau$ (denoted 
$\eforward{}$).
We will analyze the contributions of the three sets separately.

\underline{Intra-Component Edges}:
As assumed by the Path-Sum Constraint, let $P$ be an $(s,t)$-path with $\ell(P) \leq n$ and $d(P) \leq n$.
For any $C_i \in \cC$, $P$ could go in and out of $C_i$ many times, but since we can assume without loss of generality that $P$ is simple, we know that $\card{P \cap G[C_i]} \leq \card{C_i}$.
Because~$\tau$ is contiguous, we know that $\pi(uv) \leq \card{C_i}/\sqrt{n}$ for all $uv \in E(G[C_i])$, so all in all we have 
\[\pi(P \cap G[C_i]) \leq \frac{1}{\sqrt{n}}\card{C_i}\cdot\card{P \cap G[C_i]} \leq \frac{\card{C_i}^2}{\sqrt{n}}.\]

Let us define an SCC $C_i \in \cC$ to be \emph{large} if $|C_i| \geq 2\sqrt{n}/\eps$, and \emph{small} otherwise.
We will arrange for $P'$ to satisfy $P' \cap G[C_i] \subseteq P \cap G[C_i]$ for all small SCCs.
Then, by the inequality above, it is easy to check that
\[\sum_{\textrm{small } C_i} \pi(P' \cap G[C_i]) \le \sum_{\textrm{small } C_i} \pi(P \cap G[C_i]) \leq 2n/\eps = O(n).\]
The large $C_i$ are more troublesome, and this is the only place where we will use the star-edges contained in $H$.
The idea is that for every large $C_i$, if $u$ is the first vertex on $P$ in $C_i$ and $v$ is the last, then $P'$ reroutes from $u$ to $v$ using the star-edges of $C_i$ (more specifically, for a center $c \in C_i$, we replace the subpath of $P$ from $u$ to $v$ with the star-edges $uc$ and $cv$ and call the resulting path after performing all replacement operations $P'$).
Each of these star-edges has $\pi(e) \leq \card{C_i}/\sqrt{n}$, so the contribution by $\sum_{\textrm{large } C_i} \pi(P' \cap G[C_i])$ is negligible.
A potential worry is that $P'$ might have larger cost and delay than $P$.
But note that there are at most $\eps\sqrt{n}/2$ large SCCs, and using star-edges incurs an additive error to the cost and delay of at most $2D=2\sqrt{n}$ in each large SCC, so the total additive error on $P'$ caused by rerouting is $\leq \eps n$, which satisfies the Path-Sum Constraint.

\underline{Back-edges}:
By the Sparse Hitting guarantee of LDDs (\Cref{def:ldd}), every edge $e$ is in $\eback{}$ with probability $\Ot((\ell(e) +d(e))/\sqrt{n})$. 
Since $\ell(P) = d(P) = n$, we can use linearity of expectation to conclude $\expect{\card{P \cap \eback{}}} = \Ot(\sqrt{n})$.
Since every edge $e$ clearly has $\pi(e) = O(\sqrt{n})$ and $P' \cap \eback{} \subseteq P \cap \eback{}$ (note that by construction $P' \cap F \subseteq P \cap F$ for any $F \subseteq E(G)$), we have 
\begin{align*}
    \expect{\pi(P' \cap \eback{})} \le \expect{\pi(P \cap \eback{})} = \Ot(n).
\end{align*}

\underline{Forward-edges}: 
It is not hard to check that:
\begin{align*}
\pi(P' \cap \eforward{}) \le \pi(P' \cap \eback{}) +  \pi(P' \cap \textrm{intra-component edges}) + \Ot(\sqrt{n}).
\end{align*}
The reason is that beyond going from $\tau = 1$ to $\tau = n$ once, any movement of $P$ that goes forward along~$\pi$ must be counter-balanced by an earlier backward movement, and all such backward movement is due to either the back-edges or the intra-component edges.

Summing up over all the types of edges discussed above, we have that $\pi(P') = \Ot(n)$ in expectation.
Constructing multiple copies of $\pi, H$ with fresh randomness each time then results in the Path-Sum Constraint being satisfied with high probability.

\subsubsection{Dense Directed Graphs: Further Improvement (LDD hierarchy)}
We now briefly sketch how to achieve $\Pi = \Ot(n)$ instead of $\Ot(n^{3/2})$.
The idea is to apply LDD in a hierarchical fashion.
We first execute 
$$\eback{1} \leftarrow \LDD(\lendel{G}, D_1 = n/2)$$
and compute star-edges within each component.
Note that for the path $P$ assumed by the Path-Sum Constraint, we have $\expect{\card{P \cap \eback{1}}} = \Ot(1)$, so the total contribution of $\pi(P \cap \eback{1})$ will be small; the same applies to forward-edges at this level.
But the intra-component edges can now have a huge contribution, so we recursively compute an LDD within each component, now with $D=n/4$; we then add star-edges to the level $2$ components and continue recursing in this fashion.
In the end, we have a hierarchy of SCCs, where the SCCs at level $i$ are arranged in topological order and are each contained in some parent SCC at level $i-1$.
Such a hierarchy imposes a natural ordering $\tau$ on the vertices: $\tau$ is contiguous within every SCC and obeys the ordering at every level.
We then set $\pi(uv) = \card{\tau(v) - \tau(u)}$ (no down-scaling this time). 

The proof that $\pi$ obeys the Path-Sum Constraint is similar in spirit to \Cref{sub:overview-dense-slow}, though we need to be more careful in analyzing relationships between levels.
There is also one additional difficulty worth noting.
Recall that $P$ is the ``optimal" path assumed by the Path-Sum Constraint, and say that there is a component $C$ at some level $i$ such that $\ell(P \cap G[C]) + d(P \cap G[C])$ is large relative to $D_i$; we call such SCCs \emph{long} for $P$.
In this case, $\eback{i}$ will in expectation contain many edges from $P \cap G[C]$, which prevents us from upper-bounding $\pi(P \cap G[C])$.
Fortunately, we can show that long SCCs are rare, and hence they too can be rerouted along star-edges.
Even though we handle them the same way, there is a key conceptual difference between large and long SCCs.
Largeness depends only on $\card{C}$, so the algorithm can detect large SCCs when constructing $\pi, H$; in fact, instead of adding star-edges to large SCCs, we could have the algorithm contract them instead.
But the algorithm \emph{cannot} detect long SCCs, because it has no knowledge of $P$; here contraction would not work, and the crucial advantage of adding edges to $H$ is that it allows the algorithm to (non-constructively) guarantee the existence of the desired $P'$ by rerouting in the analysis only.

\subsection{Computing $\pi, H$ in Sparse Graphs}

\subsubsection{A Different Starting Point: Sparse DAGs}

We start by constructing $\pi, H$ for a sparse DAG.
This algorithm uses our algorithm for All-Pairs $\rsp$ (see \Cref{thm:all-pairs-simple}) as a black-box; this latter algorithm uses a different and simpler set of techniques (though it still uses the $\pidp{\one}$ as a key ingredient), so we do not discuss its inner machinery in this overview. 

\paragraph{Hop-Edges.}
We introduce an additional type of edges to $H$, the \emph{hop-edges}.
Say that we compute All-Pairs $\rsp$ in $G[S]$ for some $S \subset V$.
Then, for every $u,v \in S$, we have the option to add an edge $uv$ to $H$ whose length and delay are within a $(1+\eps)$-factor of the \textit{shortest delay-constratined} $(u,v)$-path in $G$; there can be many length/delay tradeoffs from $u$ to $v$ in $G$, so we add $\Ot(1)$ many parallel $uv$ edges to~$H$ that represent all possible tradeoffs (up to a $(1+\eps)$-factor).
Note that whereas star-edges only contain a loose upper bound on cost/delay, the hop-edges are accurate because they are based on our computation of all-pairs restricted distances; so shortcutting a path~$P$ using hop edges does not incur any significant error.

\paragraph{Constructing $\pi, H$.}
Below is an outline of our construction for $\pi, H$ in a sparse DAG.
\begin{enumerate}
    \item Compute a topological ordering $\tau$ and partition the vertices into $n/\blocksize$ blocks $B_i$ of size $\blocksize$, where each block takes the form $\{v : \tau(v) \in [x, x + \blocksize)\}$ for some $x$.
    \item Compute All-Pairs $\rsp$ within each $G[B_i]$ and add corresponding hop edges $uv$ for some pairs $u,v \in B_i$
    \item These hop-edges will ensure that every path $P$ can be (approximately) shortcut to contain $\ll n$ edges, which will allow us to increase $\pi$-values while still satisfying the Path-Sum Constraint, which in turn reduces $\Pi$ and speeds up $\pidp{\pi}$ on $G \cup H$.
\end{enumerate}

We now elaborate on steps 2 and 3 of the outline.
For the sake of intuition, let us assume $m = \Ot(n)$.
Since we compute All-Pairs $\rsp$ inside each $G[B_i]$, it is tempting to add hop edges for every pair $u,v \in B_i$.
The problem here is that we would then have $\card{H} \gg \card{E(G)}$, which would cause $\pidp{\pi}$ on $G \cup H$ to run slow.

We instead only add hop-edges to some vertex-pairs in each $B_i$.
In each $B_i$, we sample every vertex into a set $S_{B_i}$ with probability $\Theta(\log(n)/\blockrun)$, for some parameter $\blockrun < \blocksize$, and we add hop edges $uv$ to $H$ for every $u,v \in S_{B_i}$.
As a result, for any path $P$, the segment of $P \cap G[B_i]$ can be shortcut to contain at most $O(\blockrun)$ edges: with high probability the first and last $\blockrun$ edges of $P$ contain a vertex in $S_{B_i}$, and we can take a hop-edge from $H$ between those two vertices in $S_{B_i}$.
Since $\card{S_{B_i}} = \Ot(\blocksize / \blockrun)$, we have
\begin{equation}
\label{eq:overview-num-edges}
    \card{H} = \Ot\paren{\frac{n}{\blocksize} \cdot \paren{\frac{\blocksize}{\blockrun}}^2} = \Ot\paren{n \frac{\blocksize}{\blockrun^2}}
\end{equation}
and
\begin{equation}
\label{eq:overview-hop-diam}
    \textrm{every} (s,t)\textrm{-path } P \textrm{ can be shortcut to contain roughly at most } n \frac{\blockrun}{\blocksize} \textrm{edges}.
\end{equation}

Since we are assuming $\card{E(G)} = \Ot(n)$, a natural approach is to optimize between $\blocksize$ and $\blockrun$ while always ensuring that $\card{H} = \Ot(n)$.
This results in $\blocksize = n^{2/3}$ and $\blockrun = n^{1/3}$.
By \Cref{eq:overview-hop-diam}, every path can be shortcut down to $\Ot(n^{2/3})$ edges, so we can set  $\pi(e) = n^{1/3}$ for all edges $e$ in $G \cup H$, while still satisfying the Path-Sum Constraint.
This yields $\Pi = m/n^{1/3}$, so by \Cref{thm:pi-dp}, the overall runtime is $mn^{2/3}$.

The flexibility of using a $\pi$ function allows us to attain further improvement of \Cref{thm:sparse-simple}.
Somewhat counter-intuitively, even if $m = \card{E(G)} = \Ot(n)$, we will allow $H$ to contain $n^{1+\delta}$ edges (for some small constant $\delta$).
We can afford this because the edges of $H$ make things easier to handle in the following way: for any path $P$ and block $B_i$, $P$ can be shortcutted to use $\blockrun$ edges in $G[B_i]$ but \emph{only one} edge in $G[B_i] \cap H$.
Decreasing $\blockrun$ allows us to set $\pi(e)$ to be larger for edges in $G[B_i]$ while still satisfying the Path-Sum Constraint.
Therefore, while we may add more edges into~$H$, we have room to increase the $\pi$-value of each edge from $G[B_i]$ which results in a smaller $\Pi$. 

\subsubsection{Sparse Directed Graphs}
We again use an LDD Hierarchy to extend the DAG-approach to general graphs.
This is the most technically involved part of our paper, so we only give a very brief sketch here.
Recall that for dense graphs, although the \emph{analysis} distinguished between different kinds of SCCs (e.g. long or large), the algorithm itself treated every SCC the same way: add star-edges and recurse.
For the sparse case, by contrast, the construction will treat SCCs differently depending on their size.

We will set parameters $\blocksize$ and $\blockrun$ the same as for DAGs.
At the top level of the hierarchy, we compute:
$$\eback{1} \leftarrow \LDD(\lendel{G},D = n/2)$$
Let us say that an SCC is small if it has size less than $\blocksize$.
We end up denoting non-small SCCs as ``medium" SCCs (there are also large SCCs, but these are easy to handle via star-edges, just like in the dense case).
The basic idea is to group small SCCs into blocks and then run the DAG-approach on each block (i.e. compute All-Pairs $\rsp$ and add hop edges).
Medium SCCs cannot be handled this way because we cannot afford to compute All-Pairs $\rsp$ in a large set of vertices; we thus recursively call LDD on the medium SCCs. 

The resulting LDD hierarchy thus has different kinds on SCCs at every level, and so requires a more delicate analysis of the interaction between levels, which we leave for the main body of the paper.
One immediate difference with the dense case is that in the former the only edges of $H$ are star-edges, which are trivial to analyze and have minimal contribution to $\pi$; by contrast, we need to be more careful in bounding the contribution of the hop-edges.
Recall that in the sparse DAG analysis, we claimed that for any block $B_i$, $P \cap G[B_i]$ only uses a single edge from $H$; this is no longer true in general graphs because the path can go in and out of $B_i$, and each such segment might use an edge from $H$.
To overcome this, we observe that $P$ can only go in and out of $B_i$ by using back-edges somewhere in the LDD hierarchy, and we can use the structure of the LDD hierarchy to upper bound the number of such back-edges.

\section{General Framework For Directed Graphs}
\label{sec:framework}

In this section, we provide a setup that is shared by both our main results.
We first reduce the problem of finding a $(1+\eps,1+\eps)$-approximation of $\rsp$ on general graphs to solving a decision version of the problem, Gap $\rsp$.
Then, we give a setup for solving Gap $\rsp$ by using $\pidp{\pi}$ for a well-chosen $\pi$.

\subsection{Reduction to Gap $\rsp$}

Consider the following decision problem Gap $\rsp$ (note the asymmetry in the $\mathtt{NO}$ case).

\begin{problem}[Gap $\rsp$]
    Given $s \in V$ and $L,D \in \IR_{\ge 0}$ and $\eps>0$, answer for all $t$:
    \begin{itemize}
        \item $\mathtt{YES}$ if there is an $(s,t)$-path with length $\le L$ and delay $\le D$.
        \item $\mathtt{NO}$ if there is no $(s,t)$-path with length $\le (1 + \apx)L$ and delay $\le (1 + \apx)^2 D$.
        \item $\mathtt{YES}$ or $\mathtt{NO}$ otherwise (any output is fine).
    \end{itemize}
\end{problem}

The following lemma formally states our reduction from $\rsp$ to Gap $\rsp$; the proof is a standard boosting argument and is deferred to \Cref{sec:reduction-to-gap}.

\begin{restatable}{lemma}{reduction}
\label{lem:reduction-to-gap}
    Let $\alpha$ be an arbitrary constant and let $\cA$ be an $O\paren{T(m,n) / \eps^\alpha}$ time randomized algorithm that solves Gap $\rsp$ with the following one-sided error: for all $t \in V$, $\cA$ answers correctly with probability at least $1/2$ in the $\mathtt{YES}$ case but $\cA$ must always answer correctly in the $\mathtt{NO}$ case.
    Then, there is a randomized algorithm $\cA^*$ which gives a $(1 + \eps, 1 + \eps)$ approximation to $\rsp$ with high probability in time $O\paren{(T(m,n) + n) \log^{\alpha + 2}(n) \log(nW) / \eps^{\alpha + 1}}$.
\end{restatable}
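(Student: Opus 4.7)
The plan is to combine three standard techniques: rescaling the error parameter passed to $\cA$, one-sided amplification, and a geometric sweep over candidate path lengths. To begin, I would pass $\cA$ a rescaled error parameter $\eps' = \Theta(\eps/\log n)$, chosen so that the $(1+O(\eps'\log n))$ and $(1+O(\eps'\log n))^2$ slacks in Gap $\rsp$'s guarantee collapse to $(1+O(\eps))$, matching the final $(1+\eps,1+\eps)$-target. Because $\cA$ is always correct in the $\mathtt{NO}$ case, running $K = \Theta(\log n)$ independent copies in parallel and taking the $\mathtt{YES}$-if-any-$\mathtt{YES}$ disjunction yields an amplified algorithm $\cA'$ that is still always correct in the $\mathtt{NO}$ case and, in addition, correct with probability $1 - 1/\poly(n)$ in the $\mathtt{YES}$ case.

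Every simple $(s,t)$-path uses at most $n-1$ edges, so whenever $\rdist(s,t,D)$ is finite and positive it lies in a geometric interval of width at most $nW$. I would form a grid $L_0 < L_1 < \cdots < L_I$ with consecutive ratio $(1+\eps')$ and $I = O(\log(nW)/\eps')$, and invoke $\cA'(L_i, D, \eps')$ once per index $i$, recording the per-target $\mathtt{YES}/\mathtt{NO}$ answers. The degenerate cases $\rdist(s,t,D) \in \{0,\infty\}$ and edges of length zero are handled once and for all by a standard unconstrained shortest-path pre-pass using delays as weights, so I can assume $\rdist(s,t,D)$ is positive and finite.

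For each target $t$ I would output the value $(1+\eps')^2 L_{i^*(t)}$ as the estimate of $\rdist(s,t,D)$, where $i^*(t)$ is the smallest index with $\mathtt{YES}$-answer. By Gap $\rsp$'s contract on its $\mathtt{YES}$ outputs, this number is witnessed by an actual $(s,t)$-path of delay $\leq (1+O(\eps'\log n))^2 D \leq (1+\eps)D$. Conversely, union-bounding the $1 - 1/\poly(n)$ $\mathtt{YES}$-case reliability of $\cA'$ over the $O(n\log(nW)/\eps')$ (target, grid-index) pairs shows that with high probability all observed $\mathtt{NO}$-answers reflect genuine infeasibility, so $\rdist(s,t,D) > L_{i^*(t)-1}$ and hence $L_{i^*(t)} \leq (1+\eps')\rdist(s,t,D)$. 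Chaining the two bounds and absorbing constants into $\eps$ yields the desired $(1+\eps,1+\eps)$-approximation.

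For the runtime, a single call to $\cA'$ costs $O(K\cdot T(m,n)/(\eps')^\alpha) = O(T(m,n) \log^{\alpha+1}(n)/\eps^\alpha)$, plus $O(n)$ bookkeeping to tabulate the per-target outcomes; multiplying by the grid size $I+1 = O(\log(nW)\log(n)/\eps)$ gives the claimed $O((T(m,n)+n)\, \log^{\alpha+2}(n)\, \log(nW)/\eps^{\alpha+1})$ bound. The only real (mild) obstacle is that one cannot shortcut the grid sweep with a binary search: a $\mathtt{NO}$ returned inside Gap $\rsp$'s ``otherwise'' region is a legitimate output that does not certify infeasibility, so intermediate $\mathtt{NO}$-answers cannot justify skipping smaller $L_i$'s; I therefore rely on per-target monotonicity in $L$ together with the union-bounded amplification alone to identify $i^*(t)$ correctly.
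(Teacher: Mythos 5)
Your proposal is correct and follows essentially the same route as the paper's proof: rescale to $\eps' = \Theta(\eps/\log n)$, amplify the one-sided $\mathtt{YES}$-probability with $\Theta(\log n)$ independent repetitions, sweep a $(1+\eps')$-geometric grid of length thresholds over $[1,nW]$, and report based on the smallest $\mathtt{YES}$ index per target, with the same runtime accounting. The one nit is that the value you report, $(1+\eps')^2 L_{i^*(t)}$, should be $(1+O(\eps'\log n))\,L_{i^*(t)}$ — the Gap $\rsp$ contract only promises a witnessing path of length up to $(1+O(\eps'\log n))L_{i^*(t)}$, not $(1+\eps')^2 L_{i^*(t)}$ — but since $\eps' = \Theta(\eps/\log n)$ this recalibration does not affect the final $(1+\eps,1+\eps)$ guarantee.
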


We may thus turn our attention to solving Gap $\rsp$.

\subsection{Setup for Solving Gap $\rsp$}

Both the algorithms behind \Cref{thm:dense-simple,thm:sparse-simple} (details in \Cref{sec:dense,sec:sparse} respectively) share a common high-level framework for solving Gap $\rsp$ which we outline here.

\subsubsection{Normalization of Lengths and Delays}
The approach we take to computing Gap $\rsp$ involves the following renormalization of lengths and delays.
We set $\hatell(e) = n \cdot \ell(e)/(\eps \cdot L)$ and $\hatd(e) =n \cdot d(e)/(\eps \cdot D)$.
Notice that for all paths $P$,
\begin{align*}
    \ell(P) \le L
    \textrm{ and }
    d(P) \le D
    \implies
    \hatell(P) \le \frac{n}{\eps}
    \textrm{ and }
    \hatd(P) \le \frac{n}{\eps}
\end{align*}
and
\begin{align*}
    \hatell(P) > (1 + \apx)\frac{n}{\eps}
    \textrm{ or }
    \hatd(P) > (1 + \apx)^2\frac{n}{\eps}\\
    \implies
    \ell(P) > (1 + \apx) L
    \textrm{ or }
    d(P) > (1 + \apx)^2 D.
\end{align*}
This suggests (we are being rather informal here) that a suitable test is to answer $\mathtt{YES}$ if and only if there is a path with normalized length and delay each at most $n/\eps + O(n \log n)$.
Roughly speaking, we conduct this test by computing a frequency function $\pi$ such that if there exists an $(s,t)$-path $P$ with $\hatell(P) \le n / \eps$ and $\hatd(P) \le n / \eps$, then there exists an $(s,t)$-path $P'$ with $\hatell(P') \le (1 + \apx) n/\eps$ and $\hatd(P') \le (1 + \apx) n/\eps$ and, lastly, that satisfies $\pi(P') = \Ot(n/\eps)$; running $\pidp{\pi}$ up to depth $h = \Ot(n/\eps)$ with delay threshold $(1 + \apx)n/\eps$ and approximation parameter $\apx$ would then complete the test.
This is, however, just a rough outline.
In fact, we do not run $\pidp{\pi}$ on the original input graph $G$; we run it on $G \cup H$ for a random edge set $H$.
More on this now.

\subsubsection{Auxiliary Edges}
Our algorithms for Gap $\rsp$ will first prepare a random set $H$ of edges and then run $\pidp{\pi}$ on $G \cup H$.
In order for this strategy to be feasible, we will need to take care of a few things.

\begin{definition}[Pareto Frontier Preserving Auxiliary Edges]
\label{def:good-edges}
    An edge set $H$ is said to be \emph{Pareto Frontier Preserving} with respect to $G$ if for all edges $uv \in H$, there is a $(u,v)$-path $P_{uv} \subseteq G$ such that $\ell(P_{uv}) \le \ell(uv)$ and $d(P_{uv}) \le d(uv)$.
\end{definition}
This property ensures that any valid solution to Gap $\rsp$ in $G \cup H$ also corresponds to a valid solution in $G$.
Moreover,~as we are running $\pidp{\pi}$ on $G \cup H$, we must accordingly define~$\pi$ not only for $E(G)$ but also for $H$.

For a set of edges $H$ that is Pareto Frontier Preserving, any $(s,t)$-path in $G \cup H$ must be (weakly) Pareto dominated by an $(s,t)$-path in $G$; call the latter path $P$.
If $\hatell(P) \le n/\eps$ and $\hatd(P) \le n/\eps$, then the desired approximate path $P'$ mentioned earlier can be comprised of edges from $G \cup H$; it is in fact $H$ which gives us the latitude to efficiently define $\pi$ in such a way that $\pi(P') = \Ot(n/\eps)$.

\subsubsection{Randomness of the Approximate Path $P'$}
Now let us take randomness into account (randomness of both $H$ and $\pi$).
Earlier we stated the desideratum that if there exists an $(s,t)$-path $P$ in $G$ with $\hatell(P) \le n / \eps$ and ${\hatd(P) \le n / \eps}$, then there must exist an $(s,t)$-path $P'$ in $G \cup H$ with $\hatell(P') \le (1 + \apx) n/\eps$ and ${\hatd(P') \le {(1 + \apx) n/\eps}}$ and, lastly, that satisfies $\pi(P') = \Ot(n/\eps)$.
In reality, the existence of such a $P'$ is not deterministically guaranteed: instead, we will show a random process to construct an $(s,t)$-path~$P'$ from $P$ such that $\hatell(P') \le (1 + \hapx) n/\eps$ and $\hatd(P') \le (1 + \hapx) n/\eps$ and, critically, that satisfies $\expect{\pi(P')} = \Ot(n/\eps)$.
By Markov's inequality, $\pi(P') = \Ot(n/\eps)$
happens with probability at least $1/2$, which satisfies the hypothesis of \Cref{lem:reduction-to-gap}.
Note that the random process for constructing $P'$ is solely for the purpose of analysis and our algorithms for Gap $\rsp$ never explicitly construct such paths.
We capture that this property holds by the following definition.
\begin{definition}[Path-Sum Constraints]
\label{def:path-sum-constraints}
    We say $\pi, H$ $\alpha$-satisfy the \emph{Path-Sum Constraint} for $t \in V$ if, when there exists an $(s,t)$-path $P \subseteq G$ with $\hatell(P) \le n/\eps$ and $\hatd(P) \le n/\eps$, over the randomness of $\pi, H$ there is an $(s,t)$-path $P' \subseteq G \cup H$ such that $\hatell(P') \le (1 + \apx)n/\eps$ and $\hatd(P') \le (1 + \apx)n/\eps$ and $\expect{\pi(P')} = O(n \log^{\alpha}(n) / \eps)$ for some constant $\alpha$.
\end{definition}
When there is no need to be precise about $\alpha$ in \Cref{def:path-sum-constraints}, we just say that the Path-Sum Constraints are satisfied (as opposed to $\alpha$-satisfied).

\paragraph{On Integral Valued $\pi$. }
The inner-workings of $\pidp{\pi}$ very much depend on the range of $\pi$ taking on positive integer values, which a careful reader of the later sections may notice we sweep under the rug.
We can remedy this by composing the ceiling function with $\pi$.
This decreases $\Pi$ (which is in our favor), and possibly increments $\pi(P')$ by $n$ since $P'$ is simple and thus uses at most $n$ edges.
The increment to $\pi(P')$ by $n$ still maintains the Path-Sum Constraints, which require that $\pi(P') = \Ot(n/\eps)$.

\subsubsection{Summary}

All of the tools in this section allow us to reduce the problem of solving Gap $\rsp$ to the problem of finding $\pi,H$ that satisfy the Path-Sum Constraints for all $t \in V$ and have small $\Pi$ (recall from the statement of \Cref{thm:pi-dp} that $\Pi = \sum_e 1/\pi(e)$); the key problem is thus to compute such $\pi, H$, which we do for \Cref{thm:dense-simple,thm:sparse-simple} (in \Cref{sec:dense,sec:sparse} respectively).
Formally, we have the following reduction.

\begin{lemma}
\label{lem:gap-solver}
    If we can compute in time $T$ both a Pareto Frontier Preserving set of edges $H$ and a $\pi$ such that $\pi, H$ $\alpha$-satisfy the Path-Sum Constraints for all $t \in V$, then we can solve Gap $\rsp$ in time $O\paren{T + n \Pi \log^{\alpha}(n) / \eps^2}$ up to one-sided error: for all $t \in V$, we answer~correctly with probability at least $1/2$ in the $\mathtt{YES}$ case but must always answer correctly in the $\mathtt{NO}$ case.
\end{lemma}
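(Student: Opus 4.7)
The plan is to spend time $T$ computing $H$ and $\pi$, then invoke $\pidp{\pi}$ on the augmented graph $G\cup H$ at a carefully chosen depth $h$ and accuracy parameter $\eps'$, and finally for each target $t\in V$ report $\mathtt{YES}$ iff the length returned for $t$ is at most $(1+\hapx)n/\eps$ on the normalized scale of \Cref{sec:framework}.

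First I would choose the two parameters so that the DP runs within budget and lands inside the correct gap window. The natural choice of depth is $h=\Theta(n\log^{\alpha}(n)/\eps)$: the Path-Sum Constraint yields $\Exp[\pi(P')]=O(n\log^{\alpha}(n)/\eps)$ for the (non-constructive) rerouted path $P'\subseteq G\cup H$, so by Markov's inequality, with probability at least $1/2$ this $P'$ satisfies $\pi(P')\le h$ simultaneously with the length and delay bounds given by the Path-Sum Constraint. The natural choice of accuracy is $\eps'=\Theta(\eps\log n)$, selected with a small enough constant so that $(1+\eps')(1+\hapx)\le (1+\apx)^2$; this is exactly the slack needed for the multiplicative delay bloat that $\pidp{\pi}$ introduces on top of $P'$'s delay to still fit inside the outer gap threshold. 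With these settings and delay threshold $D^\star=(1+\hapx)n/\eps$, \Cref{thm:pi-dp} gives preprocessing time $O(h\Pi\log(n)/\eps')=O(n\Pi\log^{\alpha}(n)/\eps^2)$, which together with $T$ matches the claimed bound.

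Correctness in the $\mathtt{YES}$ case follows directly from the Path-Sum Constraint: when some $(s,t)$-path $P\subseteq G$ satisfies $\hatell(P),\hatd(P)\le n/\eps$, the constraint produces $P'\subseteq G\cup H$ with $\hatell(P'),\hatd(P')\le (1+\hapx)n/\eps$ and, with probability at least $1/2$, $\pi(P')\le h$. Hence $\rdist^{\pi\le h}_{G\cup H}(s,t,D^\star)\le (1+\hapx)n/\eps$ and the DP reports a length at most $(1+\hapx)n/\eps$, so we correctly output $\mathtt{YES}$. Correctness in the $\mathtt{NO}$ case uses that $H$ is \PFP: every $(s,t)$-path in $G\cup H$ is weakly Pareto dominated by some $(s,t)$-path in $G$, so if no $(s,t)$-path in $G$ has length at most $(1+\apx)L$ and delay at most $(1+\apx)^2 D$, then neither does any path in $G\cup H$. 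The path $\pidp{\pi}$ returns has delay at most $(1+\eps')D^\star\le (1+\apx)^2 n/\eps$, so its length must strictly exceed $(1+\hapx)n/\eps$, and we output $\mathtt{NO}$.

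The main delicate step is the coupled tuning of $\eps'$ and $h$ against the asymmetric gap $(1+\apx)$ vs.\ $(1+\apx)^2$; this is precisely what forces the $\eps^{-2}$ (rather than $\eps^{-1}$) in the final runtime. A minor ancillary point is that \Cref{thm:pi-dp} formally requires $\pi$ to be integer-valued and $D_{\max}/D_{\min}=\poly(n)$; both are handled as in \Cref{sec:framework} --- replacing $\pi$ by $\lceil\pi\rceil$ only decreases $\Pi$ and grows $\pi(P')$ by at most $n$ (absorbed into the Path-Sum bound since $P'$ is simple), and the normalized delays lie in a polynomial window up to a harmless truncation that does not affect any $(s,t)$-path relevant to the decision.
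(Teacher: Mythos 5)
Your proposal is correct and mirrors the paper's own proof step by step: same normalization, same choice of depth $h=\Theta(n\log^{\alpha}(n)/\eps)$ paired with a Markov argument on $\Exp[\pi(P')]$, same accuracy parameter $\Theta(\eps\log n)$ tuned against the asymmetric $(1+\apx)$ versus $(1+\apx)^2$ gap, same PFP-based soundness argument, and the same side remarks about integral $\pi$ and the polynomial delay window. No substantive divergence from the paper's argument.
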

\begin{proof}
    We use the following algorithm to solve Gap $\rsp$.
    \begin{tbox}
    \underline{\textbf{\textsc{Gap $\rsp$ Solver}}}
        \begin{enumerate}
            \item $\hatd(e) \gets \frac{d(e)}{\eps D / n}$ for all $e \in E$.
            \item $\hatell(e) \gets \frac{\ell(e)}{\eps L / n}$ for all $e \in E$.
            \item $\hat{G} \gets (V, E, \hatell, \hatd)$.
            \item Compute $H$, a Pareto Frontier Preserving set of edges.
            \item Compute $\pi$ such that $\pi, H$ $\alpha$-satisfy the Path-Sum Constraints for all $t \in V$.
            \item Run $\pidp{\pi}$ on $\hat{G} \cup H$ up to depth $h = \Theta(n \log^{\alpha}(n) / \eps)$ with source $s$, delay threshold $(1 + \apx)\frac{n}{\eps}$, and approximation parameter $\apx$.
            \item If $\pidp{\pi}$ answers $\le \frac{n}{\eps} + O(n \log n)$ for $t \in V$, say $\mathtt{YES}$ for $t$; otherwise say $\mathtt{NO}$ for $t$.
        \end{enumerate}
    \end{tbox}
    The running time of the algorithm above is immediate from the running time stated in \Cref{thm:pi-dp} and noting that we have set the depth $h = \Theta(n \log^{\alpha}(n) / \eps)$ and used the approximation parameter $O(\eps \log n)$ on the call to $\pidp{\pi}$.
    We next show correctness.

    \underline{Completeness (via Path-Sum Constraints being satisfied)}:
    For any $t \in V$, suppose there is an $(s,t)$-path $P$ in $G$ with $\ell(P) \le L$ and $d(P) \le D$.
    Then, $\hatell(P) \le n / \eps$ and $\hatd(P) \le n / \eps$.
    Since the Path-Sum Constraints are $\alpha$-satisfied for $t$, with probability at least $1/2$ there is an $(s,t)$-path $P'$ using edges from $G$ and $H$ with $\hatell(P') \le (1 + \apx) n / \eps$ and $\hatd(P) \le (1 + \apx) n / \eps$ and $\pi(P') = O(n \log^{\alpha}(n) / \eps)$.
    By \Cref{thm:pi-dp}, the call to $\pidp{\pi}$ on $\hat{G} \cup H$ produces an answer which is at most $(1 + \apx) n / \eps$ for $t$, and hence the algorithm answers $\mathtt{YES}$.

    \underline{Soundness (via Pareto Frontier Preservation)}:
    For any $t \in V$, suppose there is no $(s,t)$-path $P$ in $G$ with $\ell(P) \le (1 + \apx) L$ and $d(P) \le (1 + \apx)^2 D$.
    Then, there is no $(s,t)$-path $P$ with $\hatell(P) \le (1 + \apx) n / \eps$ and $\hatd(P) \le (1 + \apx)^2 n / \eps$.
    Since $H$ is Pareto Frontier Preserving, there will be no $(s,t)$-path $P$ in $\hat{G} \cup H$ with $\hatell(P) \le (1 + \apx) n / \eps$ and $\hatd(P) \le (1 + \apx)^2 n / \eps$.
    By \Cref{thm:pi-dp}, the call to $\pidp{\pi}$ on $\hat{G} \cup H$ must produce an answer which is greater than $(1 + \apx) n / \eps$ for $t$, and hence the algorithm answers $\mathtt{NO}$.
\end{proof}

\section{$\rsp$ on Dense Directed Graphs}
\label{sec:dense}

In this section, we complete \Cref{thm:dense-simple}.
Recall that, in view of \Cref{lem:gap-solver}, our goal is to efficiently construct an auxiliary edge set $H$ that is \PFP and a frequency function $\pi$ so that $\pi, H$ satisfy the \PSCs for all $t \in V$.

Our construction of $\pi, H$ heavily relies on directed LDDs from \Cref{def:ldd}.
Note, however, that our input graph has two weight functions (length and delay), while the LDD from \Cref{def:ldd} only handles a single weight function. 
Fortunately, we show that it is sufficient for us to use the regular (single-weight) LDD, by using a weight function that combines cost and delay into a single number.

\comblendel*
Throughout this section, we let $n = \card{V(\Gin)}$, where $\Gin$ is the initial input graph.

\subsection{Frequencies and Auxiliary Edges via a Directed LDD Hierarchy}
\begin{figure}[b!]
    \centering
    \includegraphics[scale=0.7]{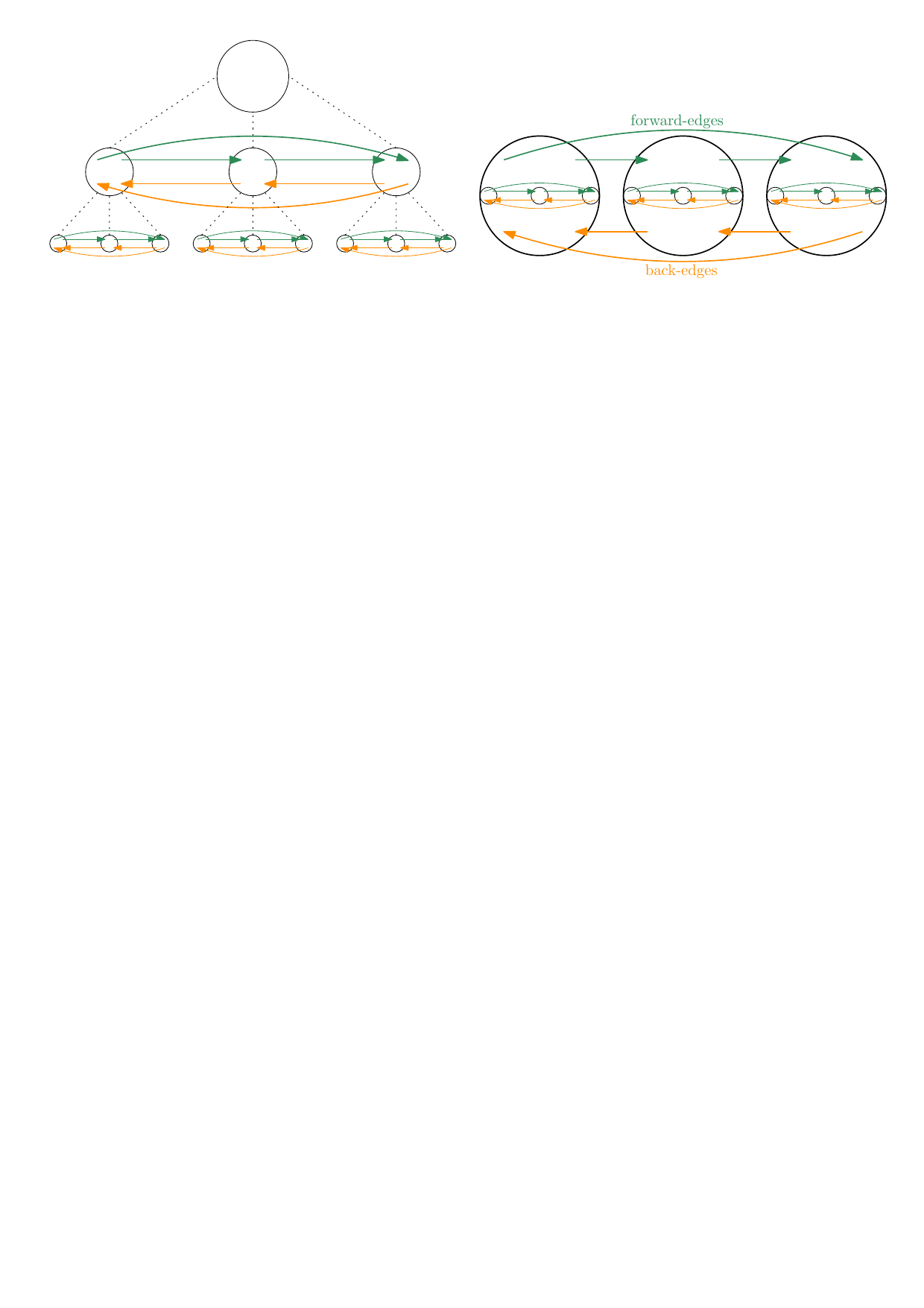}
    \caption{
        Diagram of a $2$ level LDD Hierarchy, with the root in level $0$ (for convenience), and leaves in level $2$.
        Each node in levels $i > 0$ are SCCs contained in their parent.
        On the right, the same LDD Hierarchy, flattened.
        Star-edges have not been depicted.
    }
    \label{fig:hierarchy-no-star}
\end{figure}

Intuitively, our function $\pi$ will aim to mimic the topological ordering from Example 2 of \Cref{subsec:examples}, but now applied to general (not acyclic) graphs.
We use an LDD Hierarchy with $\log n$ levels with the following data:
\begin{itemize}
\item In level $1$, 
(i) a Directed LDD of $\Gin$ (Recall from \Cref{def:ldd} that the LDD is a set of edges $B$ which here we call ``back-edges''); 
(ii) the SCCs of $\Gin \setminus B$; 
(iii) the edges of the DAG obtained from $\Gin \setminus B$ by contracting all SCCs, which we call ``forward edges''.
\item And in level $i > 1$, for each SCC $C$ in level $i - 1$, 
(i) a Directed LDD of $G[C]$; 
(ii) the SCCs of $G[C]$ after removing back-edges of this LDD, each with a pointer to $C$ in the LDD Hierarchy; 
(iii) the forward edges of this LDD (defined analogously as above).
\end{itemize}
For any two SCCs $C$ and $C'$ of the LDD Hierarchy, we say that $C'$ is an ancestor of $C$ if $C \subseteq C'$.
\Cref{fig:hierarchy-no-star} depicts the aggregate of data in the LDD Hierarchy described so far.

To construct the LDD Hierarchy, we call the following simple algorithm on $\Gin$ with recursion depth $i = 1$ and interval $[n]$. (Details on the interval, $\tau$, and star-edges to follow shortly; it is safe to ignore them (the grayed-out text) for now.)
\begin{tbox}
    \algname{Low-Diameter Decomposition Hierarchy (Dense)}\\
    \textbf{Input:} A directed graph $G = (V, E, \ell, d)$, recursion depth $i$\textcolor{gray}{, and an interval~$I$}.\\
    \textbf{Output:} Data of an LDD Hierarchy.
    \begin{enumerate}
        \item If $i < \log n$, then: \textcolor{blue}{\texttt{(defining level $i$ data)}}
            \begin{enumerate}
                \item $D_i \gets n/2^i$.
                \item Add $\LDD(\lendel{G}, D_i)$ to $\eback{i}$.
                    \textcolor{blue}{\texttt{(These are back-edges)}}
                \item $\cC = \set{C_1, C_2, \ldots, C_k} \gets \SCC(G \setminus \eback{i})$.
                \item \textcolor{gray}{Add star-edges with length and delay $D_i$ into $\estar{i}$ (which is disjoint from $E(G)$).}
                \item \textcolor{gray}{$I = I_1 \sqcup I_2 \sqcup \ldots \sqcup I_k$ where $I_j$ are intervals increasing in $j$ with $\card{I_j} = \card{C_k}$.}
                \item \label{alg:dense-tau} \textcolor{gray}{Update $\tau$ so that $\tau(C_j) = I_j$ for all $j \in [k]$.}
                \item Add $(G \setminus \eback{i}) \setminus \bigcup_{C \in \cC} E(G[C])$ to $\eforward{i}$.
                    \textcolor{blue}{\texttt{(These are forward-edges)}}
                \item For all $j \in [k]$, recurse into $G[C_j]$ with recursion depth $i+1$ \textcolor{gray}{ and interval $I_j$}.
            \end{enumerate}
    \end{enumerate}
\end{tbox}
We are almost ready to define $\pi$, but first let us define the auxiliary edge set $H$.

\paragraph{Auxiliary Edges. }
For each SCC $C$ stored in the level $i$ data of the LDD Hierarchy, we create a bidirected star by selecting an arbitrary vertex $c \in C$ and, for all $v \in C$, adding the edges $cv$ and $vc$ into $\estar{i}$ with $\ell(cv) = \ell(vc) = d(cv) = d(vc) = D_i$; we then let $H = \bigcup_i \estar{i}$.
We call the edges of~$H$ ``star-edges''.
Note that we treat $H$ as disjoint from $E(G)$ so from here on, even though~$G$ is simple, $G \cup H$ may have parallel edges.

\begin{observation}
\label{obs:dense-H-pfp}
    $H$ is \PFP.
\end{observation}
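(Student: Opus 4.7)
The plan is to verify the definition of Pareto Frontier Preservation directly by invoking the Bounded Diameter guarantee of the LDD at each level. Fix an arbitrary star-edge $uv \in H$. By construction, this edge arose from some SCC $C$ stored at level $i$ of the LDD Hierarchy, with representative $c \in C$, so that either $u = c$ and $v \in C$, or $v = c$ and $u \in C$; in either case $\ell(uv) = d(uv) = D_i$. The goal is thus to exhibit a $(u,v)$-path $P_{uv} \subseteq \Gin$ with $\ell(P_{uv}) \le D_i$ and $d(P_{uv}) \le D_i$.

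Next, I would unpack how $C$ was created in the algorithm. At level $i$, the LDD was called on $\lendel{G'}$ for $G' = G[C_{\text{parent}}]$ (with $G' = \Gin$ when $i=1$), using the combined weight $w = \ell + d$ and diameter parameter $D_i$. The SCC $C$ is one of the strongly connected components of $G' \setminus \eback{i}$. The Bounded Diameter property in \Cref{def:ldd} guarantees that for every pair $x,y \in C$, there is a $(x,y)$-path in $G' \setminus \eback{i}$ (hence a path in $\Gin$ since $G' \subseteq \Gin$ and $\eback{i} \subseteq E(\Gin)$) of total $w$-weight at most $D_i$.

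Applying this to the ordered pair $(u,v)$ yields a path $P_{uv} \subseteq \Gin$ with $\ell(P_{uv}) + d(P_{uv}) = w(P_{uv}) \le D_i$. Since both $\ell$ and $d$ are non-negative, this single inequality forces both $\ell(P_{uv}) \le D_i = \ell(uv)$ and $d(P_{uv}) \le D_i = d(uv)$ simultaneously, which is exactly the condition of \Cref{def:good-edges}. As $uv \in H$ was arbitrary, $H$ is Pareto Frontier Preserving.

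The only conceptual step worth flagging is the choice to run the LDD on $\lendel{G}$ rather than on the length and delay graphs separately: it is precisely this combined-weight formulation that allows a single diameter bound to dominate both $\ell(P_{uv})$ and $d(P_{uv})$ at once, which in turn justifies setting $\ell(uv) = d(uv) = D_i$ for the star-edges. Beyond that, the argument is a direct unfolding of definitions and presents no real obstacle.
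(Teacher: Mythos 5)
Your proof is correct and follows essentially the same route as the paper's: both invoke the Bounded Diameter guarantee of the LDD on the combined length-delay graph $\lendel{G}$ to obtain a $(u,v)$-path of $w$-weight at most $D_i$, and then use non-negativity of $\ell$ and $d$ to conclude that this path Pareto dominates the star-edge with $\ell(uv)=d(uv)=D_i$. Your version is simply a more explicit unfolding of the same one-line argument.
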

\begin{proof}
    This follows immediately since for any $i$, for any $uv \in \estar{i}$, $u$ and $v$ are in an SCC at level~$i$ and so there is a path from $u$ to $v$ with combined length-delay $D_i$ (by \Cref{def:ldd}); this path Pareto dominates the edge $uv \in H$, which has $\ell(uv) = d(uv) = D_i$.
\end{proof}

\paragraph{Back to $\pi$. }
The following definition gives an analogue of a topological order on DAGs to LDD Hierarchies.

\begin{definition}[LDD Hierarchy Respecting Order]
\label{def:order}
    We say that an ordering $\tau: V \rightarrow [n]$ respects an LDD Hierarchy if it is a bijection satisfying the following:
    \begin{itemize}
        \item For any SCC $C$ in any level of the LDD Hierarchy, the image of $C$ under $\tau$ is an interval $I \subseteq [n]$.
        \item For any pair of SCCs $C, C'$ in level $i$ of the LDD Hierarchy such that $C$ precedes $C'$ in a topological order of the SCCs of level $i$, the interval $\tau(C)$ precedes $\tau(C')$.
        \item For any pair of SCCs $C, C'$ of the LDD Hierarchy such that $C \subseteq C'$, we have $\tau(C) \subseteq \tau(C')$.
    \end{itemize}
\end{definition}
Observe that, by construction, $\tau$ produced by the LDD Hierarchy above in \hyperref[alg:dense-tau]{Step (f)} satisfies \Cref{def:order}.

We may then define $\pi$ like so:
\begin{align*}
    \pi(uv) = \card{\tau(u) - \tau(v)}.
\end{align*}

We can immediately make the following observation (see \Cref{thm:pi-dp} to recall the meaning of~$\Pi$).
\begin{observation}
\label{obs:dense-Pi}
    $\Pi = O(n \log n)$.
\end{observation}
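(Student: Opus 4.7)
The plan is to split $\Pi = \sum_{e \in G\cup H} 1/\pi(e)$ into two contributions, one from the edges of $G$ and one from the star-edges in $H$, and show that each is $O(n\log n)$.

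For the contribution from $E(G)$, I would fix a vertex $u \in V$ and bound $\sum_{v:\, uv \in E} 1/\pi(uv) = \sum_{v:\, uv\in E} 1/|\tau(u)-\tau(v)|$. Since $\tau$ is a bijection onto $[n]$, for every integer $k \ge 1$ there are at most two vertices $v$ with $|\tau(u)-\tau(v)|=k$. Hence this sum is at most $2\sum_{k=1}^{n-1} 1/k = O(\log n)$. Summing over the $n$ choices of $u$ gives $\sum_{e \in E(G)} 1/\pi(e) = O(n \log n)$.

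For the contribution from $H = \bigcup_i \estar{i}$, I would analyze one level of the hierarchy at a time. Fix a level $i$ and let $C_1, \ldots, C_k$ be the SCCs at that level, with centers $c_j \in C_j$ and sizes $n_j = |C_j|$. By \Cref{def:order}, $\tau(C_j)$ is a contiguous interval of length $n_j$, so for every $v \in C_j \setminus \{c_j\}$ we have $|\tau(c_j)-\tau(v)| \in [1, n_j-1]$, and for each integer $k \in [1,n_j-1]$ at most two vertices of $C_j$ are at distance exactly $k$ from $c_j$ in $\tau$. Counting both directions of the bidirected star, the contribution of the star on $C_j$ is therefore at most $4\sum_{k=1}^{n_j-1} 1/k = O(\log n_j)$. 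Using the loose bound $\log n_j \le n_j$ and $\sum_j n_j = n$, the total contribution at level $i$ is $O(n)$. Since there are only $O(\log n)$ levels, we conclude $\sum_{e \in H} 1/\pi(e) = O(n \log n)$.

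Adding the two bounds yields $\Pi = O(n\log n)$. The only mildly subtle step is the per-level bound for star-edges, where one must use that the $\tau$-image of each level-$i$ SCC is a contiguous interval (guaranteed by the construction and \Cref{def:order}) together with the fact that each level's SCCs partition $V$; everything else is a direct calculation.
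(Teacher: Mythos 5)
Your proposal is correct and follows essentially the same route as the paper, which simply invokes the harmonic-sum calculation from Example 2 of \Cref{subsec:examples} for $E(G)$ (with the factor-of-two adjustment for the absolute value) and bounds the contribution of $H$ separately; your per-level harmonic bound for the star-edges is in fact stronger than needed, since each star-edge already satisfies $1/\pi(e)\le 1$ and there are only $O(n)$ of them per level.
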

The proof is exactly the same as in Example 2 of \Cref{subsec:examples}, but we bound the contribution of $E(G)$ and $H$ separately\footnote{See \Cref{sec:parallel-edges} to see what changes when $\Gin$ has parallel edges.}.

\subsection{Verifying the Path-Sum Constraints}
Fix an arbitrary $(s,t)$-path $P$ in $G$ with $\ell(P) \le n/\eps$ and $d(P) \le n/\eps$ for the remainder of this section, where our aim is to exhibit an $(s,t)$-path $P'$ in $G \cup H$ which witnesses that the \PSC for $t$ is satisfied.
Note that although our description of $P'$ is constructive, the algorithm itself never computes $P'$; we simply need to show the existence of such a $P'$ to verify the \PSCs.
First, a couple of definitions which we use to construct $P'$.

\begin{definition}[Large SCCs]
\label{def:dense-large}
    An SCC $C$ at level $i$ of the LDD Hierarchy is said to be \emph{large} if $\card{C} \ge D_i = n/2^i$.
\end{definition}

\begin{definition}[Long and Short SCCs]
\label{def:dense-long}
    Let $D_i = n/2^i$.
    An SCC $C$ at level $i$ of the LDD Hierarchy is said to be \emph{long} for $P$ if $\ell(P \cap G[C]) > D_i/\eps$ or $d(P \cap G[C]) > D_i/\eps$.
    Otherwise, we say that $C$ is \emph{short} for $P$.
\end{definition}

Large and long SCCs are ``troublesome'' in that they may contribute too much to $\pi(P)$ (hence why we look to $\pi(P')$ instead).
For example, if $C$ is a large SCC at level $\log n - 1$ (the bottom level), then the ``average" edge $(u,v) \in E(C)$ has $\pi(u,v) = |\tau(v) - \tau(u)| \sim |C|/2$, so if $\card{P \cap G[C]} = \Omega(\card{C})$, then $\pi(P \cap G[C])$ could be as much as $\Theta(\card{C}^2)$, which is too much when $\card{C}$ is large.
Similarly, the number of back-edges in $P \cap G[C]$ controls the magnitude of $\pi(P \cap G[C])$ and if $C$ at any level of the LDD Hierarchy were to be long for $P$, the number of back edges would be very large in expectation.
Ideally, we can contract large and long SCCs to remove their contribution to $\pi(P)$ and pay an error in length and delay for such a contraction.
This is in fact easy to do for large SCCs.
However, an SCC may be simultaneously long for one path and short for another and we lack the foresight as to what $P$ is.
We instead simulate a contraction by replacing the part of a path going through a long SCC with up to two star-edges.

Intuitively, the path $P'$ that we construct is the same as the path $P$, except that we shortcut all the long SCCs with star-edges (simulating a contraction of said SCC), as well as large SCCs at the bottom level.
We now describe the construction process more formally.
\clearpage
\begin{tbox}
    \algname{Construction of $P'$ Given $P$}\\
    \textbf{Input:} An $(s,t)$-path $P$ with $\ell(P) \le n/\eps$ and $d(P) \le n/\eps$.\\
    \textbf{Output:} An $(s,t)$-path $P'$ which witnesses that the \PSC for $t$ is satisfied.
    \begin{enumerate}
        \item $P' \gets P$.
        \item For $i \in [\log n - 1]$ in increasing order:
            \begin{enumerate}
                \item $\cC \gets $ SCCs of level $i$ of the LDD Hierarchy (i.e. the SCCs of $G \setminus \paren{\bigcup_{j \le i} \eback{j}}$).
                \item For each SCC $C \in \cC$
                    that does not have an ancestor SCC that is long for $P$:
                    \begin{enumerate}
                        \item Let $c$ be the representative of $C$ (i.e. the center of the star).
                        \item Let $Q$ be the subpath from $u$ to $v$, where (if they exist) $u$ is the first vertex of $P'$ in $C$ and $v$ is the last vertex of $P'$ in $C$.
                        \item If $C$ is long for $P$, then $P' \gets (P' \setminus Q) \cup \set{uc, cv}$.
                        \item Else if $i = \log n - 1$ and $C$ is large, then $P' \gets (P' \setminus Q) \cup \set{uc, cv}$.
                    \end{enumerate}
            \end{enumerate}
        \item Output $P'$
    \end{enumerate}
\end{tbox}

Note that for any SCC $C$ in the LDD Hierarchy, $G[C] \cap P' \subseteq G[C] \cap P$.
Furthermore, the next two observations will come in useful for formally bounding $\expect{\pi(P')}$.

\begin{observation}
\label{obs:dense-has-back-then-short}
    Let $C$ be an SCC at level $i - 1$ of the LDD Hierarchy, and let ${X = \eforward{i} \cup \eback{i}}$ (i.e. the forward-edges and back-edges in level $i$).
    If $G[C] \cap P' \cap X \neq \emptyset$, then $C$ is short for $P$.
\end{observation}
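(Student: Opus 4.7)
The plan is to prove the contrapositive: assuming $C$ at level $i-1$ is long for $P$, I will show $G[C] \cap P' = \emptyset$, which immediately yields $G[C] \cap P' \cap X = \emptyset$. As a first step I let $C^*$ be the topmost ancestor of $C$ that is long for $P$ (taking $C^*=C$ if $C$ itself has no long proper ancestor), and I let $j$ denote the level of $C^*$. Because $C^*$ has no long proper ancestor, when the outer loop of the construction reaches iteration $j$, the filter ``does not have an ancestor SCC that is long for $P$'' does not exclude $C^*$, and because $C^*$ is long, the construction performs the surgery $P' \gets (P' \setminus Q) \cup \set{uc, cv}$ at $C^*$, where $u$ and $v$ are the first and last vertices of the then-current $P'$ in $C^*$. (These vertices exist because $P$ must visit $C^*$; otherwise $\ell(P \cap G[C^*]) = d(P \cap G[C^*]) = 0$, contradicting that $C^*$ is long.)

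The second step is to verify that immediately after this surgery $P' \cap E(G[C^*]) = \emptyset$. Any edge of $P'$ with both endpoints in $C^*$ must have been in $Q$ by the choice of $u, v$ as the first and last vertices of $P'$ in $C^*$, so all such edges are removed. The replacement edges $uc, cv$ lie in $\estar{j} \subseteq H$ and are disjoint from $E(G)$. Moreover, surgeries occurring before iteration $j$ act on long SCCs at levels $<j$: by the maximality of $C^*$ such SCCs are not proper ancestors of $C^*$, and by their level they are not descendants of $C^*$; hence they are disjoint from $C^*$ and do not affect $P' \cap E(G[C^*])$ either.

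The third step is to rule out subsequent iterations reintroducing edges of $G[C^*]$ into $P'$. For any SCC $C'$ at level $>j$ considered in a later iteration, either $C' \subseteq C^*$ (in which case $C^*$ is a long ancestor of $C'$ and the filter skips $C'$) or $C'$ is disjoint from $C^*$ (in which case its surgery replaces a subpath contained in $G[C']$ and does not touch $P' \cap E(G[C^*])$). Hence $P' \cap E(G[C^*]) = \emptyset$ persists through the end of the construction, and since $C \subseteq C^*$ we conclude $G[C] \cap P' = \emptyset$, completing the contrapositive. The main obstacle is arguing that the ``no long ancestor'' filter is precisely what is needed to preserve the emptiness invariant across later iterations; the rest is a careful but mechanical unpacking of the construction of $P'$.
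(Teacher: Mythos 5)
Your proposal is correct and takes essentially the same approach as the paper: both argue the contrapositive, observing that if $C$ is long then the construction replaces the relevant portion of $P'$ with star-edges, so $G[C]\cap P'$ contains no edges of $E(G)\supseteq X$. The paper's proof is a two-line appeal to ``by construction''; yours carefully unpacks that appeal (handling the topmost long ancestor and the persistence of the emptiness invariant across iterations), with only cosmetic imprecisions that do not affect correctness.
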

\begin{proof}
    We prove the contrapositive of the statement.
    If $C$ was long for $P$, then $G[C] \cap P'$ would be empty or comprised of star-edges, by construction; that is, $G[C] \cap P' \subseteq H$.
    Since $X \subseteq G$ is disjoint from~$H$, we must have $G[C] \cap P' \cap X = \emptyset$.
\end{proof}

\begin{observation}
\label{obs:dense-ldd-edges}
    Let $C$ be an SCC in level $i - 1$ of the LDD Hierarchy that is short for $P$, or $C = V$.
    \[\expect{\card{G[C] \cap \eback{i} \cap P'}} = O(\log^3(n) / \eps).\]
\end{observation}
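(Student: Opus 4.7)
The plan is to reduce the claim to a direct application of the Sparse Hitting property of the LDD invoked at level $i$, combined with the definition of ``short for $P$''. First, I would observe that by construction the algorithm only removes edges from $P$ or replaces subpaths with star-edges from $H$, and $H$ is disjoint from $E(G)$. Since $\eback{i} \subseteq E(G)$, this gives the monotonicity $G[C] \cap \eback{i} \cap P' \subseteq G[C] \cap \eback{i} \cap P$, so it suffices to bound $\expect{\card{G[C] \cap \eback{i} \cap P}}$.

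Next, I would unpack what $\eback{i}$ looks like restricted to $G[C]$: at level $i$ of the hierarchy, the algorithm invokes $\LDD(\lendel{G[C]}, D_i)$ with $D_i = n/2^i$ on every SCC $C$ from level $i-1$. By the Sparse Hitting property (\Cref{def:ldd}) of this LDD with overhead $O(\log^3 n)$ guaranteed by \Cref{prop:ldd}, every edge $e \in E(G[C])$ satisfies $\prob{e \in \eback{i}} = O\paren{w(e)\log^3(n)/D_i + 1/\poly(n)}$, where $w(e)=\ell(e)+d(e)$ is the combined length-delay weight. By linearity of expectation,
\[
\expect{\card{G[C]\cap \eback{i}\cap P}} \;=\; \sum_{e\in P\cap G[C]} \prob{e\in \eback{i}} \;=\; O\!\left(\frac{w(P\cap G[C])\log^3 n}{D_i}\;+\;\frac{\card{P\cap G[C]}}{\poly(n)}\right).
\]
The $1/\poly(n)$ term contributes negligibly since $\card{P\cap G[C]} \le n$ and the exponent of the polynomial can be chosen arbitrarily large.

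The key step is to bound $w(P \cap G[C])$. If $C$ is short for $P$ at level $i-1$, then by \Cref{def:dense-long} applied with $D_{i-1} = 2D_i$, we have $\ell(P\cap G[C])\le D_{i-1}/\eps$ and $d(P\cap G[C])\le D_{i-1}/\eps$, so $w(P\cap G[C]) \le 2D_{i-1}/\eps = 4D_i/\eps$. The corner case $C = V$ is handled identically: level $0$ is effectively the whole graph with $D_0 = n$, and the hypotheses $\ell(P)\le n/\eps$ and $d(P)\le n/\eps$ give $w(P)\le 2n/\eps = 4D_1/\eps$. Substituting into the expression above yields
\[
\expect{\card{G[C]\cap \eback{i}\cap P}} = O\!\left(\frac{(D_i/\eps)\log^3 n}{D_i}\right) = O(\log^3(n)/\eps),
\]
as claimed. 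I do not anticipate a serious obstacle here; the only subtlety is making sure the argument is run against the level-$i$ LDD applied inside $C$ (so that the diameter parameter is $D_i$, not $D_{i-1}$) and being careful that we use the ``short'' hypothesis at the \emph{parent} level $i-1$ to bound the combined weight of $P$ inside $C$.
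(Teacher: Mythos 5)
Your proposal is correct and follows essentially the same route as the paper's proof: monotonicity $G[C]\cap\eback{i}\cap P'\subseteq G[C]\cap\eback{i}\cap P$, the Sparse Hitting bound of \Cref{prop:ldd} applied to the level-$i$ LDD with diameter parameter $D_i$, linearity of expectation, and the shortness of $C$ at level $i-1$ (or the global bounds on $P$ when $C=V$) to bound $w(P\cap G[C])=O(D_i/\eps)$. Your treatment is just a slightly more explicit version of the paper's three-line calculation.
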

\begin{proof}
    \begin{align*}
        \expect{\card{G[C] \cap \eback{i} \cap P'}}
        & \le
        \expect{\card{G[C] \cap \eback{i} \cap P}}
        \tag{$G[C] \cap \eback{i} \cap P' \subseteq G[C] \cap \eback{i} \cap P$}
        \\
        & =
        \sum_{e \in G[C] \cap \eback{i} \cap P} O\paren{\frac{w(e)}{D_i}\log^3 n + \frac{1}{\poly(n)}}
        \tag{\Cref{prop:ldd} on level $i$ of the LDD Hierarchy}
        \\
        & =
        O\paren{\log^3(n) / \eps}.
        \tag{Linearity of Expectation, and $C$ is short for $P$}
    \end{align*}
\end{proof}
We are now ready to bound $\expect{\pi(P')}$. Let us start with the following observation.
\begin{observation}
\label{obs:dense-forward}
    The total over all $i$ that $\eforward{i}$ contributes to $\pi(P')$ is at most $(\pi(P') + n)/2$.
\end{observation}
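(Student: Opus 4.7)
The plan is to use a telescoping-displacement argument on the ordering $\tau$. The crucial structural fact, inherited from \Cref{def:order} and the construction of $\tau$ in \hyperref[alg:dense-tau]{Step (f)}, is that every forward edge goes \emph{up} in $\tau$: if $uv \in \eforward{i}$, then $u$ and $v$ lie in distinct SCCs $C_u, C_v$ at level $i$ with $C_u$ preceding $C_v$ topologically, and so the interval $\tau(C_u)$ precedes $\tau(C_v)$. Hence $\tau(v) - \tau(u) > 0$, and the contribution of $uv$ to $\pi(P')$ equals this positive displacement.

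Next, I would observe that $P'$ remains a walk from $s$ to $t$: each rerouting step replaces a subpath $Q$ from $u$ to $v$ by the two-edge walk $uc, cv$, which has the same endpoints, so endpoints are preserved throughout. Consequently $\sum_{uv \in P'}(\tau(v) - \tau(u))$ telescopes to $\tau(t) - \tau(s)$, which has absolute value at most $n - 1 < n$.

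Let $F = \sum_{uv \in P',\ \tau(v)>\tau(u)}(\tau(v)-\tau(u))$ and $B = \sum_{uv \in P',\ \tau(v)<\tau(u)}(\tau(u)-\tau(v))$, so that $F + B = \pi(P')$ and $F - B = \tau(t) - \tau(s) \leq n$. Let $F^\star = \pi\bigl(P' \cap \bigcup_i \eforward{i}\bigr)$. By the first paragraph, every forward edge contributes only to $F$, so $F^\star \leq F$; dually, every edge that contributes to $B$ is a back-edge, intra-component edge, or star-edge, hence $B \leq \pi(P') - F^\star$. Combining,
\[
F^\star \;\leq\; F \;=\; B + (\tau(t)-\tau(s)) \;\leq\; (\pi(P') - F^\star) + n,
\]
and rearranging yields $F^\star \leq (\pi(P') + n)/2$, which is the claim.

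The only non-routine point is verifying that $P'$ is still a walk from $s$ to $t$ after the star-edge replacements (otherwise the telescoping step fails); this falls out immediately from the construction, since each replacement preserves endpoints. Every other step is a direct consequence of the definition of $\tau$ and an arithmetic manipulation, so no real obstacle is expected.
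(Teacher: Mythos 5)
Your proof is correct and follows essentially the same telescoping argument as the paper: both define the positive and negative displacement sums along $P'$, use that the total displacement telescopes to $\tau(t)-\tau(s)\le n$, and conclude that the positive part (which is all that forward-edges can contribute to) is at most $(\pi(P')+n)/2$. The extra checks you make (that forward edges indeed go up in $\tau$ and that the star-edge replacements preserve the endpoints of the walk) are correct and only make explicit what the paper leaves implicit.
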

\begin{proof}
    Consider the sums
    \begin{align*}
        X_{+} & = \paren{\sum_{\substack{uv \in P':\\\tau(v) - \tau(u) > 0}} \tau(v) - \tau(u)},
        \\
        X_{-} &= \paren{\sum_{\substack{uv \in P':\\\tau(v) - \tau(u) < 0}} \tau(v) - \tau(u)}.
    \end{align*}
    By telescoping, $\card{X_{+} + X_{-}}  = \card{\tau(s) - \tau(t)}$ and so we have $X_{+} + X_{-} \le n$.
    Note that \linebreak ${\pi(P') = X_{+} - X_{-} \ge 2X_{+} - n}$, and consequently $X_{+} \le (\pi(P') + n)/2$.
    Since $\eforward{i}$ can only contribute to the $X_{+}$ term, the conclusion follows.
\end{proof}

\begin{lemma}
\label{lem:dense-pi}
    $\expect{\pi(P')} = O(n \log^4(n) / \eps)$.
\end{lemma}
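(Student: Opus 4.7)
My plan is to decompose $\pi(P')$ into contributions from three disjoint edge categories---star-edges from $H$, forward-edges $\bigcup_i \eforward{i}$, and back-edges $\bigcup_i \eback{i}$---and bound each. The preliminary step I would do is to confirm that this decomposition exhausts $P'$: every edge of $G$ in $P'$ either crosses some $\eforward{i}\cup \eback{i}$ interface, or lies strictly inside a bottom-level ($i = \log n - 1$) SCC $C$; but $D_{\log n - 1} = 2$ forces any nontrivial such $C$ to be large, so it is replaced by a star-pair in the construction and leaves no internal edges in $P'$.

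For the star-edges I would argue deterministically. Since $\tau(C)$ is an interval of length $|C|$ for every SCC $C$ in the hierarchy, each star-edge used has $\pi \le |C|$, and at most two are added per shortcut SCC. The vertex-disjointness of SCCs at a fixed level $i$ yields $\sum_C |C| \le n$ per level and hence $O(n \log n)$ in total. For the back-edges I would combine the two observations already proved: \Cref{obs:dense-has-back-then-short} restricts any back-edge of $P'$ at level $i$ to lie inside a level-$(i-1)$ SCC that is short for $P$, and \Cref{obs:dense-ldd-edges} bounds the expected count of such back-edges per short $C$ by $O(\log^3(n)/\eps)$. Multiplying by the per-edge bound $\pi \le |C|$ and summing over the disjoint short SCCs gives $O(n\log^3(n)/\eps)$ per level, totalling $O(n\log^4(n)/\eps)$ across the $\log n$ levels.

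Combined with the forward-edge bound $(\pi(P')+n)/2$ of \Cref{obs:dense-forward}, I would obtain
\begin{equation*}
    \pi(P') \;\le\; O(n\log n) + \tfrac{1}{2}(\pi(P') + n) + \sum_i \sum_{e \in P' \cap \eback{i}} \pi(e),
\end{equation*}
which after rearranging to $\pi(P') \le 2 \sum_i \sum_{e \in P' \cap \eback{i}} \pi(e) + O(n\log n) + n$ and taking expectations yields the claimed $\expect{\pi(P')} = O(n \log^4(n)/\eps)$. The step I expect to require the most care is the preliminary exhaustion argument: it hinges on the algorithmic coincidence that $D_{\log n - 1} = 2$ makes every nontrivial bottom-level SCC automatically large, together with bookkeeping to ensure the ``no long ancestor'' clause in the construction does not interfere with either the per-level disjointness powering the star-edge bound, or the applicability of the two cited observations at each level of the hierarchy.
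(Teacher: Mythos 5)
Your proof is correct and follows essentially the same route as the paper: handle forward-edges via \Cref{obs:dense-forward} and rearrange, bound star-edges deterministically by $O(n\log n)$ via per-level vertex-disjointness, and bound back-edges by chaining \Cref{obs:dense-has-back-then-short} with \Cref{obs:dense-ldd-edges} to get $O(n\log^4(n)/\eps)$. The only (harmless) difference is at the bottom level: the paper does not argue that intra-SCC edges vanish entirely but simply charges the surviving edges in size-$\le 2$ SCCs an extra $O(n)$, which sidesteps the exhaustion bookkeeping you flagged as delicate.
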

\begin{proof}
    It suffices to bound the expected contribution to $\pi(P')$ by only the non-forward edges since the contribution by forward-edges to $\pi(P')$ is at most $(\pi(P') + n)/2$ by \Cref{obs:dense-forward}. As a result, $\expect{\pi(P')}$ is at most $n$ plus twice the expected contribution of non-forward edges to $\pi(P')$.
    
    That means we will show bounds for the contribution to $\expect{\pi(P')}$ by back-edges, star-edges, and (unlabelled) edges contained in the SCCs at level $\log n - 1$ of the LDD Hierarchy.
    Throughout, we use the fact that for any SCC $C$ in the LDD Hierarchy and any $u,v \in C$, we have $\pi(uv) = \card{\tau(u) - \tau(v)} \le \card{C}$, which follows from $\tau$ satisfying \Cref{def:order}.

    \underline{Back-edges}:
    Let $\cC$ be the SCCs of level $i - 1$ of the LDD Hierarchy.
    The total contribution to $\expect{\pi(P')}$ by back-edges $\eback{i}$ at level $i$ is at most
    \begin{align*}
        \sum_{uv\in \eback{i}} \pi(uv)\cdot \prob{uv\in P'}
        & = \sum_{C\in\cC}\sum_{uv\in \eback{i}:u,v\in C} (\tau(u)-\tau(v))\cdot \prob{uv\in P'}\\
        & \le \sum_{C \in \cC} \card{C} \cdot \expect{\card{G[C] \cap \eback{i} \cap P'}}\\
        & \le
        \sum_{\substack{C \in \cC:\\ C \textrm{ is short for } P}} \card{C} \cdot \expect{\card{G[C] \cap \eback{i} \cap P}}
        \tag{$G[C] \cap \eback{i} \cap P' \subseteq G[C] \cap \eback{i} \cap P$ and \Cref{obs:dense-has-back-then-short}}
        \\
        & =
        O(\log^3(n) / \eps) \sum_{C \in \cC} \card{C}
        \tag{\Cref{obs:dense-ldd-edges}}
        \\
        & =
        O(n \log^3(n) / \eps).
        \tag{SCCs are vertex disjoint}
    \end{align*}
    Summing over all $\log n$ levels gives an upper bound of $O(n \log^4(n) / \eps)$ on the expected contribution
    of back edges to $\pi(P')$.
    
    \underline{Star-edges}:
    Now let $\cC$ be the SCCs of level $i$ of the LDD Hierarchy.
    The total contribution to $\expect{\pi(P')}$ by star-edges $\estar{i}$ at level $i$ is at most $\sum_{C \in \cC} 2\card{C} \le 2n$.
    Summing over all $\log n$ levels gives an upper bound of $O(n \log n)$.

    \underline{Edges contained in level $\log n - 1$ SCCs}:
    Only edges in SCCs of size at most $2$ at level $\log n - 1$ can contribute to $\expect{\pi(P')}$ since the parts of $P'$ passing through large SCCs would have been replaced by star-edges.
    Each such edge $e$ has $\pi(e) \le 2$ and there can be at most $n$ such edges, giving a total contribution to $\expect{\pi(P')}$ of at most $2n$.

    \underline{Total}:
    Summing up the bounds gives $O(n \log^4(n) / \eps)$.
\end{proof}

Finally, we show that $P'$ approximates $P$ in length and delay.
Note that every time $P'$ uses a star-edge $uv \in \estar{i}$, it incurs as much as $D_i$ additive error (in both length and delay), because all star-edges in $\estar{i}$ have $\ell(uv) = d(uv) = D_i$, while the $(u,v)$-subpath of $P$ could be arbitrarily short.
We need to show that the sum of all these errors is no more than $O(n \log n)$.

\begin{observation}
\label{obs:dense-error}
    $\ell(P') \le (1 + O(\eps \log n))n/\eps$ and $d(P') \le (1 + O(\eps \log n))n/\eps$.
\end{observation}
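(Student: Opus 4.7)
The plan is to bound the additive errors $\ell(P') - \ell(P)$ and $d(P') - d(P)$ each by $O(n \log n)$; combined with $\ell(P), d(P) \leq n/\eps$, this gives $\ell(P'), d(P') \leq n/\eps + O(n \log n) = (1 + O(\eps \log n))\, n/\eps$, which is exactly the claim.

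The only modifications to $P$ during the construction of $P'$ are shortcuts that replace a subpath $Q$ of the current $P'$ by two star-edges $\{uc, cv\} \subseteq \estar{i}$ at some level~$i$. Since every edge in $\estar{i}$ has both length and delay equal to $D_i = n/2^i$, and $\ell(Q), d(Q) \ge 0$, each such shortcut increases $\ell(P')$ and $d(P')$ by at most $2D_i$. It therefore suffices to bound, for each level $i$, the number $N_i$ of SCCs where a shortcut is performed, and then verify $\sum_i N_i \cdot D_i = O(n \log n)$.

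Fix a level $i$. A shortcut is applied at an SCC $C$ at level $i$ only if $C$ has no long ancestor and either (i) $C$ is long for $P$, or (ii) $i = \log n - 1$ and $C$ is large. Since the SCCs at level $i$ are vertex-disjoint, $\sum_C \ell(P \cap G[C]) \le \ell(P) \le n/\eps$ and analogously for $d$. By \Cref{def:dense-long}, each long $C$ contributes more than $D_i/\eps$ to one of these two sums, so at most $2n/D_i$ SCCs at level $i$ are long; consequently, type (i) contributes at most $2 \cdot (2n/D_i) \cdot D_i = O(n)$ error per level, totalling $O(n \log n)$ across all $O(\log n)$ levels. For type (ii), large SCCs at level $\log n - 1$ are vertex-disjoint with $|C| \geq D_{\log n - 1}$, so there are at most $n/D_{\log n - 1}$ of them, contributing total error $O(n)$.

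Combining the two contributions yields $O(n \log n)$ additive error in both length and delay, proving the observation. The main subtlety to keep in mind is that the ``no long ancestor'' check guarantees each SCC is shortcut at most once across all levels, so the per-level error bounds can be summed without double-counting; the inequalities $\sum_C \ell(P \cap G[C]) \le \ell(P)$ and its delay analogue rely only on vertex-disjointness at a single level, and the fact that the replaced subpath $Q$ may not equal $P \cap G[C]$ (since $P$ can leave and re-enter $C$) does not matter, because we only need the crude upper bound $\ell(Q), d(Q) \ge 0$.
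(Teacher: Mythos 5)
Your proof is correct and follows essentially the same approach as the paper: bound the additive error per shortcut by $2D_i$, bound the number of long SCCs per level by $2n/D_i$ (via vertex-disjointness and the $n/\eps$ bounds on $\ell(P)$, $d(P)$), bound the number of large SCCs at level $\log n - 1$ by $n/2$, and sum over $O(\log n)$ levels. You spell out a few steps the paper leaves implicit (e.g., why there are at most $n/2$ large bottom-level SCCs, and why the replaced subpath $Q$ need not coincide with $P \cap G[C]$), but the argument is the same.
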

\begin{proof}
    First, note that there are at most $n/2$ large SCCs at level $\log n - 1$ of the LDD Hierarchy.
    The star-edges in $\estar{\log n - 1}$ that were added to $P'$ from large SCCs thus accumulate a total error of $2n$.

    Next, note that there are at most $2n/D_i$ SCCs at level $i$ of the LDD Hierarchy that are long for $P$ since $\ell(P) \le n / \eps$ and $d(P) \le n / \eps$.
    The star-edges in $\estar{i}$ that were added to $P'$ from long SCCs thus accumulate a total error of $4n$.

    In aggregate over all $\log n$ levels, there is thus an additive $O(n \log n)$ error in length and delay.
    That is, $\ell(P') \le \ell(P) + O(n \log n) \le n / \eps + O(n \log n) = (1 + O(\eps \log n))n/\eps$.
    A similar calculation gives the same bound for $d(P')$.
\end{proof}

\subsection{Putting the Pieces Together}

Here we combine the results of this section into a proof of the following theorem that implies~\Cref{thm:dense-simple}.
\begin{restatable}[$\rsp$ on Dense Graphs]{theorem}{maindense}
\label{thm:dense}
    There is a $(1 + \eps, 1 + \eps)$-approximate algorithm for $\rsp$ that runs in $O\paren{n^2 \log^9(n) \log(nW) / \eps^3}$ time, where $W$ is the aspect ratio of edge lengths.
    The algorithm is Monte Carlo randomized and its output is correct with high probability.
\end{restatable}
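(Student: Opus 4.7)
The strategy is to assemble the pieces established in this section into a Gap $\rsp$ algorithm via \Cref{lem:gap-solver}, and then invoke the boosting reduction \Cref{lem:reduction-to-gap} to obtain the full $(1+\eps,1+\eps)$-approximate $\rsp$ algorithm with high probability. The hypotheses of \Cref{lem:gap-solver} demand (i) a Pareto Frontier Preserving $H$, (ii) a frequency function $\pi$ that together with $H$ $\alpha$-satisfies the Path-Sum Constraints, and (iii) an efficient construction of $\pi,H$; I will verify each in turn.

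First I would combine \Cref{obs:dense-H-pfp}, \Cref{lem:dense-pi}, and \Cref{obs:dense-error} to conclude that $\pi, H$ $4$-satisfy the Path-Sum Constraints for every $t \in V$. Concretely, $H$ is \PFP by \Cref{obs:dense-H-pfp}; given any source-target pair $(s,t)$ and any $(s,t)$-path $P \subseteq G$ with $\hatell(P), \hatd(P) \le n/\eps$, the path $P'$ produced by the \textsc{Construction of $P'$ Given $P$} procedure satisfies $\hatell(P'), \hatd(P') \le (1+O(\eps \log n))n/\eps = (1+\apx)n/\eps$ by \Cref{obs:dense-error}, and $\expect{\pi(P')} = O(n \log^4(n)/\eps)$ by \Cref{lem:dense-pi}, matching \Cref{def:path-sum-constraints} with $\alpha = 4$.

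Next I would bound the construction time of $\pi, H$. Each invocation of the LDD on a subgraph runs in time $O((m'+n'\log\log n')\log^2 n')$ by \Cref{prop:ldd}, and SCCs in topological order take linear time by \Cref{prop:scc}; summing over all SCCs at a given level of the hierarchy, the total work at that level is $O((m+n)\log^2 n)$. Since there are only $\log n$ levels, $\pi, H$ can be built in $T = O((m+n)\log^3 n)$ time. Plugging $\Pi = O(n \log n)$ from \Cref{obs:dense-Pi} and $\alpha = 4$ into \Cref{lem:gap-solver}, Gap $\rsp$ can be solved (with one-sided $1/2$ success probability per target) in time $O(T + n\Pi \log^4(n)/\eps^2) = O((m+n)\log^3(n) + n^2\log^5(n)/\eps^2)$. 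Since $m \le n^2$, this simplifies to $O(n^2 \log^5(n)/\eps^2)$.

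Finally I would feed this into \Cref{lem:reduction-to-gap} with $T(m,n) = n^2 \log^5 n$ and the $\eps$-exponent $\alpha = 2$, obtaining a $(1+\eps,1+\eps)$-approximate $\rsp$ algorithm running in $O((T(m,n)+n)\log^{\alpha+2}(n)\log(nW)/\eps^{\alpha+1}) = O(n^2 \log^9(n)\log(nW)/\eps^3)$ time and correct with high probability, which is exactly the claim. The only subtle point, which I would handle in a short remark, is the interplay between expectation and success probability: the $O(\log^4 n / \eps)$ expected bound on $\pi(P')$ yields, via Markov's inequality, the $1/2$ probability that $\pidp{\pi}$ run at depth $h = \Theta(n \log^4(n)/\eps)$ detects $P'$ -- and this is exactly the one-sided error guarantee that \Cref{lem:reduction-to-gap} is designed to boost. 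The main bookkeeping obstacle is simply tracking the accumulation of polylogarithmic factors through the chain \Cref{lem:dense-pi} $\to$ \Cref{lem:gap-solver} $\to$ \Cref{lem:reduction-to-gap}, which, as computed above, produces the claimed $\log^9(n)\log(nW)$ factor.
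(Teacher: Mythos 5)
Your proposal is correct and follows essentially the same route as the paper's own proof: verify that $H$ is \PFP (\Cref{obs:dense-H-pfp}), that $\pi,H$ $4$-satisfy the \PSCs via \Cref{lem:dense-pi,obs:dense-error}, that $\Pi = O(n\log n)$ (\Cref{obs:dense-Pi}) and the construction time is $\Ot(m+n)$, then chain \Cref{lem:gap-solver} into \Cref{lem:reduction-to-gap}. Your bookkeeping of the polylogarithmic and $\eps$ factors matches the paper's exactly.
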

\begin{proof}\ 
    \begin{itemize}
        \item Constructing $H$ and $\pi$ takes $T = O\paren{(m + n \log\log n) \log^3 n}$ since each iteration of the main loop which constructs the LDD Hierarchy is dominated by $O\paren{(m + n \log\log n) \log^2 n}$ by \Cref{prop:ldd,prop:scc}.
        \item By \Cref{obs:dense-H-pfp}, $H$ is \PFP.
        \item By \Cref{obs:dense-Pi}, $\Pi = O(n \log n)$.
        \item By \Cref{lem:dense-pi,obs:dense-error}, the \PSCs are $4$-satisfied by $\pi,H$ for all $t \in V$.
    \end{itemize}
    Therefore, \Cref{lem:gap-solver} allows us to solve Gap $\rsp$ in time $O\paren{n^2 \log^5 n / \eps^2}$.
    Plugging this solver for Gap $\rsp$ into \Cref{lem:reduction-to-gap} allows us to then get a $(1 + \eps, 1 + \eps)$-approximate solution to $\rsp$ in time $O\paren{n^2 \log^9(n) \log(nW) / \eps^3}$, completing \Cref{thm:dense}.
\end{proof}

\section{$\rsp$ on Sparse Directed Graphs}
\label{sec:sparse}

This is the most technical section of the paper, which is devoted to proving \Cref{thm:sparse-simple}.
We first sketch an algorithm for DAGs that runs slightly faster than $\Ot(mn^{2/3})$ for constant $\eps$, with the aim of introducing some new ideas that are used to prove \Cref{thm:sparse-simple}.
Then, we combine these ideas with those in \Cref{sec:dense} to generalize the algorithm for DAGs to arbitrary directed graphs.

\subsection{Motivation: $\rsp$ on Sparse Directed Acyclic Graphs}
For this subsection, let us ignore most of the machinery spelled out in \Cref{sec:framework} and focus on a $(1, 1 + \eps)$-approximate algorithm for DAGs.
This algorithm will be to the ideas of \Cref{thm:sparse-simple} as Example 2 in \Cref{subsec:examples} is to the ideas of \Cref{thm:dense-simple}.

\begin{tbox}
    \underline{\textbf{\textsc{$(1, 1 + \eps)$-approximate Algorithm for $\rsp$ on Sparse DAGs}}}
    \begin{enumerate}
        \item Topologically sort the vertices.
        \item Partition the vertices into contiguous (in the topological order) blocks $B_1, B_2, \ldots, B_{O(n/\blocksize)}$ of size $\blocksize$.
        \item \label{alg:dag-3} In each block $B_i$:
            \begin{enumerate}
                \item Randomly sample, with replacement, $\Theta\paren{\frac{\blocksize \log n}{\blockrun}}$ vertices $v \in B_i$ into $S_{B_i}$.
                \item Preprocess $(1, 1 + \eps)$-approximate All-Pairs $\rsp$ on $G[B_i]$ with delay thresholds in the range $[\delta, D]$, where $\delta = \eps D / n^2$ (recall $\delta$ defined in \Cref{sec:dp}, this is analogous).
                \item For all $(1 + \eps)^k \in [\delta, D]$ and for all $u, v \in S_{B_i}$, add the edge $uv$ with
                    \begin{align*}
                        & \ell(uv) \gets \textrm{ the length returned by All-Pairs } \rsp \textrm{ with delay } (1 + \eps)^k
                        \\
                        & d(uv) \gets (1 + \eps)^{k+1}
                    \end{align*}
                    into the multiset $H_{B_i}$.
            \end{enumerate}
        \item Set
            \begin{align*}
                \pi(uv) = 
                    \begin{cases}
                        \blocksize & \textrm{if } u \in B_i, v \in B_j \textrm{ for } B_i \neq B_j \quad \textcolor{blue}{\texttt{(inter-block)}}\\
                        \blocksize / \blockrun & \textrm{if } uv \in E(B_i) \textrm{ for some } i \quad \textcolor{blue}{\texttt{(intra-block)}}\\
                        \blocksize & \textrm{if } uv \in \bigcup_i H_{B_i}.
                    \end{cases}
            \end{align*}
        \item $\hat{G} \gets G \cup \paren{\bigcup_i H_{B_i}}$.
        \item Run $\pidp{\pi}$ on $\hat{G}$ up to depth $h = O(n)$ with delay threshold $(1 + \eps)D$.
    \end{enumerate}
\end{tbox}

For now it would be instructive to think of $\blockrun^2 = \blocksize = n^{2/3}$, which would yield an $\Ot(mn^{2/3})$ time $(1, 1 + \eps)$-approximate algorithm for $\rsp$.
Optimizing $\blocksize$ and $\blockrun$ more carefully allows us to obtain slightly faster running times across different ranges of edge densities $n^{1 + \alpha}$ for any real $\alpha \in [0, 1/2]$.

We now provide an informal sketch as to why the algorithm is correct, and how to compute its running time.
The goal here is to impart intuition; when it comes to our result on arbitrary directed graphs later, the proofs will be formal.

\begin{proof}[Sketch of Correctness]
\renewcommand{\qedsymbol}{``$\square$''}
    First, note that $\paren{\bigcup_i H_{B_i}}$ is \PFP, so the call to $\pidp{\pi}$ on $\hat{G}$ is sound.
    Let $P$ be a shortest $(s,t)$-path in $G$ with $d(P) \le D$.
    We will show the existence of an $(s,t)$-path $P'$ in $\hat{G}$ with $\ell(P') \le \ell(P)$ and $d(P') \le (1 + \eps)D$ and $\pi(P') = O(n)$.
    The call to $\pidp{\pi}$ then returns the length of an $(s,t)$-path $P''$ with $\ell(P'') \le \ell(P') \le \ell(P)$ and $d(P'') \le (1 + \eps)^2 D$.
    By rescaling $\eps$ by a constant factor, we can pretend that $d(P'') \le (1 + \eps) D$ and so the algorithm above gives $(1, 1 + \eps)$-approximate solutions to $\rsp$.
    It remains to show the existence of this path $P'$.

    \underline{Construction of $P'$}:
    Since $G$ is a DAG, notice that once $P$ leaves a block it can never reenter the same block.
    Fix an arbitrary block $B_i$.
    Let $Q$ be the section of $P$ passing through $B_i$.
    With high probability, there are vertices $u,v \in S_{B_i}$ such that $u$ is among the first $\blockrun$ vertices of $Q$ and $v$ is among the last $\blockrun$ vertices of $Q$.
    We replace the subpath of $Q$ from $u$ to $v$ with delay $d$ such that $(1 + \eps)^{k-1} \le d \le (1 + \eps)^k$ with the edge $uv \in H_{B_i}$ with delay $(1 + \eps)^k$; the resulting subpath thus uses at most $2\blockrun$ edges from $G$ and one edge from $H$.
    Repeating this over all blocks yields the $(s,t)$-path $P'$ with delay dilated by $(1 + \eps)$.
    Finally, note that:
    \begin{align*}
        \pi(P') = 
        \underbrace{O(n/\blocksize)}_{\substack{\textrm{Number of} \\ \textrm{inter-block}\\ \textrm{edges}}}
        \cdot 
        \blocksize
        + 
        \underbrace{O(\blockrun)}_{\substack{\textrm{Number of} \\ \textrm{intra-block}\\ \textrm{edges per block}}}
        \cdot
        \underbrace{O(n/\blocksize)}_{\substack{\textrm{Number of} \\ \textrm{blocks}}}
        \frac{\blocksize}{\blockrun}
        +
        \underbrace{O(n/\blocksize)}_{\substack{\textrm{Number of} \\ H\textrm{ edges}\\\textrm{per block}}}
        \cdot
        \blocksize
        = O(n).
    \end{align*}
    
\end{proof}
\begin{proof}[Sketch of Running Time]
\renewcommand{\qedsymbol}{``$\square$''}
    The running time of the algorithm is the sum of the time taken to compute All-Pairs $\rsp$ and construct $H$ in each block and the time taken by the call to $\pidp{\pi}$. We require the following more precise statement of~\Cref{thm:all-pairs-simple} proved in~\Cref{sec:all-pairs}.
\begin{restatable*}[All-Pairs $\rsp$]{theorem}{thmallpairs}\label{thm:all-pairs}
    Let $D_{\min}\leq D_{\max}$ be such that $D_{\max}/D_{\min}=O(\poly(n))$.
    After preprocessing $G$ in
    $O(mn\log^3(n)/\eps+n^2\log^5(n)/\eps^2)$ time,
    one can answer the following queries in constant time. Given $s,t\in V$ and $D\in [D_{\min},D_{\max}]$, compute the length of some $(s,t)$-path $P_{s,t}$ in $G$ such that $\ell(P_{s,t})\leq \rdist_G(s,t,D)$
    and $d(P_{s,t})\leq (1+\eps)\cdot D$.
    The data structure is Monte Carlo randomized, and the answers it produces are correct with high probability.
\end{restatable*}

    Computing All-Pairs $\rsp$ takes $\Ot(\card{E(B_i)}\blocksize)$ time on block $B_i$ by \Cref{thm:all-pairs} and so $\Ot(m \blocksize)$ in total.
    Computing $H$ takes $\Ot(\blocksize^2/\blockrun^2)$ time per block and thus $\Ot(n\blocksize/\blockrun^2)$ in total.
    Finally, with the aim of calculating the time taken by $\pidp{\pi}$, we compute $\Pi$ (see \Cref{thm:pi-dp}).
    \begin{align*}
        \Pi =
        \underbrace{O(m)}_{\substack{\textrm{Number of} \\ \textrm{inter-block}\\ \textrm{edges}}}
        \frac{1}{\blocksize}
        +
        \underbrace{O(m)}_{\substack{\textrm{Number of} \\ \textrm{intra-block}\\ \textrm{edges}}}
        \frac{\blockrun}{\blocksize}
        +
        \underbrace{\Ot(n\blocksize/\blockrun^2)}_{\card{H}}
        \frac{1}{\blocksize}.
    \end{align*}
    Plugging this in to \Cref{thm:pi-dp} up to depth $h = O(n)$, setting $\blockrun^2 = \blocksize = n^{2/3}$, and accounting for the time taken to compute All-Pairs $\rsp$ and $H$ yields a running time of $\Ot(mn^{2/3})$.
\end{proof}
The final idea which we have not gone over, but mentioned in passing, is to optimize $\blocksize$ and $\blockrun$;
on a graph with $n^{1 + \alpha}$ edges, we can set $\blocksize = n^x$ and $\blockrun = n^y$ with $y \le x$ and, ignoring $\log$ and $\eps$ factors in the calculation, solve a system over $x$ and $y$.
For example, when $\alpha = 0$, this gets us an $\Ot(mn^{3/5})$ time algorithm, which is marginally better than $\Ot(mn^{2/3})$.
Notice that $\blockrun^2 = \blocksize = n^{2/3}$ is a very reasonable choice since then $\card{H} = \Ot(n)$; we do not increase the edge density of the graph we pass into $\pidp{\pi}$.
Counter-intuitively, when $\alpha = 0$, we will have $\card{H} = n^{6/5} > n = \card{E(G)}$ so we are passing into $\pidp{\pi}$ a denser graph.
This increase in density is counterbalanced by the fact that $P'$ would use fewer edges per block, thus allowing us to set higher values for $\pi(e)$ for some of the edges inside the blocks, which would result in a decrease in $\Pi$.

Having acquired some ideas for sparse DAGs, we are finally ready to work on proving \Cref{thm:sparse-simple}.

\subsection{High Level Differences With the Dense LDD Hierarchy}

Recall now from \Cref{sec:framework} and, in particular, \Cref{lem:gap-solver}, our goal is to efficiently construct an auxiliary edge set $H$ that is \PFP and a frequency function $\pi$ so that $\pi, H$ satisfy the \PSCs for all $t \in V$.

Similarly to \Cref{sec:dense}, we will use an LDD Hierarchy to define $\pi$ and $H$.
However, the LDD Hierarchy here is different and substantially more complicated than what we have seen so far.

For starters, there will be six types of edges to be dealt with: \[\edead,\ \ehop,\ \eintra,\ \eback{i},\ \eforward{i},\ \estar{i}\] where the last three types are graded into $O(\log n)$ levels; we will define them later.
Next, there will be three types of SCCs in each level: small, medium, and large.
Finally, the LDD Hierarchy will be constructed in two phases:
\begin{enumerate}
    \item Phase 1: Not so much unlike the LDD Hierarchy in \Cref{sec:dense}, but we only recurse into medium SCCs.
    \item Phase 2: Go through each level and create blocks among the small SCCs, compute All-Pairs $\rsp$ in each block, and add auxiliary edges between sampled vertices to ``shortcut'' the block.
        This is analogous to what we have seen in the algorithm above for sparse DAGs.
\end{enumerate}

\subsection{The Sparse Directed Low-Diameter Decomposition Hierarchy}

Throughout this section, there will be two parameters whose values are to be determined later:
\begin{align*}
    \blockrun \le \blocksize.
\end{align*}
They serve similar roles as they did in the algorithm above for sparse DAGs. It will be instructive to think of $\blockrun^2 = \blocksize = n^{2/3}$ for now, which yields an algorithm with runtime $\Ot(mn^{2/3})$; the final setting will be slightly different.

Before proceeding, we make some simple definitions.

\begin{definition}[Small SCCs]
\label{def:sparse-small}
    An SCC at $C$ of the LDD Hierarchy is said to be \emph{small} if ${\card{C} \le \blocksize}$.
\end{definition}

We remind the reader of the notion of large SCCs (same as \Cref{def:dense-large}).

\begin{definition}[Large SCCs]
\label{def:sparse-large}
    An SCC $C$ at level $i$ of the LDD Hierarchy is said to be \emph{large} if $\card{C} \ge D_i = n/2^i$.
\end{definition}

\begin{definition}[Medium SCCs]
\label{def:sparse-med}
    An SCC $C$ at level $i$ of the LDD Hierarchy is said to be \emph{medium} if it is neither small nor large (at level $i$).
\end{definition}

Finally, recall the combined length-delay graph (see \Cref{def:lendel}).
And with this, we can define the first phase for constructing the LDD Hierarchy.

\clearpage
\begin{tbox}
    \algname{Low-Diameter Decomposition Hierarchy Phase 1}\\
    \textbf{Input:} A directed graph $G = (V, E, \ell, d)$, recursion depth $i$.\\
    \textbf{Output:} Data of a partial LDD Hierarchy.
    \begin{enumerate}
        \item If $i = O(\log (n / \blocksize))$, then: \textcolor{blue}{\texttt{(defining level $i$ data)}}
            \begin{enumerate}
                \item $D_i \gets n/2^i$.
                \item Add $\LDD(\lendel{G}, D_i)$ to $\eback{i}$.
                \item $\cC = \set{C_1, C_2, \ldots, C_k} \gets \SCC(G \setminus \eback{i})$.
                \item Add $(G \setminus \eback{i}) \setminus \bigcup_{C \in \cC} E(C)$ to $\eforward{i}$.
                \item For all SCCs $C$, add star-edges with length and delay $D_i$ to $\estar{i}$.
                \item For all large SCCs $C$, add $E(C)$ to $\edead$.
                \item For all medium SCCs $C$, recurse into $G[C]$ with recursion depth $i+1$.
            \end{enumerate}
    \end{enumerate}
\end{tbox}

Star-edges above are the same as in \Cref{sec:dense}: For any SCC $C$, pick a representative vertex $c \in C$ and add the edges $cv$ and $vc$ for all $v \in C$ to $H$ with $\ell(cv) = \ell(vc) = d(cv) = d(vc) = D_i$.

Once the first phase is complete, which has formed the tree structure of the LDD, we are ready to begin the second phase.
This phase mimics the block-forming and adding of shortcuts within each block that was done in the algorithm for sparse DAGs.
We first need a couple of definitions.

\begin{definition}[Block]
\label{def:block}
    A level $i$ \emph{block} is a collection of small SCCs at level $i$.
    We sometimes abuse notation and say that a block is the union of small SCCs that it contains.
\end{definition}

We will partition the small SCCs at level $i$ into blocks in the second phase, but we also want the blocks to obey the nested structure of the LDD Hierarchy (see \Cref{fig:hierarchy-sparse}).
The following definition captures this requirement (see \Cref{fig:chopped} for a pictorial example).
\begin{figure}[hb!]
    \centering
    \includegraphics[scale=0.6]{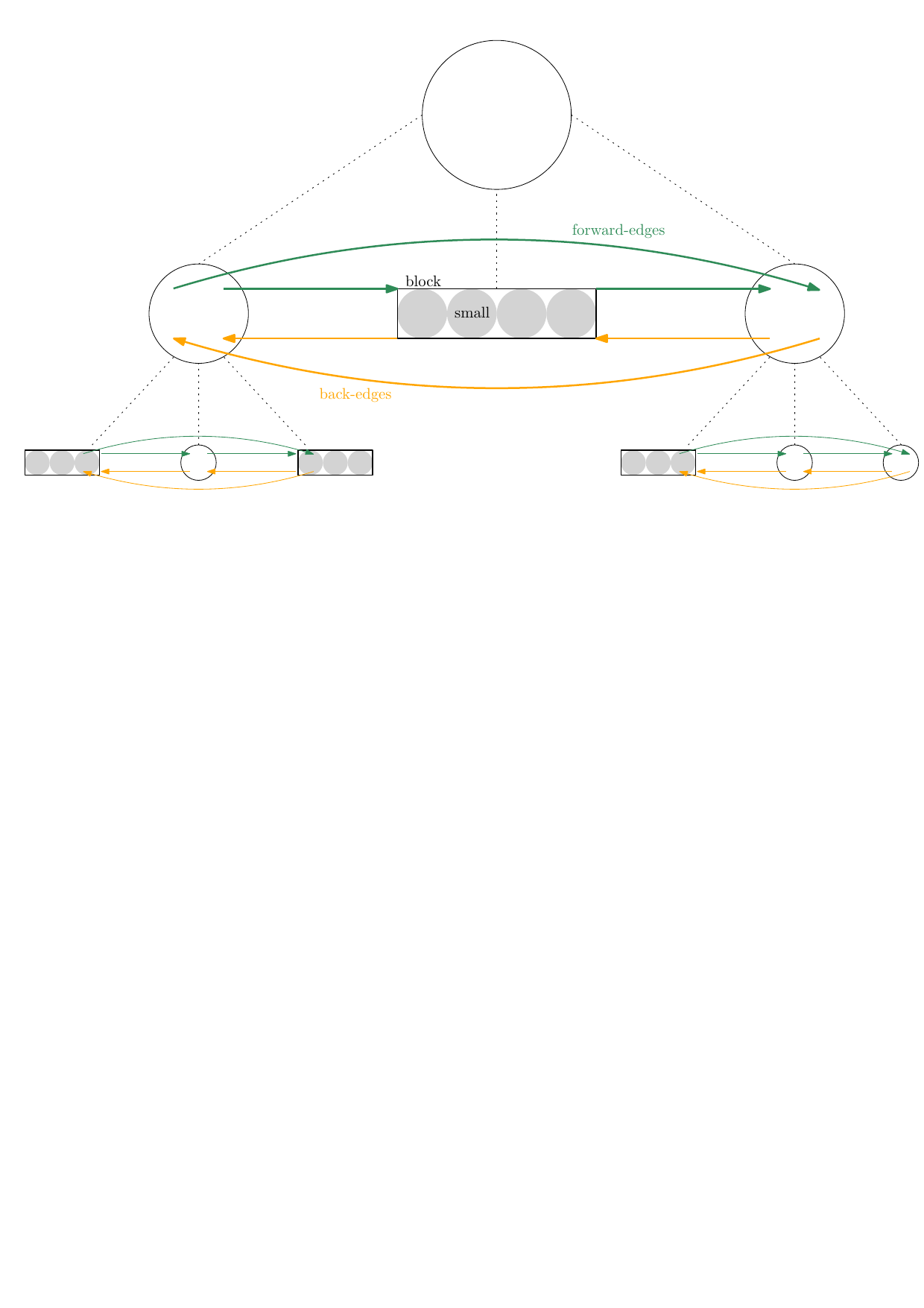}
    \caption{
        Blocks respecting the structure of the LDD Hierarchy.
        Unfilled circles represent non-small SCCs and gray blobs represent small SCCs.
        Rectangular boxes represent blocks.
        We can then talk about the ancestors of blocks or, for a fixed level, the topological ordering of blocks and non-small SCCs.
    }
    \label{fig:hierarchy-sparse}
\end{figure}

\begin{definition}[Finely Chopped Blocks]
   A level $i$ block $B$ is \emph{finely chopped} if the following holds:
    \begin{itemize}
        \item $B$ is contained in a level $i - 1$ SCC of the LDD Hierarchy.
        \item For any two SCCs $C_x$ and $C_z$ contained in $B$,  there does not exist a level $i$ SCC $C_y$ that is not small and who is topologically between $C_x$ and $C_z$. 
    \end{itemize}
\end{definition}

\begin{figure}[h]
    \centering
    \includegraphics[scale=0.7]{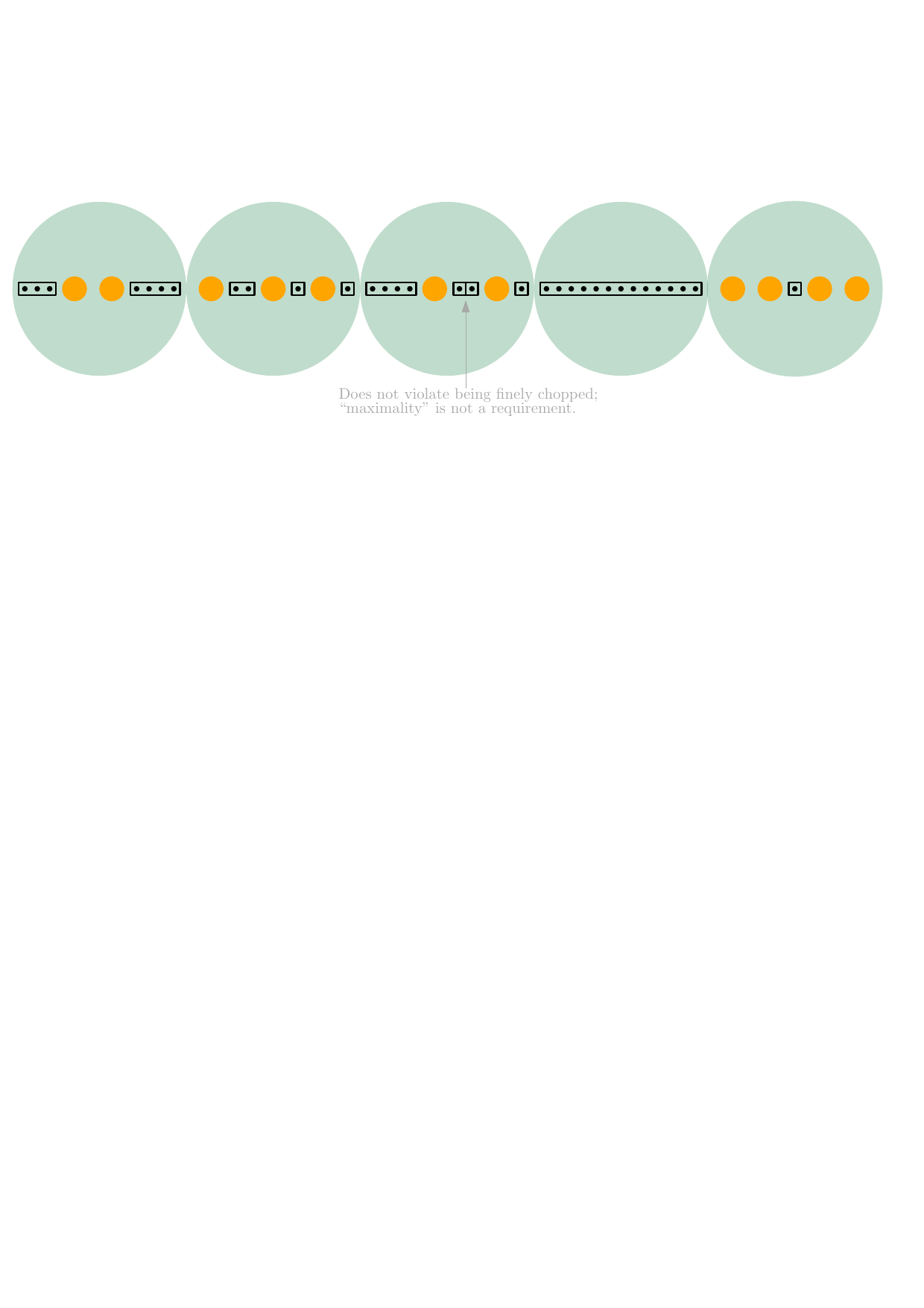}
    \caption{
        A (not drawn-to-scale) depiction of finely chopped level $i$ blocks.
        Green blobs represent level $i - 1$ SCCs, orange blobs represent level $i$ SCCs that are not small, and black dots represent small level $i$ SCCs.
        Blocks drawn with thick rectangular boxes.
    }
    \label{fig:chopped}
\end{figure}

The rough idea of the second phase is to partition the small SCCs in level $i$ into a small number of small-sized contiguous finely chopped blocks, and then do as in line \hyperref[alg:dag-3]{3} of the algorithm for sparse DAGs within each finely chopped block:
add intra-block shortcuts to discount a path's movement through the block.
This begets the question: can we efficiently partition the small SCCs in level $i$ into a small number of small-sized contiguous finely chopped blocks?
The next observation answers in the affirmative.

\begin{observation}
\label{obs:finely-chopped}
    A partition of the small SCCs of level $i$ into $O(n/\blocksize)$ contiguous finely chopped blocks of size $O(\blocksize)$ can be computed in $O(n)$ time.
\end{observation}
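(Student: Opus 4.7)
The plan is to produce the partition greedily, one parent level $i-1$ SCC at a time. First, I would use \Cref{prop:scc} to list the children (the level $i$ SCCs) of every parent in topological order in $O(n)$ total time; recall that only medium SCCs are recursed into in Phase~1, so every parent at level $i-1$ has size strictly greater than $\blocksize$, and consequently there are at most $n/\blocksize$ parents. Within each parent, the non-small level $i$ children act as forbidden separators: a finely chopped block cannot straddle one. I would therefore walk through the topologically ordered children of each parent, treating every non-small child as a ``break point'' and processing the maximal runs of consecutive small children in between independently.

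Within a single maximal run, I would open an empty block and keep appending the next small SCC in topological order; whenever adding the next SCC would push the cumulative vertex count strictly above $\blocksize$, I would close the current block and start a new one (closing the final block at the end of the run). Because each small SCC has at most $\blocksize$ vertices, every closed block has total size between the previous threshold and at most $2\blocksize$, except possibly the last block in a run; this gives the $O(\blocksize)$ size bound. By construction, every block sits inside a single parent and contains no non-small SCC in between, so it is finely chopped.

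The main thing to verify is the count $O(n/\blocksize)$, which I would argue by charging. The number of ``full'' blocks closed because the threshold was reached is at most the total vertex count in small level-$i$ SCCs divided by $\blocksize$, i.e.\ at most $n/\blocksize$. The remaining blocks are the last block of each maximal run, so their number is bounded by the number of runs, which in turn is at most (number of parents at level $i-1$) plus (number of non-small level $i$ SCCs). Each non-small level-$i$ SCC has size exceeding $\blocksize$ and these SCCs are vertex-disjoint, so there are at most $n/\blocksize$ of them; combined with the at most $n/\blocksize$ parents, the total block count is $O(n/\blocksize)$.

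The only step I expect to require any care is the charging argument for the number of residual blocks, since one has to invoke the fact that Phase~1 only recursed on medium SCCs (ensuring every parent at level $i-1$ has size $>\blocksize$) and the disjointness of SCCs at a single level to turn ``non-small'' into a size lower bound. Once those are in hand, the running-time claim is immediate: the topological order is computed in $O(n)$ via \Cref{prop:scc}, and the greedy sweep inspects each level-$i$ SCC and each vertex a constant number of times, so the entire construction runs in $O(n)$ time.
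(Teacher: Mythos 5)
Your proof is correct and takes essentially the same approach as the paper's: the paper first partitions the small SCCs globally into $O(n/\blocksize)$ blocks of size $O(\blocksize)$ and then splits them along parent-SCC boundaries and non-small level-$i$ SCCs, whereas you place the separators first and pack greedily within each maximal run. The key counting step is identical in both—charging the extra (residual) blocks to the $O(n/\blocksize)$ medium parents and the $O(n/\blocksize)$ non-small level-$i$ SCCs, each of which has size at least $\blocksize$ and is disjoint from the others.
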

\begin{proof}\ 
    Consider the following process, and refer to \Cref{fig:chopping}.
    \begin{tbox}
        \begin{enumerate}[(1)]
            \item Arrange the small level $i$ SCCs in increasing topological order $C_1, C_2, \ldots, C_k$.
            \item Partition $C_1, C_2, \ldots, C_k$ into contiguous blocks $B_1, B_2, \ldots, B_{O(n/\blocksize)}$ of size $O(\blocksize)$.
            \item Use two pointers, one to walk over level $i$ blocks in increasing topological order and another to walk over level $i - 1$ SCCs in increasing topological order.
                If there is a level $i$ block $B$ and a level $i - 1 $ SCC $C$ such that $B \cap C \neq \emptyset$ and $B \not\subseteq C$, then split $B$ into two blocks $B \cap C$ and $B \cap \overline{C}$.
                \textcolor{blue}{\texttt{(there are still $O(n/\blocksize)$ blocks after this process)}}
            \item Walk over all level $i$ SCCs $C'_1, C'_2, \ldots, C'_{k'}$ in increasing topological order.
                If there is a medium or large SCC $C'_j$ such that $C'_{j-1}$ and $C'_{j+1}$ are small SCCs both belonging to the same block $B$, split $B$ into two blocks $B_{\textrm{prefix}}$ and $B_{\textrm{suffix}}$ where $B_{\textrm{prefix}}$ contains the initial SCCs of $B$ up to and including $C'_{j-1}$, and $B_{\textrm{suffix}}$ contains the remaining SCCs of~$B$.
                \textcolor{blue}{\texttt{(there are still $O(n/\blocksize)$ blocks after this process)}}
        \end{enumerate}
    \end{tbox}

    This process takes linear time.
    The obtained blocks are finely chopped by construction.
    It remains to justify our comments that after lines $3$ and $4$ there are still $O(n/\blocksize)$ blocks.

    First, note that in step (2) we can form $O(n / \blocksize)$ blocks of size $O(\blocksize)$ each since small SCCs have size no larger than $\blocksize$.
    Then, in step (3), observe that the level $i - 1$ SCCs which can contain level $i$ SCCs are medium and hence have size at least $\blocksize$.
    There are consequently $O(n / \blocksize)$ level $i - 1$ SCCs which can split the blocks from step (1), creating $O(n / \blocksize)$ new blocks.
    Finally, in step (4), notice that the level $i$ SCCs that are not small have size at least $\blocksize$.
    As before, there are $O(n / \blocksize)$ of these SCCs, creating $O(n / \blocksize)$ new blocks from splitting along them.
\end{proof}

In view of \Cref{obs:finely-chopped}, we will henceforward assume that all blocks respect the LDD structure (see \Cref{fig:hierarchy-sparse} again).
That is to say, at level $i$ there is a topological ordering over the corresponding collection of non-small SCCs and blocks, and also that we refer to an SCC containing a block as its ancestor.

The second phase is then executed as follows.

\begin{tbox}
    \algname{Low-Diameter Decomposition Hierarchy Phase 2}\\
    \textbf{Input:} The partial LDD Hierarchy from Phase 1.\\
    \textbf{Output:} Complete data of an LDD Hierarchy.
    \begin{enumerate}
        \item For each level $i$ from the Phase 1 LDD Hierarchy:
            \begin{enumerate}
                \item Partition the small SCCs of level $i$ into contiguous (in the topological order of all SCCs in level $i$) blocks $B_1, B_2, \ldots, B_{O(n/\blocksize)}$, each of which contains $O(\blocksize)$ vertices and is, moreover, finely chopped.
                \item In each block $B_j$, remove all edges $e \in E(G[B_j])$ from $\eforward{i} \cup \eback{i}$ and add them to $\eintra$.
                \textcolor{blue}{\texttt{(Relabelling edges contained in the block to $\eintra$)}}
                \item \textcolor{blue}{\texttt{Same as line \hyperref[alg:dag-3]{(3)} of the algorithm for sparse DAGs}}
                    \\In each block $B_j$:
                    \begin{enumerate}
                        \item Randomly sample, with replacement, $\Theta\paren{\frac{\blocksize \log n}{\blockrun}}$ vertices $v \in B_j$ into $S_{B_j}$.
                        \item Preprocess $(1, 1 + \eps)$-approximate All-Pairs $\rsp$ on $G[B_j]$ with delay thresholds in the range $[\delta, D]$, where $\delta = \eps D / n^2$.
                        \item For all $(1 + \eps)^k \in [\delta, D]$ and for all $u, v \in S_{B_j}$, add the edge $uv$ with
                            \begin{align*}
                                & \ell(uv) \gets \textrm{ the length returned by All-Pairs } \rsp \textrm{ with delay } (1 + \eps)^k
                                \\
                                & d(uv) \gets (1 + \eps)^{k+1}
                            \end{align*}
                            into the multiset $\ehop$.
                    \end{enumerate}
            \end{enumerate}
    \end{enumerate}
\end{tbox}

    \begin{figure}[ht!]
        \centering
        \includegraphics[scale=0.7]{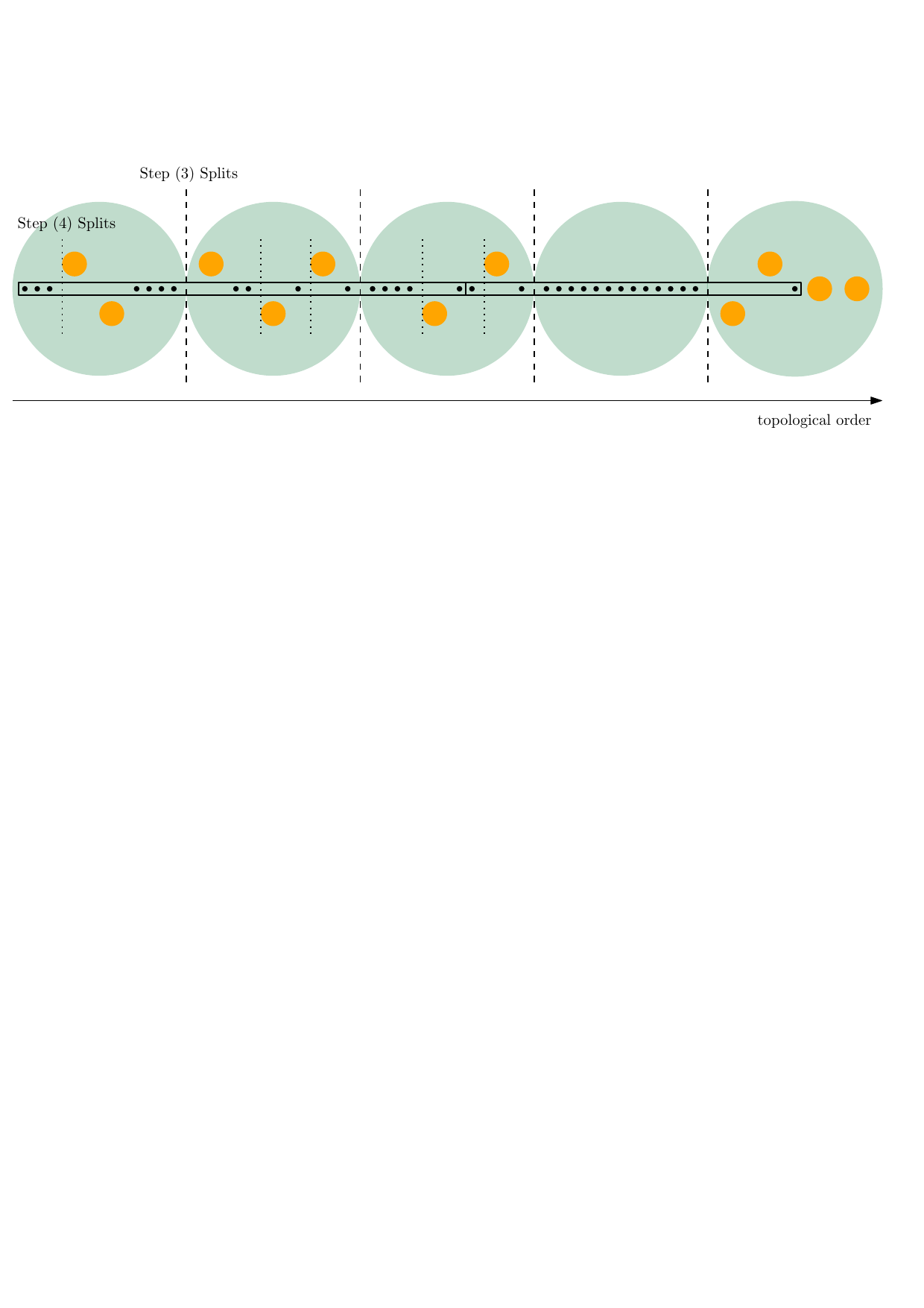}
        \caption{
            A depiction of the level $i$ block construction process.
            Green blobs represent level $i - 1$ SCCs, orange blobs represent level $i$ SCCs that are not small, and black dots represent small level $i$ SCCs.
            Two initial blocks from step (2) are drawn with thick rectangular boxes.
            Step (3) splits the initial blocks along large dashed lines.
            Step (4) splits the blocks from step (3) along smaller dotted lines.
            The resulting blocks at the end of step (4) are the same as in \Cref{fig:chopped}.
        }
        \label{fig:chopping}
    \end{figure}

\paragraph{Auxiliary Edges. }
Note that the auxiliary edge set is now $H = \ehop \cup (\bigcup_i \estar{i})$.
Note also that $H$ contains parallel edges, and we treat $H$ as disjoint from $E(G)$.
\begin{observation}
\label{obs:sparse-H-pfp}
    $H$ is \PFP.
\end{observation}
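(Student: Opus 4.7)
The plan is to verify the Pareto Frontier Preserving property separately for the two kinds of edges that make up $H$, namely the star-edges $\bigcup_i \estar{i}$ and the hop-edges $\ehop$.

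For the star-edges, I would argue as in \Cref{obs:dense-H-pfp} from the dense case. Fix any $uv \in \estar{i}$ with $u \neq v$. By construction of Phase 1, $u$ and $v$ lie in a common SCC $C$ at level $i$ of the LDD Hierarchy, i.e., a strongly connected component of $G \setminus \bigcup_{j \le i} \eback{j}$. The LDD was invoked on $\lendel{G[C'] }$ for the level $i-1$ ancestor $C'$ with diameter bound $D_i$, so by the Bounded Diameter property in \Cref{def:ldd} applied to the combined weight $w = \ell + d$, there is a $(u,v)$-path $P_{uv}$ in $G$ with $\ell(P_{uv}) + d(P_{uv}) \le D_i$. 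Since lengths and delays are non-negative, this gives $\ell(P_{uv}) \le D_i = \ell(uv)$ and $d(P_{uv}) \le D_i = d(uv)$, as required.

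For the hop-edges, I would directly invoke the guarantee of the preprocessing of $(1, 1+\eps)$-approximate All-Pairs $\rsp$ used inside each block $B_j$ (\Cref{thm:all-pairs}). Fix any $uv \in \ehop$ added in block $B_j$ for some $u, v \in S_{B_j}$ and some $(1+\eps)^k \in [\delta, D]$. By construction, $\ell(uv)$ equals the length returned by the All-Pairs $\rsp$ data structure on $G[B_j]$ queried with source $u$, target $v$, and delay threshold $(1+\eps)^k$. That data structure returns the length of some $(u,v)$-path $P_{uv} \subseteq G[B_j] \subseteq G$ satisfying $\ell(P_{uv}) = \ell(uv)$ and $d(P_{uv}) \le (1+\eps)\cdot (1+\eps)^k = (1+\eps)^{k+1} = d(uv)$. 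Thus $P_{uv}$ Pareto dominates the edge $uv$ in $G$.

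There is no real obstacle here; the only subtle point is that one should be careful to quote the correct guarantee from \Cref{thm:all-pairs}: the returned length is that of an actual path whose delay is at most $(1+\eps)$ times the queried threshold, and it is this factor of $(1+\eps)$ that is exactly absorbed by setting $d(uv) = (1+\eps)^{k+1}$ rather than $(1+\eps)^k$ in the definition of hop-edges.
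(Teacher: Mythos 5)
Your proof is correct and takes essentially the same approach as the paper: star-edges are handled exactly as in \Cref{obs:dense-H-pfp} via the Bounded Diameter guarantee on the combined weight $w=\ell+d$, and hop-edges are handled by quoting the guarantee of \Cref{thm:all-pairs}, with the $(1+\eps)$ delay slack absorbed by the choice $d(uv)=(1+\eps)^{k+1}$. The only difference is that you index hop-edges by the query threshold $(1+\eps)^k$ while the paper indexes them by the assigned delay $(1+\eps)^k$ (so its query threshold is $(1+\eps)^{k-1}$); this is purely notational.
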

\begin{proof}
    The proof that the edges in $\bigcup_i \estar{i}$ are Pareto dominated by some path is the same as \Cref{obs:dense-H-pfp}.
    Similarly, the length of each edge $uv \in \ehop$ with $d(uv) = (1 + \eps)^k$ comes from a $(1,1+\eps)$-approximate solution to All-Pairs $\rsp$ with delay threshold $(1 + \eps)^{k-1}$, and therefore equals the length of a $(u,v)$-path $P$ with $\ell(P)\leq \dist_{G[B_j]}(u,v,(1+\eps)^{k-1})$ and $d(P) \le (1 + \eps)^k$; the edge $uv$ is consequently Pareto dominated by $P$.
\end{proof}

\subsection{The Edge Frequencies}
We set up the individual edges' frequencies as follows:
\begin{align*}
    \pi(e) =
        \begin{cases}
            \infty & \textrm{if } e \in \edead \\
            \blocksize & \textrm{if } e \in \ehop \\
            \blocksize / \blockrun & \textrm{if } e \in \eintra \\
            \blocksize & \textrm{if } e \in \eback{i} \\
            \blocksize & \textrm{if } e \in \eforward{i} \\
            \blocksize & \textrm{if } e \in \estar{i}
        \end{cases}
\end{align*}

\subsubsection{Verifying the Path-Sum Constraints}

Our strategy is similar to that of \Cref{sec:dense} on a high level.
We fix an arbitrary $(s,t)$-path $P$ in $G$ with $\ell(P) \le n/\eps$ and $d(P) \le n/\eps$ for the remainder of this section, where our aim is to exhibit an $(s,t)$-path $P'$ in $G \cup H$ which witnesses that the \PSC for $t$ is satisfied.

Recall the notion of SCCs that are long or short for P (same as \Cref{def:dense-long}).

\begin{definition}[Long and Short SCCs]
\label{def:sparse-long}
    Let $D_i = n/2^i$.
    An SCC $C$ at level $i$ of the LDD Hierarchy is said to be \emph{long} for $P$ if $\ell(P \cap G[C]) > D_i/\eps$ or $d(P \cap G[C]) > D_i/\eps$.
    Otherwise, we say that $C$ is \emph{short} for $P$.
\end{definition}

To support our later claims, we first make an addendum to this definition.
This will aid us in tallying up the contribution of some edges in level $i$ of the LDD Hierarchy which will depend on the edges in levels $j < i$.
\begin{definition}[Short All the Way Up]
\label{def:sparse-short-all-the-way}
    If $C$ is an SCC at level $i$ of the LDD Hierarchy that is short for $P$ and all its ancestor SCCs in the LDD Hierarchy are also short for $P$, then we say that~$C$ is \emph{short all the way up} for $P$.
    Similarly, if $B$ is a block at level $i$ of the LDD Hierarchy and all its ancestor SCCs in the LDD Hierarchy are short for $P$, then we say that $B$ is \emph{short all the way up} for $P$.
\end{definition}

The way we construct $P'$ is initially similar to the way we did in \Cref{sec:dense}, except we simulate a contraction (by replacing subpaths with edges in $\estar{i}$) on all large SCCs in all levels of the LDD Hierarchy (as opposed to just the last level).
After simulating contractions of large and long SCCs, we turn to the approach of constructing $P'$ provided in the sketch of the algorithm above for sparse DAGs by shortcutting portions of the path moving through a block with edges in $\ehop$.

\begin{tbox}
    \algname{Construction of $P'$ Given $P$}\\
    \textbf{Input:} An $(s,t)$-path $P$ with $\ell(P) \le n/\eps$ and $d(P) \le n/\eps$.\\
    \textbf{Output:} An $(s,t)$-path $P'$ which witnesses that the \PSC for $t$ is satisfied.
    \begin{enumerate}
        \item $P' \gets P$.
        \item For each level $i$ of the LDD Hierarchy in increasing order:
            \begin{enumerate}
                \item $\cC \gets $ SCCs of level $i$ of the LDD Hierarchy (i.e. the SCCs of $G \setminus \paren{\bigcup_{j \le i} B_j}$).
                \item For each SCC $C \in \cC$
                    that does not have an ancestor SCC that is long for $P$ or large:
                    \begin{enumerate}
                        \item Let $c$ be the representative of $C$ (i.e. the center of the star).
                        \item Let $Q$ be the subpath from $u$ to $v$, where (if they exist) $u$ is the first vertex of $P'$ in $C$ and $v$ is the last vertex of $P'$ in $C$.
                        \item If $C$ is long for $P$ or $C$ is large, then $P' \gets (P' \setminus Q) \cup \set{uc, cv}$.
                    \end{enumerate}
            \end{enumerate}
        \item For each level $i$ of the LDD Hierarchy, for each level $i$ block $B$
            that does not have an ancestor SCC that is long for $P$ or large:
            \begin{enumerate}
                \item For each maximal subpath $Q$ of $P' \cap P$ contained in $G[B]$ with at least $\blockrun$ vertices:
                    \begin{enumerate}
                        \item Let $u, v$ be vertices in $S_B$ among the first and last $\blockrun$ vertices of $Q$, respectively.
                        \item Denote the subpath of $Q$ from $u$ to $v$ with $Q_{uv}$.
                        \item Let $(1 + \eps)^{k-1} \le d(Q_{uv}) \le (1 + \eps)^k$.
                        \item Let $uv$ be the edge in $\ehop$ with $d(uv) = (1 + \eps)^k$.
                        \item $P' \gets (P' \setminus Q_{uv}) \cup \set{uv}$.
                    \end{enumerate}                    
            \end{enumerate}
        \item Output $P'$.
    \end{enumerate}
\end{tbox}

Note, like in \Cref{sec:dense}, that for any SCC $C$ in the LDD Hierarchy, $G[C] \cap P' \subseteq G[C] \cap P$.
We are now ready to verify that $P'$ witnesses the \PSC for $t$.

\paragraph{Easy Bounds. }
Let us first get some easy bounds as a warmup.
These are very similar to what we have seen in \Cref{sec:dense}.

We make use of the following two observations, where the proof of the former is essentially the same as its analogue \Cref{obs:dense-has-back-then-short}, and the proof of the latter is exactly the same as \Cref{obs:dense-ldd-edges} in \Cref{sec:dense}.

\begin{observation}
\label{obs:sparse-has-back-then-short}
    Let $C$ be an SCC at level $i - 1$ of the LDD Hierarchy, and let ${X = \eforward{i} \cup \eback{i}}$ (i.e. the forward-edges and back-edges in level $i$).
    If $G[C] \cap P' \cap X \neq \emptyset$, then $C$ is medium and short \emph{all the way up} for $P$.
\end{observation}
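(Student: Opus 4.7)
The plan is to argue both conclusions by contrapositive, roughly mirroring the proof of \Cref{obs:dense-has-back-then-short} but splitting into two cases according to the two conclusions (\emph{medium} and \emph{short all the way up}).

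\textbf{Why $C$ must be medium.} Suppose $C$ is small or large. In Phase~1, the recursive call is only invoked on \emph{medium} SCCs at each level, so no level-$i$ LDD is ever executed on $G[C]$. Consequently, no edges of $G[C]$ are ever added to $\eback{i}$ or to $\eforward{i}$ by the algorithm: every edge placed in $\eback{i}$ or $\eforward{i}$ arises from a level-$i$ LDD computation, which lives inside some medium level-$(i-1)$ SCC. Phase~2 only \emph{removes} edges from $\eforward{i}\cup\eback{i}$ (moving them to $\eintra$), so it cannot repair this. Hence $G[C]\cap X=\emptyset$, contradicting $G[C]\cap P'\cap X\neq\emptyset$.

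\textbf{Why $C$ must be short all the way up.} Suppose $C$ is not short all the way up for $P$: either $C$ itself is long or some ancestor of $C$ in the LDD hierarchy is long. Let $C^{**}$ be the \emph{highest} (smallest-level) element of $\{C\}\cup\{\text{ancestors of }C\}$ that is long; say $C^{**}$ sits at level $j\leq i-1$. By the choice of $C^{**}$, no strict ancestor of $C^{**}$ is long, and no ancestor of $C^{**}$ is large either (since large SCCs are not recursed into in Phase~1, so they have no descendants). Hence when the outer loop of the construction of $P'$ reaches level $j$, the SCC $C^{**}$ satisfies the ``no ancestor is long or large'' guard and is actually processed; being long, the subpath of $P'$ from the first to last vertex in $C^{**}$ is replaced with two star-edges in $\estar{j}$.

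From that moment on, the only edges of $P'$ with both endpoints inside $C^{**}$ are these two star-edges. Since $C\subseteq C^{**}$, any edge of $G[C]\cap P'$ is one of these star-edges. Crucially, later iterations of the construction do not re-introduce edges into $G[C^{**}]\cap P'$: subsequent levels only touch SCCs / blocks with no long or large ancestor, and every SCC or block nested inside $C^{**}$ has $C^{**}$ as an ancestor. Therefore $G[C]\cap P'\subseteq \estar{j}$, which is disjoint from $\eforward{i}\cup\eback{i}=X$ (the $\estar{\cdot}$, $\eforward{\cdot}$, $\eback{\cdot}$ families are disjoint by construction). This again contradicts $G[C]\cap P'\cap X\neq\emptyset$.

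\textbf{Main obstacle.} The only subtle step is making sure that once we replace the ``$C^{**}$-portion'' of $P'$ with star-edges at level $j$, nothing in the \emph{later} stages of the $P'$ construction (deeper levels in the outer loop, or the Phase~2 hop-edge shortcutting loop) sneaks a new $\eforward{i}\cup\eback{i}$ edge into $G[C]\cap P'$. This follows cleanly from the ``no ancestor is long or large'' guard: every descendant SCC and every block inside $C^{**}$ inherits $C^{**}$ as an ancestor, so the guard fails and those descendants are skipped.
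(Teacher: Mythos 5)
Your proof is correct and follows essentially the same route as the paper's: argue the contrapositive, observing that non-medium SCCs are never recursed into (so contribute nothing to $\eback{i}\cup\eforward{i}$), and that a long ancestor forces $G[C]\cap P'$ to be empty or consist only of star-edges, which are disjoint from $X$. Your version merely spells out in more detail why the highest long ancestor $C^{**}$ actually passes the ``no long or large ancestor'' guard and gets star-shortcut, a point the paper leaves implicit.
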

\begin{proof}
    We prove the contrapositive of the statement.
    If $C$ were not medium, then we would not have recursed into $G[C]$ in the construction of the LDD Hierarchy and so $G[C] \cap P' \cap X = \emptyset$.
    Suppose then that $C$ or one of its ancestors were long for $P$.
    Then $G[C] \cap P'$ would be empty or comprised of star-edges, by the construction of $P'$; that is, $G[C] \cap P' \subseteq H$.
    Since~$X$ is disjoint from $H$, we must have $G[C] \cap P' \cap X = \emptyset$.
\end{proof}
\begin{observation}
\label{obs:sparse-ldd-edges}
    Let $C$ be an SCC in level $i - 1$ of the LDD Hierarchy that is short for $P$, or $C = V$.
    \[\expect{\card{G[C] \cap \eback{i} \cap P'}} = O(\log^3(n) / \eps).\]
\end{observation}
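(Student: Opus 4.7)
The plan is to mimic the proof of the analogous \Cref{obs:dense-ldd-edges}, using the Sparse Hitting property of the LDD applied at level $i$ to bound the expected number of back-edges within $G[C]$ that lie on $P$, and then transferring this bound to $P'$.

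First, I would record the containment $G[C] \cap \eback{i} \cap P' \subseteq G[C] \cap \eback{i} \cap P$, which holds because the construction of $P'$ only ever removes edges of $P$ (possibly replacing them with edges of $H$, which are disjoint from $\eback{i} \subseteq E(G)$). This allows us to bound $\expect{\card{G[C] \cap \eback{i} \cap P'}}$ by $\expect{\card{G[C] \cap \eback{i} \cap P}}$, and thereby reduce the claim to a statement about the fixed path $P$ rather than the randomized path $P'$.

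Next, I would unpack which LDD call is relevant. If $i = 1$ and $C = V$, then $\eback{1}$ arose from $\LDD(\lendel{G},D_1)$ with $D_1 = n/2$. If $i \geq 2$, then $C$ is a medium SCC at level $i-1$ (otherwise we would not have recursed into $G[C]$), and $\eback{i} \cap G[C]$ is produced by the call $\LDD(\lendel{G[C]},D_i)$ inside Phase 1. In both cases, \Cref{prop:ldd} gives Sparse Hitting: $\prob{e \in \eback{i}} = O\!\left(w(e)\log^3(n)/D_i + 1/\poly(n)\right)$ for $e \in G[C]$, where $w(e) = \ell(e) + d(e)$. Applying linearity of expectation over $e \in G[C] \cap P$ yields
\[
\expect{\card{G[C] \cap \eback{i} \cap P}} = O\!\left(\frac{w(P \cap G[C]) \log^3 n}{D_i} + \frac{n}{\poly(n)}\right).
\]

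Finally, I would use the hypothesis that $C$ is short for $P$ (or is the root $V$, which satisfies $\ell(P), d(P) \leq n/\eps$ while $D_1 = n/2$) to bound $w(P \cap G[C]) = \ell(P \cap G[C]) + d(P \cap G[C]) \leq 2 D_i/\eps$ by \Cref{def:sparse-long}. Substituting gives $O(\log^3(n)/\eps)$, which is the desired bound. The main potential obstacle is verifying that the Sparse Hitting guarantee applies to the correct LDD call when $C \neq V$, but this is immediate from the recursive structure of Phase 1: level $i$ back-edges inside a medium level $i-1$ SCC $C$ are exactly produced by the LDD on $\lendel{G[C]}$ with diameter bound $D_i$; no subtlety beyond this arises.
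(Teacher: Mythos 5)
Your proposal is correct and matches the paper's argument: the paper proves this observation by declaring it identical to the proof of \Cref{obs:dense-ldd-edges}, which is exactly the three-step chain you give (the containment $G[C] \cap \eback{i} \cap P' \subseteq G[C] \cap \eback{i} \cap P$, Sparse Hitting from \Cref{prop:ldd} with linearity of expectation, and the shortness of $C$ bounding $w(P \cap G[C])$ by $O(D_i/\eps)$). Your extra care in identifying which recursive LDD call produces $\eback{i} \cap G[C]$ is a reasonable elaboration but introduces nothing beyond what the paper intends.
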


Then, using similar arguments we have seen, we can bound the contribution of $\edead, \eback{i}, \estar{i}$ to $\expect{\pi(P')}$.

\begin{observation}
\label{obs:sparse-easy-edges}
    The contribution of $\edead, \eback{i}, \estar{i}$ over all $i$ to $\expect{\pi(P')}$ is $O(n \log^4(n) / \eps)$.
\end{observation}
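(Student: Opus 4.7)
The plan is to decompose $\expect{\pi(P')}$ by edge family --- dead, star, back --- and show that each piece fits within the target $O(n \log^4(n)/\eps)$, in direct analogy with the proof of~\Cref{lem:dense-pi} for the dense case.

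I would first dispense with dead edges by the observation that $P' \cap \edead = \emptyset$. Each $e \in \edead$ lies inside some large SCC $C$ at some level $i$. In the construction of $P'$, either $C$ itself has no long-or-large ancestor and is therefore replaced wholesale by the two star-edges $uc,cv \in \estar{i}$, or it sits under such an ancestor and has already been cut out at a shallower level; in either case $G[C] \cap P' \subseteq \bigcup_j \estar{j}$. The star-edge bound is a counting argument: each star-edge has $\pi = \blocksize$ and each shortcut introduces at most two of them. A level-$i$ shortcut only occurs at an SCC that is itself long for $P$ or large at level $i$. Using simplicity of $P$ together with $\ell(P), d(P) \le n/\eps$, there are at most $O(n/D_i)$ long SCCs at level $i$; since large SCCs at level $i$ are vertex-disjoint with size at least $D_i$, there are at most $n/D_i$ of those too. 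Hence the per-level star contribution is $O(\blocksize \cdot n/D_i)$, and since Phase~1 terminates once $D_i$ reaches $\blocksize$ with $D_i = n/2^i$, the geometric sum $\sum_i \blocksize \cdot n/D_i = \blocksize \sum_i 2^i = O(n)$.

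The bulk of the work is the back-edge contribution. For a fixed level $i$, every $e \in \eback{i}$ has $\pi(e) = \blocksize$, so the level-$i$ expected contribution is at most
\[
\blocksize \cdot \expect{\card{\eback{i} \cap P'}} \;=\; \blocksize \sum_{C} \expect{\card{G[C] \cap \eback{i} \cap P'}},
\]
where $C$ ranges over level-$(i-1)$ SCCs (with $C = V$ when $i = 1$). By~\Cref{obs:sparse-has-back-then-short}, only $C$ that are medium and short all the way up for $P$ contribute; for each such $C$, \Cref{obs:sparse-ldd-edges} bounds the inner expectation by $O(\log^3(n)/\eps)$; and since medium SCCs are vertex-disjoint with size at least $\blocksize$, there are at most $n/\blocksize$ of them. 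Multiplying gives $O(n \log^3(n)/\eps)$ per level, and summing over the $O(\log n)$ levels yields $O(n \log^4(n)/\eps)$.

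Adding the three pieces together gives the desired $O(n \log^4(n)/\eps)$ bound. I expect the back-edge count to be the delicate step: it crucially couples both clauses of~\Cref{obs:sparse-has-back-then-short}, with the ``medium'' clause capping the number of parent SCCs that can host level-$i$ back-edges used by $P'$ at $n/\blocksize$, and the ``short all the way up'' clause providing exactly the hypothesis under which~\Cref{obs:sparse-ldd-edges} applies on each such parent. The dead-edge and star-edge bounds are essentially bookkeeping once the construction of $P'$ is clearly understood.
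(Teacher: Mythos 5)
Your proposal is correct and follows essentially the same route as the paper: dead edges vanish because $P'$ never enters large SCCs except via star-edges, star-edges are counted via the $O(n/D_i)$ bound on long/large SCCs per level, and the back-edge bound combines \Cref{obs:sparse-has-back-then-short} (restricting to medium, short-all-the-way-up parents, of which there are $O(n/\blocksize)$) with \Cref{obs:sparse-ldd-edges}. The only cosmetic difference is that you sum the star-edge contribution geometrically across levels to get $O(n)$ total, whereas the paper bounds each level by $O(n)$ using $D_i \ge \blocksize$; both are fine.
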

\begin{proof}\
    \underline{$\edead$}:
    Since all the parts of $P$ moving through large SCCs are replaced by star-edges and $\edead$ are edges contained in large SCCs, $P'$ does not use any edge in $\edead$.
    The contribution to $\expect{\pi(P')}$ is therefore $0$.
    
    \underline{$\eback{i}$}:
    Let $\cC$ be the SCCs of level $i - 1$ of the LDD Hierarchy.
    The total contribution to $\expect{\pi(P')}$ by edges in $\eback{i}$ is at most
    \begin{align*}
        \sum_{C \in \cC} \blocksize \cdot \expect{\card{G[C] \cap \eback{i} \cap P'}}
        & \le
        \sum_{\substack{C \in \cC:\\ C \textrm{ is medium and} \\ \textrm{short for } P}} \blocksize \cdot \expect{\card{G[C] \cap \eback{i} \cap P}}
        \tag{$G[C] \cap \eback{i} \cap P' \subseteq G[C] \cap \eback{i} \cap P$ and \Cref{obs:sparse-has-back-then-short}}
        \\
        & =
        \sum_{\substack{C \in \cC:\\ C \textrm{ is medium and} \\ \textrm{short for } P}} \blocksize \cdot O(\log^3(n) / \eps)
        \tag{\Cref{obs:sparse-ldd-edges}}
        \\
        & =
        O(n \log^3(n) / \eps).
        \tag{At most $O(n/\blocksize)$ medium SCCs at level $i$}
    \end{align*}
    
    \underline{$\estar{i}$}:
    $P'$ uses $O(n/\blocksize)$ edges from $\estar{i}$ since there are $O(n/D_i)$ large SCCs in level $i$ and $O(n/D_i)$ SCCs that are long for $P$ in level $i$ and $D_i \ge \blocksize$.
    The contribution to $\expect{\pi(P')}$ is therefore $O(n)$.

    \underline{Total}:
    Summing up over the $O(\log n)$ levels of the LDD Hierarchy gives a total contribution of $O(n \log^4 n / \eps)$.
\end{proof}

\paragraph{The Forward Edges. }
We are no longer able to conveniently bound the contribution of forward-edges to $\expect{\pi(P')}$ using the fact that $\pi$ was an order difference function; the $\pi$ defined in this section is entirely different from that defined in \Cref{sec:dense}.
Instead, note that for a fixed level $i$ of the LDD Hierarchy, an edge in $\eforward{i}$ moves from a level $i$ block or non-small SCC to another level $i$ block or non-small SCC that has a higher topological order in the level $i$ ordering.
Since there are $O(n / \blocksize)$ blocks and non-small SCCs in level $i$, as we walk along $P'$ there can be at most $O(n / \blocksize)$ edges from $\eforward{i}$ before the topological order along the walk must take a step backwards, by some edge from $\eback{j}$ for $j \le i$.
The next claim quantifies how often these backsteps can happen, which will then allow us to bound the contribution of forward-edges to $\expect{\pi(P')}$.

\begin{lemma}
\label{lem:sparse-restarts}
    Let $V = C_0 \supseteq C_1 \supseteq \ldots \supseteq C_i = C$ be random SCCs where $C_j$ for $j \in [i]$ is in level~$j$ of the LDD Hierarchy, conditioned on $C$ being short all the way up for $P$.
    Let
    \[R = \bigcup_{j \in [i + 1]} \paren{\eback{j} \cap G[C_{j-1}] \cap P}.\]
    Then, $\expect{\card{R}} = O(\log^4(n) / \eps)$.\footnote{Suggestively, $R$ stands for ``Restarts''.}
\end{lemma}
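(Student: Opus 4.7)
The plan is to split $R$ by level and bound each piece separately. Write $R_j \coloneqq \eback{j} \cap G[C_{j-1}] \cap P$ for $j \in [i+1]$, so $R = \bigsqcup_{j} R_j$. I will argue that $\expect{|R_j|} = O(\log^3(n)/\eps)$ under the stated conditioning, and then use that $i+1 = O(\log n)$ (since the Phase 1 recursion stops once the current SCC has size $\le \blocksize$ and $D_j = n/2^j$ halves at each level, giving at most $O(\log(n/\blocksize)) = O(\log n)$ levels) to sum and obtain $\expect{|R|} = O(\log^4(n)/\eps)$, as desired.

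For the per-level bound, the key point is that the conditioning ``$C = C_i$ is short all the way up for $P$'' implies, in particular, that each ancestor $C_{j-1}$ is short for $P$, so by Definition~\ref{def:sparse-long},
\[
\ell(P \cap G[C_{j-1}]) + d(P \cap G[C_{j-1}]) \;\le\; \tfrac{2 D_{j-1}}{\eps} \;=\; \tfrac{4 D_j}{\eps}.
\]
By construction, once $C_{j-1}$ is fixed, the set $\eback{j}$ is produced by an independent call to $\LDD(\lendel{G[C_{j-1}]}, D_j)$. The Sparse Hitting guarantee from Definition~\ref{def:ldd} therefore yields, for every edge $e$ with $w(e) = \ell(e) + d(e)$,
\[
\Pr[\,e \in \eback{j} \mid C_{j-1}\,] \;=\; O\!\paren{\tfrac{w(e)}{D_j}\log^3 n + \tfrac{1}{\poly(n)}}.
\]
Summing over $e \in G[C_{j-1}] \cap P$ by linearity and plugging in the shortness estimate gives
\[
\expect{\,|R_j| \,\big|\, C_{j-1}\,} \;=\; O\!\paren{\tfrac{\log^3 n}{D_j}} \cdot w(P \cap G[C_{j-1}]) + \tfrac{n}{\poly(n)} \;=\; O(\log^3(n)/\eps).
\]

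The one subtlety I want to flag is that the event we condition on (``$C$ short all the way up'') depends on randomness at all levels $1,\ldots,i$, including levels strictly greater than $j$. This is fine because the LDDs at different levels use fresh, mutually independent randomness: once the ancestor chain $C_0 \supseteq \cdots \supseteq C_{j-1}$ is fixed, the distribution of $\eback{j}$ is unaffected by further conditioning on the downstream randomness producing $\eback{j+1}, \ldots, \eback{i}$, so the Sparse Hitting bound above applies under the full conditioning as well. Summing $\expect{|R_j|} = O(\log^3(n)/\eps)$ over the $O(\log n)$ values of $j$ yields the claim. I do not expect any real obstacle beyond neatly writing down this independence/conditioning argument.
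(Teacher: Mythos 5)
Your proof is correct and follows essentially the same route as the paper's: decompose $R$ level by level, bound each $\expect{\card{\eback{j} \cap G[C_{j-1}] \cap P}}$ by $O(\log^3(n)/\eps)$ via the Sparse Hitting guarantee together with shortness of the ancestor $C_{j-1}$ (this per-level bound is exactly \Cref{obs:sparse-ldd-edges}, which the paper cites rather than re-derives), and sum over the $O(\log n)$ levels. Your explicit remark about the conditioning on downstream randomness addresses a point the paper's proof silently glosses over, so it is a welcome addition rather than a deviation.
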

\begin{proof}
    Since $C_j$ is short for $P$ for any $j \in [i]$, we have $\expect{\card{\eback{j+1} \cap G[C_{j}] \cap P}} = O(\log^3(n) / \eps)$ by \Cref{obs:sparse-ldd-edges}.
    Similarly, $\expect{\card{\eback{1} \cap G[C_0] \cap P}} = O(\log^3(n) / \eps)$ since the combined length-delay of $P$ is $O(n / \eps)$.
    Thus, using $i < \log n$,
    \begin{align*}
        \expect{\card{R}} \le \sum_{j \in [i + 1]} \expect{\card{\eback{j} \cap G[C_{j-1}] \cap P}} =  O(\log^4(n) / \eps). \qedhere
    \end{align*}
\end{proof}

With this, we are ready to bound the contribution of forward-edges to $\expect{\pi(P')}$.

\begin{lemma}
\label{lem:sparse-forward-edges}
    $\eforward{i}$ contributes at most $O(n \log^6(n) / \eps)$ to $\expect{\pi(P')}$ when summed up over all levels $i$.
\end{lemma}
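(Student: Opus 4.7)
The plan is to bound $\expect{\card{P' \cap \eforward{i}}}$ at each level $i$, multiply by $\pi(e) = \blocksize$ for each such edge, and sum over the $O(\log n)$ levels. First, I would observe that every $\eforward{i}$ edge appearing in $P'$ must lie within a medium level-$(i-1)$ SCC $C_{i-1}$ that is short all the way up for $P$; otherwise the construction of $P'$ would have contracted an ancestor of $C_{i-1}$ via star-edges and eliminated the edge. Within such a $C_{i-1}$, every $\eforward{i}$ edge strictly advances $P'$ through the level-$i$ topological order on the at most $n_i(C_{i-1}) = O(\card{C_{i-1}}/\blocksize)$ level-$i$ units (blocks or non-small SCCs) inside $C_{i-1}$. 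Hence, if $v(C_{i-1})$ denotes the number of maximal visits of $P'$ to $C_{i-1}$, each visit contributes at most $n_i(C_{i-1})$ forward-$i$ edges, so
\[
\card{P' \cap \eforward{i}} \;\le\; \sum_{C_{i-1}} v(C_{i-1}) \cdot n_i(C_{i-1}).
\]

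The next step is to bound $v(C_{i-1})$ using \Cref{lem:sparse-restarts}. The key claim to establish is
\[
v(C_{i-1}) \;\le\; 1 \,+\, \sum_{j \in [i-1]} \card{P \cap \eback{j} \cap G[C_{j-1}]},
\]
where $V = C_0 \supseteq C_1 \supseteq \cdots \supseteq C_{i-1}$ is the ancestor chain of $C_{i-1}$. The rationale is that each re-entry of $P'$ into $C_{i-1}$ requires a backward step in the level-$j$ topological order for some $j \le i-1$, which must be produced by a back-edge $\eback{j}$ within an ancestor of $C_{i-1}$; since no edge in $H = \ehop \cup \bigcup_j \estar{j}$ is a back-edge, $P' \cap \eback{j} \subseteq P \cap \eback{j}$. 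The right-hand side then matches the random variable $\card{R}$ in \Cref{lem:sparse-restarts} applied at index $i-1$, and since $C_{i-1}$ is short all the way up, the lemma yields $\expect{v(C_{i-1})} = O(\log^4(n)/\eps)$.

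Combining the two bounds and using vertex-disjointness of level-$(i-1)$ SCCs,
\[
\expect{\card{P' \cap \eforward{i}}} \;\le\; \frac{O(\log^4(n)/\eps)}{\blocksize} \sum_{C_{i-1}} \card{C_{i-1}} \;=\; O\!\paren{\frac{n \log^4(n)}{\blocksize \cdot \eps}}.
\]
Multiplying by $\pi = \blocksize$ yields an $O(n \log^4(n)/\eps)$ contribution at level $i$, and summing over the $O(\log n)$ levels gives a total of $O(n \log^5(n)/\eps)$, which is within the claimed $O(n \log^6(n)/\eps)$ bound.

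The main obstacle is the structural claim $v(C_{i-1}) - 1 \le \sum_{j \in [i-1]} \card{P \cap \eback{j} \cap G[C_{j-1}]}$. Because $P'$ can exit and re-enter $C_{i-1}$ via edges of many different types at many different levels of the LDD hierarchy, one must carefully trace each re-entry back to a specific back-edge along the ancestor chain. The argument I would write locates, for each re-entry, the unique shallowest level $j \le i-1$ at which $P'$'s position in the level-$j$ topological order (within the level-$(j-1)$ ancestor of $C_{i-1}$) moves backward during the exit-reentry segment, then exhibits a back-edge in $\eback{j} \cap G[C_{j-1}] \cap P'$ responsible for that reversal; distinct re-entries are charged to distinct back-edges via pigeonhole on the topological positions they traverse.
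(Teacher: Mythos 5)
Your overall strategy is the same as the paper's: bound the number of $\eforward{i}$ edges per visit to a medium, short-all-the-way-up level-$(i-1)$ SCC by the number of level-$i$ blocks and non-small SCCs it contains, bound the number of visits by one plus the number of back-edges in ancestors, and invoke \Cref{lem:sparse-restarts} to control the latter in expectation. The structural charging argument you describe for $v(C_{i-1})$ is exactly the paper's.

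The gap is in your final expectation computation. You pass from $\expect{\sum_{C_{i-1}} v(C_{i-1})\cdot n_i(C_{i-1})}$ to $\tfrac{O(\log^4(n)/\eps)}{\blocksize}\sum_{C_{i-1}}\card{C_{i-1}}$, which implicitly treats the collection of level-$(i-1)$ SCCs, their sizes, and the counts $n_i(C_{i-1})$ as if they were deterministic and independent of the visit counts. They are not: the SCCs themselves, their number, $n_i(C_{i-1})$, and the restart sets are all functions of the same LDD randomness, so $\expect{v(C)\cdot n_i(C)}$ does not factor, and "$\sum_{C}\card{C}$" is itself a sum over a random index set. This is precisely the subtlety the paper flags ("since the SCCs (notably, the number of them) are random, some care must be taken"). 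The paper resolves it by re-indexing the qualifying SCCs as $C_1,\dots,C_n$ in decreasing order of $p_j := n_i(C_j)$, using the \emph{deterministic} bound $p_j \le n/(j\,\blocksize)$ (valid because the $p_j$ are decreasing and sum to $O(n/\blocksize)$), so that only $\expect{\card{R_j}}$ remains to be bounded; the harmonic sum $\sum_j n/j$ then costs an extra $\log n$, which is why the paper gets $O(n\log^5 n/\eps)$ per level and $O(n\log^6 n/\eps)$ overall rather than your $O(n\log^5 n /\eps)$. To repair your proof you would need either this rank-based decoupling or a high-probability (rather than in-expectation) bound on the restart counts; as written, the interchange of expectation with the random sum is unjustified.
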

Intuitively, this should follow by a simple application of linearity of an expectation summed over SCCs of the LDD Hierarchy.
However, since the SCCs (notably, the number of them) are random, some care must be taken to formalize this idea correctly.
\begin{proof}

    Let $C_1, C_2, \ldots, C_n$ be random subsets of $V$ such that
    \begin{itemize}
        \item For (a random) $w \in [n]$, $C_1, \ldots, C_w$ are all the SCCs at level $i - 1$ of the LDD Hierarchy that are medium and short all the way up for $P$, and $C_j = \emptyset$ for $j > w$.
        \item $p_1 \ge p_2 \ge \ldots \ge p_n$ are such that $C_j$ contains $p_j$ level $i$ non-small SCCs and blocks for all $j \in [n]$.
    \end{itemize}
    In other words, $C_j$ is a random variable denoting either the SCC at level $i-1$ of the LDD Hierarchy that is medium and short all the way up and contains the $j$-th largest number of level-$i$ non-small SCCs and blocks (with ties broken consistently) or $\emptyset$ if less than~$j$ such SCCs exist at level $i-1$.
    
    For all $j > w$, let $R_j = \emptyset$. 
    For all $j \in [w]$, let $R_j$ be the (random) set $R$ produced by plugging $C_j$ into $C$ of \Cref{lem:sparse-restarts};
    that is, $R_j$ is the set of back-edges in $P$ that are contained in an SCC in the LDD Hierarchy that contains $C_j$.

    Then, notice that for any $j$:
    \begin{align*}
        \sum_{\substack{e \in G[C_j] \cap \eforward{i} \cap P}} \pi(e)
        \le
        \blocksize
        \cdot
        p_j
        \cdot
        \card{R_j},
    \end{align*}
    since $\pi(e) = \blocksize$, the maximum number of forward-edges in $G[C_j] \cap \eforward{i} \cap P$ per back-edge in $P$ contained in an ancestor of $C_j$ is $p_j$ (since forward-edges connect blocks and non-small SCCs), and the number of back-edges in $P$ contained in an ancestor of $C_j$ is no more than $\card{R_j}$.

    Finally, by \Cref{obs:sparse-has-back-then-short}, the contribution of $\eforward{i}$ to $\expect{\pi(P')}$ is at most
    \begin{align*}
        \expect{\sum_{j \in [n]} \sum_{\substack{e \in G[C_j] \cap \eforward{i} \cap P'}} \pi(e)}
        & \le
        \sum_{j \in [n]} \expect{\sum_{\substack{e \in G[C_j] \cap \eforward{i} \cap P}} \pi(e)}
        \tag{$G[C_j] \cap \eforward{i} \cap P' \subseteq G[C_j] \cap \eforward{i} \cap P$}
        \\
        & \le
        \sum_{j \in [n]} \expect{\blocksize \cdot p_j \cdot \card{R_j}}
        \tag{Paragraph above}
        \\
        & \le
        \sum_{j \in [n]} \expect{\blocksize \cdot \frac{n}{j \blocksize} \cdot \card{R_j}}
        \tag{$p_j$ is decreasing in $j$ and $\sum_j p_j = O(n / \blocksize)$, the number of blocks and non-small SCCs in level $i$}
        \\
        & =
        \sum_{j \in [n]} \expect{\card{R_j}} \cdot \frac{n}{j}
        \\
        & =
        \sum_{j \in [n]} O(\log^4(n) / \eps) \cdot \frac{n}{j}
        \tag{\Cref{lem:sparse-restarts}}
        \\
        & =
        O(n \log^5 n / \eps).
    \end{align*}
    Summing up over all $O(\log n)$ levels gives us the final bound of $O(n \log^6(n) / \eps)$.
\end{proof}

\paragraph{The Blocks. }
The final step is to bound the contribution of the edges of types $\ehop, \eintra$ (i.e., the edges inside blocks) to $\expect{\pi(P')}$.
We first make an observation similar to \Cref{obs:sparse-has-back-then-short}.

\begin{observation}
\label{obs:sparse-block-then-short}
    Let $B$ be a level $i$ block of the LDD Hierarchy.
    If $(G \cup \ehop)[B] \cap P' \neq \emptyset$, then~$B$ is short all the way up for $P$.
\end{observation}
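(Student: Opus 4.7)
The plan is to prove the contrapositive: if $B$ is not short all the way up for $P$, then $(G\cup\ehop)[B]\cap P'=\emptyset$. By hypothesis, at least one ancestor SCC of $B$ is long for $P$. Among all ancestors of $B$ that are either long for $P$ or large, let $C^*$ be the one lying at the smallest level $i^*$; by the minimality of $i^*$, $C^*$ itself has no long-or-large ancestor. Hence the SCC-replacement loop in the construction of $P'$ does process $C^*$ at level $i^*$, replacing the subpath from the first to the last occurrence of a $C^*$-vertex on $P'$ by the two star-edges $\{uc^*, c^*v\}\subseteq\estar{i^*}$.

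The first key step is to observe that this single replacement already erases every $G[C^*]$-edge of $P'$: any such edge has both endpoints in $C^*$, so, since $P'$ may be treated as a simple walk (the initial $P$ is simple and all replacements only shorten $P'$), it lies between the first and last visits of $P'$ to $C^*$ and is thus part of the replaced subpath. After the replacement, the only edges of $P'$ inside $G[C^*]$ are the two star-edges, which belong to $\estar{i^*}$ and therefore to neither $G$ nor $\ehop$. At all deeper levels every sub-SCC of $C^*$ has $C^*$ as a long-or-large ancestor and is skipped by the SCC loop, so no edge of $G[C^*]$ is ever reintroduced into $P'$. Since $B\subseteq C^*$, this already yields $P'\cap G[B]=\emptyset$ at the end of the SCC-replacement phase.

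For the hop-edge phase, the loop explicitly skips every block with a long-or-large ancestor; because $C^*$ is such an ancestor of $B$, the block $B$ is skipped and no hop-edge with both endpoints in $B$ is contributed to $P'$. Hop-edges added while processing other blocks cannot contribute either, since every hop-edge is supported inside a single block and the blocks of the hierarchy are pairwise vertex-disjoint (within a level by construction, across levels because blocks at different levels live inside vertex-disjoint parent SCCs: a level-$i$ block sits inside a small level-$i$ SCC, whereas a level-$(i{+}1)$ block sits inside a medium level-$i$ SCC). Combining the two phases gives $(G\cup\ehop)[B]\cap P'=\emptyset$, which is the contrapositive.

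I do not expect a serious obstacle here; the statement is essentially a bookkeeping consequence of how $P'$ is built. The only slightly delicate point is the ``single replacement kills every $G[C^*]$-edge of $P'$'' step, which relies on $P'$ behaving like a simple walk so that an edge with both endpoints in $C^*$ must sit between the first and last visits of $P'$ to $C^*$; this is fine because the initial $P$ is simple and every subsequent modification only removes vertices from $P'$.
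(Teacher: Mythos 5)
Your proof is correct and follows the same route as the paper's: arguing the contrapositive and tracking how the construction of $P'$ replaces everything inside the lowest-level long (or large) ancestor $C^*$ with star-edges, which lie outside $G\cup\ehop$. The paper's own proof is just a one-line appeal to "by construction"; yours fills in the details (minimality of $C^*$, why the single replacement removes all $G[C^*]$-edges, and why foreign hop-edges cannot land in $(G\cup\ehop)[B]$) without changing the underlying argument.
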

\begin{proof}
    We prove the contrapositive of the statement.
    Suppose that one of the ancestors of $B$ were long for $P$.
    Then by construction all edges from $(G \cup \ehop)[B] \cap P$ would be removed and replaced by some star-edges.
\end{proof}

We will also rely on the following standard fact~\cite{UllmanY91}.
\begin{fact}
\label{f:hitting-set}
    Let $k\in [1,n]$ be an integer.
    Let $P$ be a simple path in $G$ such that $\card{V(P)} \geq k$. 
    Suppose a subset $S \subseteq V$ is obtained by sampling $\Theta((n/k)\log{n})$ vertices from $V$ with replacement. 
    Then,~$P$ contains a vertex of $S$ with high probability.
    We can make this with high probability guarantee $(1 - n^{-c})$ for an arbitrary constant $c$ dependent on the constant hidden in the $\Theta$ notation.
\end{fact}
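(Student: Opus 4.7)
The plan is to give a standard union bound / independence argument. Since $V(P) \subseteq V$ with $\card{V(P)} \geq k$, a single vertex drawn uniformly at random from $V$ lies in $V(P)$ with probability at least $k/n$, i.e., misses $V(P)$ with probability at most $1 - k/n$.

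Next, I would exploit that the $N := \Theta((n/k)\log n)$ samples are mutually independent (this is precisely why we sample \emph{with replacement}). Therefore, by independence, the probability that all $N$ samples miss $V(P)$ is at most
\[
\paren{1 - \tfrac{k}{n}}^{N} \le e^{-Nk/n},
\]
using the standard inequality $1 - x \le e^{-x}$ valid for all real $x$. Taking the constant in the $\Theta$ large enough so that $N \geq c \cdot (n/k)\ln n$, the right-hand side is at most $e^{-c \ln n} = n^{-c}$, which gives the claimed $(1 - n^{-c})$ high-probability bound. Complementing, $P$ contains a sampled vertex with probability at least $1 - n^{-c}$.

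There is essentially no obstacle here; the only thing to watch is the bookkeeping between the constant $c$ in the failure probability $n^{-c}$ and the constant hidden in the $\Theta$ notation defining the sample size, which is just a matter of choosing the latter to be at least $c$ (times a $\ln$ vs.\ $\log$ conversion factor). No additional structure of $G$ or of $P$ beyond $\card{V(P)} \geq k$ is used, so the fact follows immediately in this form.
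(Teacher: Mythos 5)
Your proof is correct and is exactly the standard sampling/independence argument that the paper itself does not spell out but simply cites as a standard fact from \cite{UllmanY91}. Nothing is missing; the bookkeeping between the constant in the $\Theta$ and the exponent $c$ is handled as the fact's statement intends.
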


\begin{lemma}
\label{lem:sparse-block-edges}
    The contribution of $\ehop, \eintra$ to $\expect{\pi(P')}$ is $O(n \log^5(n) / \eps)$.
\end{lemma}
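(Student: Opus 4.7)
My strategy will closely mirror that of \Cref{lem:sparse-forward-edges}. First, for a single level-$i$ block $B$, I will bound the contribution of the $\ehop$ and $\eintra$ edges lying in $G[B]$ to $\pi(P')$ by $O(N^B\cdot\blocksize)$, where $N^B$ is the number of maximal subpaths of $P'\cap P$ contained in $G[B]$. This follows directly from inspecting the block step of the construction of $P'$: each maximal subpath with at least $\blockrun$ vertices is shortcut by one hop edge of $\pi$-value $\blocksize$ together with at most $2\blockrun$ retained intra-block edges each of $\pi$-value $\blocksize/\blockrun$, while shorter maximal subpaths contribute at most $\blockrun$ intra-block edges and no hop edges; in either case the per-subpath $\pi$-contribution is $O(\blocksize)$. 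By \Cref{obs:sparse-block-then-short} (combined with the fact that large ancestors are already shortcut in the SCC step of the $P'$-construction), only blocks that are short all the way up and have no large ancestor can contribute, so I may restrict to such ``relevant'' blocks.

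Next, I will upper bound $N^B$ by $1$ plus the number of edges of $P$ entering $B$ from outside, and aggregate over all relevant blocks at a fixed level $i$. Fixing a level-$(i{-}1)$ ancestor SCC $C$ (necessarily medium and short all the way up), every edge of $P$ entering some level-$i$ block inside $C$ falls into one of three types: (a) a back-edge in $\eback{i}\cap G[C]\cap P$; (b) a forward-edge in $\eforward{i}\cap G[C]\cap P$; or (c) an edge whose tail lies outside $C$. Types (a) and (b) will be bounded exactly as in \Cref{lem:sparse-forward-edges}: by the topological ordering at level $i$ within $C$, the number of type-(b) edges is at most $p\cdot\card{R_C}$, where $p$ is the number of level-$i$ non-small SCCs and blocks in $C$ and $R_C$ is the back-edge set supplied by \Cref{lem:sparse-restarts} applied to $C$; type-(a) edges are trivially at most $\card{R_C}$. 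For type~(c), I will argue that every (re-)entry of $P$ into $C$ requires a back-edge at some level strictly less than~$i$ in a proper ancestor of $C$, which is again captured by $R_C$, so type (c) contributes $O(\card{R_C}+1)$.

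Finally, I will perform the same sorting and aggregation step as in \Cref{lem:sparse-forward-edges}. Enumerate the relevant level-$(i{-}1)$ SCCs $C_1,C_2,\ldots$ in decreasing order of $p_j$; the constraint $\sum_j p_j = O(n/\blocksize)$ yields $p_j\leq O(n/(j\blocksize))$. Combining with $\expect{\card{R_j}}=O(\log^4(n)/\eps)$ from \Cref{lem:sparse-restarts}, the expected number of entries into level-$i$ blocks at a fixed level is
\[\textstyle\expect{\sum_j \bigl(p_j\card{R_j}+1\bigr)} \;\leq\; \sum_j O\!\bigl(\tfrac{n}{j\blocksize}\cdot \tfrac{\log^4 n}{\eps}\bigr) + O\!\bigl(\tfrac{n}{\blocksize}\bigr) \;=\; O\!\bigl(\tfrac{n\log^5 n}{\eps\,\blocksize}\bigr).\]
Multiplying by $\blocksize$ to convert an entry count into a $\pi(P')$-contribution and summing over the $O(\log n)$ levels of the hierarchy will deliver the claimed polylogarithmic bound.

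The main obstacle will be the bookkeeping around type (c): carefully justifying that each edge of $P$ entering a relevant level-$i$ block from \emph{outside} the level-$(i{-}1)$ ancestor $C$ can be charged to a back-edge already counted in the single random set $R_C$ supplied by \Cref{lem:sparse-restarts}, without double-counting against types (a) and (b). Once this charging is in place, the remainder is a direct adaptation of the sorted-summation argument from \Cref{lem:sparse-forward-edges}.
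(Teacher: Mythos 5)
Your high-level plan (bound each maximal block-subpath's contribution by $O(\blocksize)$, then count subpaths by charging re-entries to back-edges in ancestors via \Cref{lem:sparse-restarts}) is the right idea, but the way you count the re-entries costs you a $\log n$ factor and so does not establish the stated bound. You route the count of block entries through the \emph{total} number of level-$i$ forward edges inside each ancestor SCC $C_j$, bounded by $p_j\cdot\card{R_j}$, and then aggregate with the harmonic-sum argument of \Cref{lem:sparse-forward-edges}. That yields $O(n\log^5(n)/(\eps\,\blocksize))$ entries per level, hence $O(n\log^5(n)/\eps)$ per level after multiplying by $\blocksize$, and $O(n\log^6(n)/\eps)$ after summing over the $O(\log n)$ levels --- one $\log$ worse than claimed. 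The paper avoids the harmonic sum entirely by charging \emph{per block}: if $P'$ leaves a block $B$ via a forward edge it strictly advances in the topological order of level-$i$ blocks and non-small SCCs, so any return to the \emph{same} block $B$ must be preceded by a back-edge in an ancestor of $B$; hence the number of maximal subpaths through $B$ is at most $1+\card{R_B}$ with $\expect{\card{R_B}}=O(\log^4(n)/\eps)$ by \Cref{obs:sparse-block-then-short} and \Cref{lem:sparse-restarts}. Summing $\blocksize\cdot O(\log^4(n)/\eps)$ over the $O(n/\blocksize)$ blocks per level and $O(\log n)$ levels gives $O(n\log^5(n)/\eps)$ with no factor $p_j$ and no harmonic sum. (The weaker $\log^6$ bound would still be absorbed by \Cref{lem:sparse-forward-edges} in \Cref{lem:sparse-pi}, but it is not a proof of the lemma as stated.)

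A second, smaller omission: your claim that each maximal subpath contributes one $\ehop$ edge plus $O(\blockrun)$ edges of $\eintra$ "follows directly from inspecting the construction" only holds when the sample $S_B$ actually hits the first and last $\blockrun$ vertices of the subpath. This is a high-probability event via \Cref{f:hitting-set}, not a certainty, and when it fails a subpath can retain up to $\Theta(\blocksize)$ intra-block edges. The paper conditions on the corresponding event $X$, and uses the crude worst-case bound $\pi(P')\le n^2$ together with $\prob{\bar{X}}\le n^{-2}$ to show the failure branch contributes $O(1)$ to the expectation. You need this (or an equivalent) step; without it the per-subpath $O(\blocksize)$ bound is not justified unconditionally.
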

\begin{proof}
    By \Cref{f:hitting-set}, the probability that for all blocks $B$ all maximal subpaths of $P'$ in $(G \cup \ehop)[B]$ each use at most one edge from $\ehop$ and $O(\blockrun)$ edges from $\eintra$ is at least $(1 - n^{-2})$, which follows from connecting the $\Theta(\blocksize \log (n) / \blockrun)$ randomly sampled vertices in $B$ with edges in $\ehop$.
    Call this event $X$, and let us condition under it holding for now.
    For any block $B$, any maximal subpath of $P'$ moving through $B$ would contribute
    \begin{align*}
        \underbrace{\blocksize}_{\substack{\pi(e \in \ehop)}}
        +
        O(\blockrun)
        \cdot
        \underbrace{\blocksize / \blockrun}_{\substack{\pi(e \in \eintra)}}
        =
        O(\blocksize)
    \end{align*}
    to $\pi(P')$.
    The number of such subpaths through $B$ is no more than one plus the number of back-edges in ancestors of $B$ in the LDD Hierarchy.
    To see this, note that when walking along $P'$, an edge leaving~$B$ is either a back-edge contained in an ancestor SCC, or a forward-edge which makes an increasing step in the topological order of blocks and non-small SCCs at $B$'s level, and thus, in this case as well, the next return of $P'$ to $B$ can be charged to next back-edge taken that is contained in an ancestor of $B$.
    Let $R_B$ be the set of back-edges contained in ancestors of $B$.
    By \Cref{obs:sparse-block-then-short} chained with \Cref{lem:sparse-restarts}, $\expect{\card{R_B}+1} = O(\log^4 (n) / \eps)$.

    Putting all this together, the expected contribution $Q$ of $\ehop, \eintra$ to $\pi(P')$ is
    \begin{align*}
        \expect{Q}
        & =
        (1 - 1/n^2) \expect{Q\Bigg\lvert X}
        +
        (1/n^2) \expect{Q \Bigg\lvert \bar{X}}
        \\
        & \le
        \expect{Q \Bigg\lvert X} + 1
        \tag{$\pi(P') = n^2$ in the worst case}
        \\
        & =
        O\paren{\expect{\sum_B \blocksize \card{R_B} \Bigg\lvert X}}
        \\
        & =
        O\paren{\sum_B \blocksize \log^4(n) / \eps}
        \tag{Paragraph above}
        \\
        & =
        O\paren{\frac{n \log n}{\blocksize} \blocksize \log^4(n) / \eps}
        \tag{$O(n / \blocksize)$ blocks per level and $O(\log n)$ levels}
        \\
        & =
        O\paren{n \log^5(n) / \eps}.
    \end{align*}
\end{proof}

\paragraph{Bounding $\expect{\pi(P')}$. }
Tallying up the contributions yields the final claim on $\expect{\pi(P')}$.
\begin{lemma}
\label{lem:sparse-pi}
    $\expect{\pi(P')} = O(n \log^6(n) / \eps)$
\end{lemma}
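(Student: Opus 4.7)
The plan is to simply tally the contributions to $\expect{\pi(P')}$ from each of the disjoint edge categories that partition $E(G\cup H)$, namely $\edead$, $\bigcup_i \eback{i}$, $\bigcup_i \eforward{i}$, $\bigcup_i \estar{i}$, $\ehop$, and $\eintra$. Every edge of $P'$ falls into exactly one of these categories (the construction of $P'$ either keeps edges of $P$ or introduces new edges from $\estar{i}$ or $\ehop$), so by linearity of expectation the total $\expect{\pi(P')}$ is the sum over categories.

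First, I would invoke \Cref{obs:sparse-easy-edges} to absorb the contributions from $\edead$ (which is $0$ since $P'$ never uses dead edges), and from $\bigcup_i \eback{i}$ and $\bigcup_i \estar{i}$ aggregated over all $O(\log n)$ levels, giving a combined bound of $O(n \log^4(n)/\eps)$. Next, I would invoke \Cref{lem:sparse-forward-edges} for the contribution of $\bigcup_i \eforward{i}$, which gives $O(n \log^6(n)/\eps)$ after summing over all $O(\log n)$ levels. Finally, \Cref{lem:sparse-block-edges} handles $\ehop$ and $\eintra$ together, contributing $O(n \log^5(n)/\eps)$.

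Adding these up and keeping the dominant term yields $\expect{\pi(P')} = O(n \log^6(n)/\eps)$, as claimed. There is no real obstacle here: all the hard work has been done in the preceding lemmas, and this is essentially a one-line bookkeeping step. The only thing worth double-checking is that the categorization is indeed a partition of the edges $P'$ could possibly use (so that linearity applies cleanly), which is immediate from the construction: the edges of $G$ are partitioned into $\edead \cup \eintra \cup \bigcup_i (\eback{i} \cup \eforward{i})$ by Phases 1 and 2 of the LDD hierarchy, and the edges of $H$ are partitioned into $\ehop \cup \bigcup_i \estar{i}$ by definition.
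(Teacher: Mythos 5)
Your proof is correct and matches the paper's, which likewise obtains the bound by summing \Cref{obs:sparse-easy-edges,lem:sparse-forward-edges,lem:sparse-block-edges}. Your additional remark that the six edge types partition $E(G \cup H)$ is a fine sanity check but not strictly needed beyond what the cited results already provide.
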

\begin{proof}
    This is immediate from summing up \Cref{obs:sparse-easy-edges,lem:sparse-forward-edges,lem:sparse-block-edges}.
\end{proof}

\paragraph{The Error Accumulated. }
Finally, we show that $P'$ approximates $P$ in length and delay.
Recall that every time $P'$ uses an edge in $\estar{i}$, it incurs as much as $D_i$ additive error (in both length and delay) because all such edges have $\ell(uv) = d(uv) = D_i$, while the $(u,v)$-subpath of $P$ could be arbitrarily short.
For the LDD Hierarchy on sparse directed graphs, there is one more source of error: using the edges of $\ehop$.
Every time $P'$ uses an edge from $\ehop$, the delay contribution of the replaced subpath of $P$ is dilated by a $(1 + \eps)$ factor.
We need to show that the sum of all these errors is no more than $O(n \log n)$.

\begin{observation}
\label{obs:sparse-error}
    $\ell(P') \le (1 + O(\eps \log n))n/\eps$ and $d(P') \le (1 + O(\eps \log n))n/\eps$.
\end{observation}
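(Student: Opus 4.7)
The plan is to follow the template of \Cref{obs:dense-error}, adding a new term for the hop-edges that are specific to the sparse setting. Recall that the construction of $P'$ modifies $P$ in two ways: star-edge replacements for long-for-$P$ or large SCCs in the first loop, and hop-edge replacements for long subpaths through blocks in the second loop. I will bound the contribution of each source to $\ell(P') - \ell(P)$ and $d(P') - d(P)$ separately.

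For the star-edges the argument mirrors the dense case. At each level $i$, \Cref{def:sparse-long} together with $\ell(P), d(P) \le n/\eps$ gives at most $2n/D_i$ SCCs that are long for $P$ (since each consumes more than $D_i/\eps$ of length or delay), and \Cref{def:sparse-large} combined with the vertex-disjointness of SCCs gives at most $n/D_i$ large SCCs. Each such SCC triggers at most two edges from $\estar{i}$, each of length and delay $D_i$, and so contributes $O(n)$ to both $\ell(P')$ and $d(P')$ per level. Summing over the $O(\log n)$ levels yields a total star-edge overhead of $O(n \log n)$ in both length and delay.

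For the hop-edges, each replacement of a subpath $Q_{uv}$ uses an edge $uv \in \ehop$ whose length is returned by the $(1, 1+\eps)$-approximate All-Pairs $\rsp$ call at the matching delay granularity, and is hence bounded above by $\ell(Q_{uv})$, while its assigned delay satisfies $d(uv) \le (1+\eps) \cdot d(Q_{uv})$ by the choice of the index $k$ in the construction. So hop-edge replacements introduce no length overhead, and dilate the delay of each replaced $Q_{uv}$ by at most a $(1+\eps)$ factor. Since the subpaths $Q_{uv}$ picked across all blocks and iterations of the second loop are pairwise disjoint subpaths of $P$, their combined delay is at most $d(P)$, so the cumulative delay overhead contributed by hop-edges is at most $\eps \cdot d(P) \le n$.

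Combining the two contributions yields $\ell(P') \le \ell(P) + O(n \log n) \le (1 + O(\eps \log n))n/\eps$ and $d(P') \le (1+\eps)d(P) + O(n \log n) \le (1 + O(\eps \log n))n/\eps$, as required. The proof is essentially routine once the construction is unpacked; the only care required is the level-by-level counting for star-edges (identical in spirit to the dense case) and the observation that the hop-edge construction was designed precisely so that the length of a replacement edge is bounded by the length of the subpath it replaces, so that hop-edges cost only in delay.
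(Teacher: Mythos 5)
Your proof is correct and follows essentially the same route as the paper's: a per-level count of long-for-$P$ (at most $2n/D_i$) and large (at most $n/D_i$) SCCs to charge the star-edge error $O(n)$ per level and $O(n\log n)$ overall, plus the observation that hop-edge replacements never increase length and dilate the delay of the (disjoint) replaced subpaths by a $(1+\eps)$ factor, for at most $\eps\, d(P)\le n$ extra delay. Your explicit remark that the replaced subpaths $Q_{uv}$ are pairwise disjoint is a small but welcome clarification of a point the paper leaves implicit.
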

\begin{proof}\ 

    \underline{Error from $\estar{i}$}:
    First, note that there are at most $n/D_i$ large SCCs at level $i$ of the LDD Hierarchy.
    The edges from $\estar{i}$ that were added to $P'$ from large SCCs thus accumulate a total error of $2n$.
    Next, note that there are at most $2n/D_i$ SCCs at level $i$ of the LDD Hierarchy that are long for $P$ since $\ell(P) \le n / \eps$ and $d(P) \le n / \eps$.
    The edges $\estar{i}$ that were added to $P'$ from long SCCs thus accumulate a total error of $4n$.

    \underline{Error from $\ehop$}:
    Shortcutting blocks by using edges in $\ehop$ weakly decreases the error in length, and dilates the delay by at most $(1 + \eps)d(Q_{uv})$, where $Q_{uv}$ was the subpath of $P \cap P'$ replaced by the edge $uv$ from $\ehop$.
    Since $d(P) \le n/\eps$, this adds at most $n$ error over all the blocks.

    \underline{Total Error}:
    In aggregate over all $\log n$ levels, there is thus an additive $O(n \log n)$ error in length and delay.
    That is, $\ell(P') \le \ell(P) + O(n \log n) \le n / \eps + O(n \log n) = (1 + O(\eps \log n))n/\eps$.
    A similar calculation gives the same bound for $d(P')$.
\end{proof}

\subsubsection{Bounding $\Pi$ and the Construction Time}

The missing ingredients needed to invoke \Cref{lem:gap-solver} are the following, which we now address.
\begin{itemize}
    \item An upper bound for $\Pi$ (recall its meaning from \Cref{thm:pi-dp}).
    \item The construction time of $\pi, H$ or, put otherwise, the LDD Hierarchy.
\end{itemize}

\begin{observation}
\label{obs:sparse-Pi}
If $\blockrun=\Omega(\log{n})$, then:
    $\Pi = O\paren{\frac{n \log^4 n}{\blockrun^2 \eps} + \frac{m \blockrun}{\blocksize}}$.
\end{observation}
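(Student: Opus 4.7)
The plan is to bound $\Pi = \sum_{e \in G \cup H} 1/\pi(e)$ by partitioning edges according to the six labels defining $\pi$ and summing the contribution of each label separately. Edges in $\edead$ contribute $0$ since $\pi(e)=\infty$. Edges in $\estar{i}$ are handled by noting that at each level the SCCs are vertex-disjoint, so $|\estar{i}| = O(n)$; across $O(\log n)$ levels this yields $O(n \log n)$ star-edges, each contributing $1/\blocksize$, for a total of $O(n \log(n)/\blocksize)$.

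Second, I would argue that the edges labeled $\eback{i}$ or $\eforward{i}$ (over all $i$) together with $\eintra$ partition a subset of $E(G)$, since each edge of $G$ is placed in at most one $\eback{i}\cup\eforward{i}$ during Phase~1 and Phase~2 only moves some of these into $\eintra$ (the rest are in $\edead$ or in the small SCCs). Hence $\sum_i (|\eback{i}|+|\eforward{i}|) \le m$, contributing $O(m/\blocksize)$ to $\Pi$, and $|\eintra| \le m$, contributing $O(m \blockrun / \blocksize)$.

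The heart of the computation is $\ehop$. For each level $i$ and each block $B$ at that level, Phase~2 samples $|S_B| = \Theta(\blocksize \log(n)/\blockrun)$ vertices, and adds one hop-edge for every ordered pair in $S_B \times S_B$ and every delay level $(1+\eps)^k$ in $[\delta, D]$. Since $D/\delta = n^2/\eps$, the number of such $k$ is $O(\log(n)/\eps)$, giving $O(|S_B|^2 \cdot \log(n)/\eps) = O(\blocksize^2 \log^3(n)/(\blockrun^2 \eps))$ hop-edges per block. Summing over the $O(n/\blocksize)$ blocks per level and $O(\log n)$ levels gives $|\ehop| = O(n \blocksize \log^4(n)/(\blockrun^2 \eps))$; each such edge has $\pi(e) = \blocksize$ and so contributes $O(n \log^4(n)/(\blockrun^2 \eps))$ in total.

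Adding up, $\Pi = O\!\paren{m/\blocksize + m\blockrun/\blocksize + n\log(n)/\blocksize + n\log^4(n)/(\blockrun^2 \eps)}$. The first summand is absorbed by $m\blockrun/\blocksize$ because $\blockrun \ge 1$, and the third is absorbed by $m\blockrun/\blocksize$ because $\blockrun = \Omega(\log n)$ and $m \ge n-1$; this yields the claimed bound. The only real obstacle is a bookkeeping one: being careful that $\eback{i}, \eforward{i}, \eintra$ are disjoint subsets of $E(G)$ after Phase~2's relabeling (so no edge is double-counted), and tracking the $O(\log(n)/\eps)$ factor for the number of delay levels $k$ in the construction of $\ehop$.
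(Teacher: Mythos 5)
Your proof is correct and follows essentially the same route as the paper: partition $G \cup H$ by label, bound each label's count and multiply by the corresponding $1/\pi(e)$, then absorb the smaller terms into $m\blockrun/\blocksize$ using $\blockrun = \Omega(\log n)$. The only cosmetic difference is that you explicitly track the $O(\log n)$ levels for $\estar{i}$ and spell out the disjointness of $\eback{i},\eforward{i},\eintra$ as subsets of $E(G)$, which the paper leaves implicit.
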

\begin{proof}
    We bound the contribution of each type of edge, and then sum them up.

    \underline{$\edead$}:
    These contribute nothing to $\Pi$.

    \underline{$\ehop$}:
    In each block, there are $\frac{\blocksize \log n}{\blockrun}$ vertices incident to edges in $\ehop$.
    Each pair of such vertices has $O(\log(n) / \eps)$ parallel edges from $\ehop$ between them since we have added one edge per delay $(1 + \eps)^k$ in the range $[\delta,D]$.
    Since there are $O(n / \blocksize)$ level $i$ blocks and $O(\log n)$ levels, the contribution of $\ehop$ to $\Pi$ is
    \begin{align*}
        \underbrace{O(\log(n) / \eps)}_{\substack{\textrm{Edges per pair}}}
        \cdot
        \underbrace{\paren{\blocksize \log(n) /\blockrun}^2}_{\substack{\textrm{Pairs per block}}}
        \cdot
        \underbrace{O(n / \blocksize)}_{\substack{\textrm{Blocks per level}}}
        \cdot
        \underbrace{O(\log n)}_{\substack{\textrm{Levels}}}
        \cdot
        \underbrace{1/\blocksize}_{1/\pi(e)}
        =
        O\paren{\frac{n \log^4 n}{\blockrun^2 \eps}}.
    \end{align*}

    \underline{$\eintra$}:
    The total contribution of $\eintra$ to $\Pi$ is
    \begin{align*}
        \underbrace{O(m)}_{\textrm{Edges}}
        \cdot
        \underbrace{\blockrun/\blocksize}_{1/\pi(e)}
        =
        O\paren{\frac{m \blockrun}{\blocksize}}.
    \end{align*}

    \underline{$\eback{i}$}:
    The total contribution of $\eback{i}$ to $\Pi$ is
    \begin{align*}
        \underbrace{O(m)}_{\textrm{Edges}}
        \cdot
        \underbrace{1/\blocksize}_{1/\pi(e)}
        =
        O\paren{\frac{m}{\blocksize}}.
    \end{align*}
    
    \underline{$\eforward{i}$}:
    The total contribution of $\eforward{i}$ to $\Pi$ is
    \begin{align*}
        \underbrace{O(m)}_{\textrm{Edges}}
        \cdot
        \underbrace{1/\blocksize}_{1/\pi(e)}
        =
        O\paren{\frac{m}{\blocksize}}.
    \end{align*}
    
    \underline{$\estar{i}$}:
    The total contribution of $\estar{i}$ to $\Pi$ is
    \begin{align*}
        \underbrace{O(n)}_{\textrm{Edges}}
        \cdot
        \underbrace{1/\blocksize}_{1/\pi(e)}
        =
        O\paren{\frac{n}{\blocksize}}.
    \end{align*}

    Note that the last three contributions (over all levels) are dominated by the contribution of $\eintra$ and, therefore,
    $\Pi = O\paren{\frac{n \log^4 n}{\blockrun^2 \eps} + \frac{m \blockrun }{\blocksize}}$.
\end{proof}

\begin{observation}
\label{obs:sparse-construction-time}
    Constructing $H$ and $\pi$ takes $O(m \blocksize \log^6 n / \eps^2)$ time.
\end{observation}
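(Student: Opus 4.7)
The plan is to bound Phases 1 and 2 of the LDD-hierarchy construction separately, showing that the All-Pairs $\rsp$ preprocessing performed inside each block in Phase 2 dominates everything else.

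Phase 1 is cheap: at each of the $O(\log(n/\blocksize))$ levels, we perform one call to $\LDD$ (\Cref{prop:ldd}) and one call to $\SCC$ (\Cref{prop:scc}) per medium SCC inherited from the previous level. Since those induced subgraphs are vertex-disjoint, the total work at a single level is $\Ot(m)$, and summing over all levels gives $\Ot(m \log n)$. Creating the $\estar{i}$ star-edges contributes only $O(n)$ new edges per level, and \Cref{obs:finely-chopped} chops the small SCCs of each level into finely chopped blocks in $O(n)$ time. All of this is dwarfed by the target bound.

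For Phase 2, I would invoke \Cref{thm:all-pairs} on each block $B_j$ (with $|B_j| \le \blocksize$), which preprocesses $(1,1+\eps)$-approximate All-Pairs $\rsp$ in time
\[
    O\!\left(|E(G[B_j])| \cdot |B_j| \log^3 n / \eps + |B_j|^2 \log^5 n / \eps^2\right),
\]
and supports the $O(1)$-time queries needed to write out each hop-edge's length. The key combinatorial observation is that within a single level the blocks are vertex-disjoint, so $\sum_j |E(G[B_j])| \le m$ and $\sum_j |B_j| \le n$. The quadratic term is then controlled by the per-block cap $|B_j| \le \blocksize$, which yields $\sum_j |B_j|^2 \le \blocksize \sum_j |B_j| \le n\blocksize$. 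Hence a single level costs $O(m\blocksize \log^3 n / \eps + n\blocksize \log^5 n / \eps^2)$; multiplying by the $O(\log n)$ levels and folding $n \le m$ into the first summand gives the claimed $O(m\blocksize \log^6 n / \eps^2)$ bound.

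The remaining cost is materializing $\ehop$: each block contributes $\Ot((\blocksize/\blockrun)^2)$ sampled pairs, each of which produces $O(\log(n)/\eps)$ parallel edges (one per delay threshold $(1+\eps)^k$ in the $\poly(n)$-length range $[\delta,D]$), and each such edge's length is read off in $O(1)$ from the data structure above. Across all $O(\log n)$ levels this is $\Ot(n\blocksize / (\blockrun^2 \eps))$, comfortably absorbed into the preceding bound. The only mildly delicate point — essentially the main obstacle — is checking that the $|B_j|^2$ second term in \Cref{thm:all-pairs} does not blow up when aggregated across the blocks at a level; it is exactly the uniform size cap $|B_j| \le \blocksize$ that tames this sum, while everything else telescopes linearly in $m$ or $n$.
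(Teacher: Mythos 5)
Your proposal is correct and follows essentially the same route as the paper: Phase 1 is cheap via \Cref{prop:ldd,prop:scc}, the per-level cost is dominated by the All-Pairs $\rsp$ preprocessing inside the disjoint blocks, and the $\ehop$ materialization is absorbed. You are in fact slightly more careful than the paper in handling the quadratic term of \Cref{thm:all-pairs} (via $\sum_j |B_j|^2 \le n\blocksize \le m\blocksize$), which the paper glosses over by folding it directly into the $|E_G(B_j)|\blocksize$ term.
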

\begin{proof}
    The first phase of constructing the LDD Hierarchy takes $O\paren{(m + n \log\log n) \log^3 n}$ since each iteration of the main loop which constructs the LDD Hierarchy is dominated by the \linebreak
    $O\paren{(m + n \log\log n) \log^2 n}$ cost by \Cref{prop:ldd,prop:scc}.
    This turns out to be dominated by the second phase, which comprises of three computationally demanding steps over $O(\log n)$ phases:
    \begin{itemize}
        \item Constructing finely chopped blocks.
            This takes $O(n)$ time by \Cref{obs:finely-chopped}.
        \item Running the algorithm for All-Pairs $\rsp$ in each block.
            The computation on each block $B_j$ of size $O(\blocksize)$ takes $O(\card{E_G(B_j)} \blocksize \log^5(n) / \eps^2)$ time by \Cref{thm:all-pairs}.
            Since the blocks are disjoint, this totals to $O(m \blocksize \log^5 n / \eps^2)$ running time.
        \item Adding edges to $H$.
            This takes $O(n \blocksize \log^4 n / (\blockrun^2 \eps))$ time (see $\eintra$ calculation in \Cref{obs:sparse-Pi}).
    \end{itemize}
    The dominant term is the one for computing All-Pairs $\rsp$, and so the time to construct~$H$ and $\pi$ is $O(m \blocksize \log^6(n) / \eps^2)$.
\end{proof}

\subsection{Putting the Pieces Together}

We are finally poised to prove the following theorem, which implies~\Cref{thm:sparse-simple}.
\begin{restatable}[$\rsp$ on Sparse Graphs]{theorem}{mainsparse}
\label{thm:sparse}
    Let $m = n^{1 + \alpha}$, for $\alpha \in [0,1/2]$.
    There is a ${(1+\eps,1+\eps)}$\nobreakdash-approximate algorithm for $\rsp$ that runs in $O\paren{mn^{(3 - \alpha)/5} \log^{15}(n) \log(nW) / \eps^4}$ time, where $W$ is the aspect ratio of lengths.
    The algorithm is Monte Carlo randomized and its output is correct with high probability.
\end{restatable}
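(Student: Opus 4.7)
The plan is to combine all of the ingredients developed in Section~\ref{sec:sparse} through the two reduction lemmas \Cref{lem:gap-solver} and \Cref{lem:reduction-to-gap}, and then pick the block parameters $\blocksize,\blockrun$ so as to balance the three runtime terms against the input density $m=n^{1+\alpha}$. Concretely, \Cref{obs:sparse-H-pfp} tells us $H$ is \PFP, while \Cref{lem:sparse-pi} together with \Cref{obs:sparse-error} establishes that $\pi,H$ $6$-satisfy the \PSCs for every $t$, \Cref{obs:sparse-Pi} gives $\Pi=O\!\left(n\log^4(n)/(\blockrun^2\eps)+m\blockrun/\blocksize\right)$, and \Cref{obs:sparse-construction-time} gives construction time $T=O(m\blocksize\log^6(n)/\eps^2)$. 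Feeding these into \Cref{lem:gap-solver} yields a Gap $\rsp$ algorithm whose runtime is the maximum (up to constants) of
\[
\frac{m\blocksize\log^6 n}{\eps^2},\qquad
\frac{n^2\log^{10}n}{\blockrun^2\eps^3},\qquad
\frac{mn\,\blockrun\log^6 n}{\blocksize\,\eps^2}.
\]

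The next step is to choose $\blocksize=n^x$ and $\blockrun=n^y$ with $y\le x$ that balance these three terms. Ignoring log and $\eps$ factors, the three terms become $n^{1+\alpha+x}$, $n^{2-2y}$ and $n^{2+\alpha+y-x}$. Setting $1+\alpha+x=2-2y$ and $2-2y=2+\alpha+y-x$ and solving gives $x=(3-\alpha)/5$ and $y=(1-2\alpha)/5$. The constraint $y\le x$ holds for all $\alpha\ge -1/2$, and the mild requirement $\blockrun=\Omega(\log n)$ used in \Cref{obs:sparse-Pi} translates into $y\ge 0$, i.e.\ $\alpha\le 1/2$, which is exactly the hypothesis of the theorem. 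With this choice each of the three terms equals $mn^{(3-\alpha)/5}$ up to polylog and $\eps$ factors, and the largest polylog/$\eps$ overhead is $\log^{10}(n)/\eps^3$ coming from the second term. Hence Gap $\rsp$ is solved, with the one-sided error guaranteed by \Cref{lem:gap-solver}, in time $O\!\left(mn^{(3-\alpha)/5}\log^{10}(n)/\eps^3\right)$.

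Finally, we apply \Cref{lem:reduction-to-gap} with $T(m,n)=mn^{(3-\alpha)/5}\log^{10}n$ and the constant $\alpha=3$ (matching the $\eps^{-3}$ dependence of the Gap $\rsp$ solver). The reduction introduces a multiplicative overhead of $\log^{\alpha+2}(n)\log(nW)/\eps = \log^{5}(n)\log(nW)/\eps$, yielding the final running time
\[
O\!\left(mn^{(3-\alpha)/5}\log^{15}(n)\log(nW)/\eps^4\right),
\]
with high-probability correctness, as required.

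The only genuinely delicate point is verifying that the chosen $\blocksize,\blockrun$ fall in the allowed regime for all $\alpha\in[0,1/2]$; everything else is essentially a careful bookkeeping of the contributions already quantified in Section~\ref{sec:sparse}. Consequently there is no new combinatorial obstacle; the work reduces to the algebraic balancing above and a mechanical substitution into \Cref{lem:gap-solver,lem:reduction-to-gap}.
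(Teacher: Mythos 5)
Your proposal is correct and follows essentially the same route as the paper's proof: it assembles \Cref{obs:sparse-H-pfp,obs:sparse-Pi,obs:sparse-construction-time,lem:sparse-pi,obs:sparse-error} into \Cref{lem:gap-solver} and then \Cref{lem:reduction-to-gap}, arriving at the same parameters $\blocksize=n^{(3-\alpha)/5}$, $\blockrun=n^{(1-2\alpha)/5}$ and the same final bound. The only difference is that you explicitly derive these exponents by balancing the three runtime terms, whereas the paper simply states them and verifies; this is a welcome addition rather than a deviation.
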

\begin{proof}
    For a graph with $m = n^{1 + \alpha}$, where $\alpha \in [0,1/2]$, set
    \[\blockrun = n^{(1 - 2\alpha)/5} \textrm{ and } \blocksize = n^{(3 - \alpha)/5}.\]
    Note that $\blockrun \le \blocksize$, which is what we require for the second phase of constructing the LDD Hierarchy to make sense.
    \begin{itemize}
        \item Constructing $H$ and $\pi$ takes $O(m \blocksize \log^6 n / \eps^2)$ time by \Cref{obs:sparse-construction-time}.
            This resolves to $O(n^{(8 + 4\alpha)/5} \log^6 n / \eps^2)$ time.
        \item By \Cref{obs:sparse-H-pfp}, $H$ is \PFP.
        \item By \Cref{obs:sparse-Pi}, $\Pi = O\paren{\frac{n \log^4 n}{\blockrun^2 \eps} + \frac{m \blockrun}{\blocksize}}$.
            This resolves to $O\paren{n^{(3 + 4\alpha)/5} \log^4(n) / \eps}$ time.
        \item By \Cref{lem:sparse-pi,obs:sparse-error}, the \PSCs are $6$-satisfied by $\pi,H$ for all $t \in V$.
    \end{itemize}
    Plugging the above into \Cref{lem:gap-solver} allows us to solve Gap $\rsp$ in time \linebreak
    $O\paren{n^{(8 + 4\alpha)/5} \log^{10}(n) / \eps^3}$.
    Plugging this solver for Gap $\rsp$ into \Cref{lem:reduction-to-gap} allows us to then get a $(1 + \eps, 1 + \eps)$-approximate solution to $\rsp$ in $O\paren{n^{(8 + 4\alpha)/5} \log^{15}(n) \log(nW) / \eps^4}$ time,
    which resolves to $O\paren{mn^{(3 - \alpha)/5} \log^{15}(n) \log(nW) / \eps^4}$,
    completing \Cref{thm:sparse}.
\end{proof}

\section{All-Pairs $\rsp$}
\label{sec:all-pairs}
In this section, we give a $(1,1+\eps)$-approximate solution to the All-Pairs $\rsp$ problem running in near-optimal $\Ot(mn/\eps+n^2/\eps)$ time.

Specifically, given $D_{\min},D_{\max}\in \mathbb{R}_{\geq 0}$ such that $D_{\max}/D_{\min}=\poly(n)$, we show how to preprocess the graph so that for any $s,t\in V$ and $D\in [D_{\min},D_{\max}]$, in $O(1)$ time we can compute the length of some path $P_{s,t}$ such that $\ell(P_{s,t})\leq \rdist(s,t,D)$ and $d(P_{s,t})\leq (1+\eps)D$.
We first show how to achieve a weaker delay error guarantee of $d(P_{s,t})\leq (1+\eps)^{O(\log{n})}\cdot D$.
By decreasing the error parameter $\eps$ by a factor $\Theta(\log{n})$ we will get the desired accuracy with only polylogarithmic overhead in the running time.

Additionally, let us assume without loss of generality that all delays are at least $\eps D_{\min}/n$.
Otherwise, one could increase all the smaller delays to that value and this would incur at most $\eps D_{\min}$ additive delay error on any simple path in $G$.
Again, since our goal is obtaining merely $(1+\eps)^{O(\log{n})}$ multiplicative error on the delay, this is acceptable. 

\paragraph{Dynamic programming.}
The algorithm can be again viewed as an application of dynamic programming.
It combines dynamic programming with the technique of shortcuts, where the shortcuts we add are analogous to those used in~\cite{Bernstein16} for the different problem of Decremental All-Pairs Shortest Paths. 

Let $q=\lceil \log_2{n}\rceil$.
Let $V_0:=V$.
For $i=1,\ldots,q-1$, let $V_i$ be obtained by sampling $\Theta((n/2^i)\log{n})$ vertices of $V$ with replacement.

Let us also put $\alpha_{\min}:=\lfloor\log_{1+\eps}(\eps D_{\min}/n)\rfloor$ and $\alpha_{\max}:=\lceil \log_{1+\eps}(D_{\max})\rceil$.
Note that \linebreak ${\alpha_{\max}-\alpha_{\min}}=O(\log(n)/\eps)$.

\newcommand{\ap}{R}

For each $k=q-1,\ldots,0$, we will subsequently compute the values $\ap_k(s,t,i)$, where\linebreak $s,t\in (V_k\times V)\cup (V\times V_k)$ and $i\in\{\alpha_{\min},\ldots,\alpha_{\max}\}$, defined as follows:
\begin{align*}
    \ap_k(s,t,i):=& \text{ the length of some }(s,t)\text{-path }P_{s,t}\text{ such that }\\
    & \ell(P_{s,t})\leq\rdist(s,t,(1+\eps)^i)\text{ and }d(P_{s,t})\leq (1+\eps)^{i+2(q-k)}.
\end{align*}
Note that since $V_0=V$, the values $R_0(s,t,i)$ are sufficient to achieve our goal.
Indeed, given some $s,t\in V$, and $D\in [D_{\min},D_{\max}]$, it is enough to output $R_0(s,t,j)$ for $j=\lceil \log_{1+\eps}(D)\rceil$.
Then the corresponding path $P_{s,t}$ satisfies
\[\ell(P_{s,t})\leq \rdist(s,t,(1+\eps)^j)\leq \rdist(s,t,D)\]
and
\[d(P_{s,t})\leq (1+\eps)^{j+2q}\leq (1+\eps)^{2q+1}D=(1+\eps)^{O(\log{n})} D,\]
as desired.

We will only show how to compute the values $\ap_k(s,t,i)$ for pairs $(s,t)\in V_k\times V$.
For every subsequent $k$, the DP values for $(s,t)\in V\times V_k$ are handled symmetrically, by possibly running a part of the computations on the \emph{reverse graph} $G^R$, i.e., $G$ with all its edge directions reversed.

The algorithm will run $\pidp{\one}$ from~\Cref{sec:dp} multiple times on the graph $G$ with some appropriate auxiliary edges (shortcuts) added, for various sources $s$ and values $h$.
In all these runs we will set $D_{\min}=\eps\cdot (1+\eps)^{\alpha_{\min}}$ and $D_{\max}=(1+\eps)^{\alpha_{\max}+O(\log{n})}$ (see~\Cref{sec:dp}); note that the ratio of these boundary values is polynomial in $n$, as required by~\Cref{thm:pi-dp}.

For $k=q-1$, we simply run $\pidp{\one}$ for all sources $s\in V_{q-1}$ and $h=n$.
By~\Cref{thm:pi-dp}, this costs $O\left(|V_{q-1}|mn\log(n)/\eps\right)=O(mn\log^2(n)/\eps)$ time.
Moreover, since $\dist^n(s,t,x)=\dist(s,t,x)$ for any $x$, observe that it allows obtaining precisely the sought values $\ap_{q-1}(s,t,i)$.

\paragraph{Transitions.}
Let $k<q-1$. We now show how the values $\ap_k(s,t,i)$ are computed based on the values $\ap_{k+1}(\cdot,\cdot,\cdot)$. 

For any $s\in V_k$, let $G_s$ be the graph obtained from $G$ by adding \emph{shortcuts} from $s$ to $V_{k+1}$, as follows.
For each $v\in V_{k+1}$, and $i=\alpha_{\min},\ldots,\alpha_{\max}$, we add an edge $e_{v,i}=sv$ with
\[\ell(e_{v,i})=\ap_{k+1}(s,v,i)\text{ and }d(e_{v,i})=(1+\eps)^{i+2(q-(k+1))}.\]
In other words, $e_{v,i}$ shortcuts the $(s,v)$-path corresponding to the entry $\ap_{k+1}(s,v,i)$.
Note that $(s,v)\in V\times V_{k+1}$, so indeed $\ap_{k+1}(s,v,i)$ is well-defined and available by the inductive assumption.

With the graph $G_s$ in hand, we invoke the $\pidp{\one}$ with source $s$ and $h=2^{k+1}$ on the graph~$G_s$.
By~\Cref{thm:pi-dp}, using that, for each $v\in V$, in constant time we can set $\ap_k(s,v,i)$ to be the length of some $(s,v)$-path $Q_{s,v}$ in $G_s$ such that
\[\ell(Q_{s,v})\leq \rdist^{2^{k+1}}_{G_s}\left(s,v,(1+\eps)^{i+2(q-k)-1}\right)\text{ and }
d(Q_{s,v})\leq (1+\eps)^{i+2(q-k)}.\]
Note that such a path $Q_{s,v}$ in $G_s$ corresponds to some underlying path $P'_{s,v}$ in $G$: by the inductive assumption and the definition of $\ap_{k+1}(s,\cdot,\cdot)$, the shortcut edge (if used) may be expanded to a respective path it shortcuts.
The following lemma establishes the correctness of this approach.

\begin{lemma}
    For any $s\in V_k$, $v\in V$, and $i\in [\alpha_{\min},\alpha_{\max}]$, $\ell(P_{s,v}')\leq \rdist(s,v,(1+\eps)^i)$ holds with high probability.
\end{lemma}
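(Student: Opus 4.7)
The plan is a (reverse) induction on $k$, with inductive hypothesis that for every triple $(s',t',i')$ with $s' \in V_{k+1}$ (or symmetrically $t' \in V_{k+1}$), the expanded path underlying $\ap_{k+1}(s',t',i')$ has length at most $\rdist(s',t',(1+\eps)^{i'})$. The base case $k=q-1$ is immediate since $\pidp{\one}$ is run on $G$ itself with $h=n$: every simple path in $G$ has at most $n$ edges, so the length guarantee of \Cref{thm:pi-dp} at delay threshold $(1+\eps)^{i+1}$ yields $\ap_{q-1}(s,t,i) \leq \rdist(s,t,(1+\eps)^{i+1}) \leq \rdist(s,t,(1+\eps)^i)$. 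For the inductive step, fix $(s,v,i)$ with $s \in V_k$. It suffices to exhibit a candidate $(s,v)$-path $Q^*$ in $G_s$ with (a) $|Q^*| \leq 2^{k+1}$, (b) $d(Q^*) \leq (1+\eps)^{i+2(q-k)-1}$, and (c) $\ell(Q^*) \leq \rdist(s,v,(1+\eps)^i)$; then \Cref{thm:pi-dp} applied to $G_s$ returns $\ap_k(s,v,i) \leq \ell(Q^*)$, and since each shortcut edge's length equals that of the underlying $(s,\cdot)$-path it encodes, the expanded path $P'_{s,v}$ has length exactly $\ap_k(s,v,i)$, which proves the claim.

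To build $Q^*$, let $P^*$ be an optimal $(s,v)$-path with $d(P^*) \leq (1+\eps)^i$ and $\ell(P^*) = \rdist(s,v,(1+\eps)^i)$. If $P^*$ has at most $2^{k+1}$ edges, take $Q^* := P^*$; properties (a)--(c) are immediate since $2(q-k)-1 \geq 1$. Otherwise $|P^*| > 2^{k+1}$, so the last $2^{k+1}$ vertices of $P^*$ form a simple subpath. Since $|V_{k+1}| = \Theta((n/2^{k+1})\log n)$, \Cref{f:hitting-set} gives, with high probability, a vertex $u \in V_{k+1}$ among these last $2^{k+1}$ vertices, and $u \neq s$ because $s$ lies outside this window. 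Splitting $P^* = P_{s,u} \cdot P_{u,v}$ yields $|P_{u,v}| \leq 2^{k+1} - 1$, and we set $i' := \lceil \log_{1+\eps} d(P_{s,u}) \rceil \in [\alpha_{\min}, i]$ (using $d(P_{s,u}) \geq \eps D_{\min}/n$ and $d(P_{s,u}) \leq (1+\eps)^i$). Finally, set $Q^* := e_{u,i'} \cdot P_{u,v}$.

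Properties (a)--(c) then follow from routine bookkeeping. The hop count is $1 + (2^{k+1} - 1) = 2^{k+1}$. For (c), the inductive hypothesis gives $\ell(e_{u,i'}) = \ap_{k+1}(s,u,i') \leq \rdist(s,u,(1+\eps)^{i'}) \leq \ell(P_{s,u})$, where the last step holds because $P_{s,u}$ is itself an $(s,u)$-path of delay at most $(1+\eps)^{i'}$; adding $\ell(P_{u,v})$ gives $\ell(Q^*) \leq \ell(P^*)$. For (b), using $(1+\eps)^{i'} \leq (1+\eps) d(P_{s,u})$ and $d(P_{s,u}) + d(P_{u,v}) \leq (1+\eps)^i$,
\[
d(Q^*) = (1+\eps)^{i'+2(q-k-1)} + d(P_{u,v}) \leq (1+\eps)^{2(q-k)-1}\bigl(d(P_{s,u}) + d(P_{u,v})\bigr) \leq (1+\eps)^{i+2(q-k)-1}.
\]

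The chief subtlety I anticipate is handling the randomness correctly: the vertex $u$ at which the inductive hypothesis is invoked depends on both $P^*$ and the random set $V_{k+1}$, so one needs the inductive statement to hold simultaneously for \emph{all} tuples $(s',u',i')$ at level $k+1$. This is achieved by a union bound over the polynomially many such tuples across all $O(\log n)$ levels, against the high-probability guarantees of \Cref{f:hitting-set} (and the DP). The symmetric entries with $(s,t) \in V \times V_k$ are handled identically by running the argument on the reverse graph $G^R$.
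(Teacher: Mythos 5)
Your proposal is correct and follows essentially the same route as the paper's proof: split the optimal path $P^*$ at a sampled vertex in its last $2^{k+1}$ vertices, replace the prefix by the level-$(k{+}1)$ shortcut edge with the rounded delay exponent, and verify the hop, delay, and length bounds of the resulting candidate path in $G_s$. Your explicit remarks on the base case, on expanding shortcut edges back into paths of $G$, and on the union bound over all tuples are details the paper leaves implicit, but the argument is the same.
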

\begin{proof}
    It is enough to prove
    \[\rdist^{2^{k+1}}_{G_s}\left(s,v,(1+\eps)^{i+2(q-k)-1}\right)\leq \dist(s,v,(1+\eps)^i).\]
    Consider a simple $(s,v)$-path $P^*\subseteq G$ such that $d(P^*)\leq (1+\eps)^i$ and $\ell(P^*)=\rdist(s,v,(1+\eps)^i)$.

    First, suppose $P^*$ has no more than $2^{k+1}$ edges.
    Then, by $P^*\subseteq G_s$ we have:
    \[ \rdist^{2^{k+1}}_{G_s}\left(s,v,(1+\eps)^{i+2(q-k)-1}\right)\leq \rdist^{2^{k+1}}_G(s,v,(1+\eps)^i)=\rdist(s,v,(1+\eps)^i).\]

    So let us assume that $P^*$ has more than $2^{k+1}$ edges.
    Consider the suffix $S$ of $P^*$ consisting of $2^{k+1}-1$ edges, so that $|V(S)|=2^{k+1}$.
    By~\Cref{f:hitting-set}, with high probability there exists a vertex $x\in V(S)\cap V_{k+1}$.
    Express $P^*=P_1P_2$, where $P_1$ is a non-empty $s\to x$ path. Note that $P_2$ has at most $2^{k+1}-1$ edges.
    Let $z:=\lceil \log_{1+\eps}d(P_1)\rceil \leq \alpha_{\max}$.
    Since $d(P_1)\geq \eps D_{\min}/n$, we also have $z\geq \alpha_{\min}$.
    Since $x\in V_{k+1}$, there is a shortcut edge $e_{x,z}$ in $G_s$, and we have
    \[\ell(e_{x,z})=\ap_{k+1}(s,x,z)\text{ and }d(e_{x,z})=(1+\eps)^{z+2(q-(k+1))}.\]
    By the definition of $\ap_{k+1}(s,x,z)$ and the optimality of $P_1$, $e_{x,z}$ certifies the existence of an $(s,x)$\nobreakdash-path $P'_1$ in $G$ with $d(P_1')\leq d(e_{x,z})$ such that
    \[\ell(e_{x,z})=\ell(P'_1)\leq \rdist(s,x,(1+\eps)^{z})\leq \rdist(s,x,d(P_1))=\ell(P_1).\]
    Consequently, we obtain:
    \begin{equation}\label{eq:apsp1}
        \ell(e_{x,z}\cdot P_2)=\ell(P_1'P_2)\leq \ell(P_1P_2)=\ell(P^*)=\rdist(s,v,(1+\eps)^i).
    \end{equation}
    Moreover, observe that:
    \[d(e_{x,z})=(1+\eps)^{z+2(q-(k+1))}\leq (1+\eps)d(P_1)\cdot (1+\eps)^{2(q-(k+1))}=(1+\eps)^{2(q-k)-1}d(P_1).\]
    We conclude
    \begin{align*}
        d(e_{x,z}\cdot P_2)&=d(e_{x,z})+d(P_2)\\
        &\leq (1+\eps)^{2(q-k)-1}d(P_1)+d(P_2)\\
        &\leq (1+\eps)^{2(q-k)-1}d(P^*)\\
        &\leq (1+\eps)^{i+2(q-k)-1}.
    \end{align*}
    At the same time, the $(s,v)$-path $e_{x,z}\cdot P_2$ is contained in $G_s$ and consists of at most $2^{k+1}$ edges, so we have:
    \begin{equation}\label{eq:apsp2}
        \rdist^{2^{k+1}}_{G_s}\left(s,v,(1+\eps)^{i+2(q-k)-1}\right)\leq \rdist^{2^{k+1}}_{G_s}\left(s,v,d(e_{x,z}\cdot P_2)\right)\leq \ell(e_{x,z}\cdot P_2).
    \end{equation}
    By combining inequalities~\eqref{eq:apsp1}~and~\eqref{eq:apsp2}, we obtain
    \[\rdist^{2^{k+1}}_{G_s}\left(s,v,(1+\eps)^{i+2(q-k)-1}\right)\leq \dist(s,v,(1+\eps)^i),\]
    as desired.
\end{proof}
Observe that each of the $|V_{k}|$ used graphs $G_s$ has
\[O\left(m+|V_{k+1}|\log(n)/\eps\right)=O\left(m+(n/2^{k+1})\log^2(n)/\eps)\right)\]
edges.
Since the $\pidp{\one}$ is run with $h=2^{k+1}$, by~\Cref{thm:pi-dp}, the total time spent computing $\ap_k(\cdot,\cdot,\cdot)$ is
\[O\left(|V_k|\cdot 2^{k+1}\cdot \left(m+\frac{n}{2^{k+1}}\log^2(n)/\eps)\right)\log(n)/\eps\right)=O\left(\left(mn+\frac{n^2}{2^k}\log^2(n)/\eps\right)\log(n)/\eps\right).\]
Since there are $O(\log{n})$ indices $k$ to consider, and one needs to eventually run the algorithm with~$\eps$ decreased by a factor $\Theta(\log{n})$, the overall running time of the algorithm is:
\[O\left(\sum_{k=0}^q\left(mn+\frac{n^2}{2^k}\log^3(n)/\eps\right)\log^2(n)/\eps\right)=O\left(mn\log^3(n)/\eps+n^2\log^5(n)/\eps^2\right).\]
We have thus proved the following theorem.
\thmallpairs

\section{Open Problems}
\label{sec:open-problems}

We conclude our paper with a list of open problems.

\begin{itemize}
    \item Can we improve the runtime of $\Ot(mn^{3/5})$ for sparse graphs?
    The ideal would be a near/almost-linear time algorithm (as shown for undirected graphs by Bernstein \cite{Bernstein12}). 
    The main challenge here seems to be achieving a faster algorithm for DAGs.
    A fundamental obstacle to achieving almost-linear time is that our algorithm for sparse graphs is effectively based on a sparse approximate hopset, and such hopsets are known to be unable to reduce the (approximate) hop diameter to $n^{1/4-\Omega(1)}$ in directed graphs (see \cite{BodwinH23}).
    \item Our algorithm, as well as the almost linear-time algorithm for undirected graphs of \cite{Bernstein12}, suffer from a \emph{bicriteria} approximation: if the optimal path with delay at most $D$ has length~$L$, then we return a path with length at most $(1+\eps)L$ and delay at most $(1+\eps)D$.
        While the exact version of the problem is NP-hard \cite{GareyJ79}, there are several $O(mn)$ time algorithms that only incur a $(1+\eps)$-approximation on one of the parameters, while being exact on the other \cite{goel2001efficient,lorenz2001simple}.
        Even in undirected graphs, it remains open whether there exists an $o(mn)$ time algorithm that is exact on one of the parameters.
        There may be hope, since our $\Ot(n^2)$ and $\Ot(mn^{3/5})$ algorithms for DAGs incur an error in only one parameter.
        The error in the second parameter our main results suffer from originates from our framing of the problem as a reduction to Gap $\rsp$ and, for other reasons, from our using of LDDs by combining the lengths and delays into a single quantity.
    \item The $\Ot(mn)$ time algorithm of \cite{goel2001efficient} runs in strongly-polynomial time, and so do our algorithm for DAGs.
        On the other hand, our $o(mn)$-time results for general graphs have a dependence on the aspect ratio of lengths $W$ in the single-source variant; there is a $\log(nW)$ factor in the running times of \Cref{thm:dense-simple,thm:sparse-simple}.
        Can we remove this dependence on $W$?
        In fact, the almost-linear time algorithm of \cite{Bernstein12} also runs in strongly-polynomial time so there is not much evidence that suggests than an $o(mn)$ algorithm for directed graphs ought to have a dependence on $W$.
        The dependence on $W$, just like in the previous point, stems from our framing of the problem as a reduction to Gap $\rsp$.
        That being said, let us note that the dependence on $W$ can be avoided in the single source-target pair case (see~\Cref{lem:st-strong-poly}).
\end{itemize}

\bibliographystyle{alpha}
\bibliography{references}

\appendix
\section{Returning Paths}
\label{sec:returning-paths}

Our algorithms for $\rsp$ returned distances instead of actual paths.
In this section we briefly (and informally) sketch how to report an actual $(s,t)$-path $P$ using $\Ot(\card{P})$ additional time such that $\ell(P) \le (1+\eps)\dist(s,t,D)$ and $d(P) \le (1+\eps)D$.
We can view the algorithms of \Cref{thm:dense-simple,thm:sparse-simple} as a preprocessing step, and recover paths from their output in the way we next describe.

First, observe that we can easily modify $\pidp{\pi}$ so that paths can be recovered in time linear in their number of hops, by storing a back-pointer in each cell of $\dptable{\cdot}{\cdot}$.
We may then modify our reduction to Gap $\rsp$ by storing a pointer from every vertex to the test that witnesses its distance estimate.
The aforementioned test is comprised of an LDD Hierarchy, $\pi$, $H$, and the table $\dptable{\cdot}{\cdot}$ of $\pidp{\pi}$.
In this way we can recover, for any $t \in V$, the $(s,t)$-path $P$ corresponding to the distance answered in $\Ot(\card{P})$ time.
There is still one snag, however.
$P$ is a path in $G \cup H$ and we want to recover a path in $G$.
Accordingly, for every edge $uv \in P$ such that $uv \in H$, we will need a way to recover the $(u,v)$-path in $G$ that $uv$ is shortcutting.

If $cv \in H$ is a star-edge, we recover the $(c,v)$-path as follows.
As a preprocessing step, we modify the LDD Hierarchy construction by executing Dijkstra's algorithm on $\lendel{G}[C]$ and its reverse graph for all SCCs $C$, with the source chosen to be the representative vertex for $C$ (i.e the center of the star).
For each star-edge $cv$ in $C$, we add a pointer from $v$ to the corresponding entry in the Dijkstra table of $C$.
In this way, the $(c,v)$-path can be recovered using back-pointers within the Dijkstra table it points to.
This completes how the algorithm of \Cref{thm:dense-simple} can be used to report paths.

If $uv \in H$ is a hop-edge, we can use the All-Pairs $\rsp$ table which we used to create $uv$ to then recover the $(u,v)$-path.
Note however that our algorithm for All-Pairs $\rsp$ makes use of its own auxiliary edge set $H'$, and so the $(u,v)$-path may contain edges $e'$ from $H'$.
The shortcut edge $e'$, created from, say, the $k$th transition, can then be unfurled into a path containing edges from the $(k+1)$-st transition.
Repeating $O(\log n)$ many unfurlings, we get a path only using the edges of $G$, sketching out how the algorithm of \Cref{thm:sparse-simple} can be used to report paths.

\section{Parallel Edges}
\label{sec:parallel-edges}

We have presented our results under the assumption that the input graphs are simple so that the interesting aspects of our results would be highlighted.
In fact, our algorithms work for graphs that are not simple.
While \Cref{thm:sparse} goes through without any modification, we have to be slightly careful about the dense graph case (i.e., graphs with $\Omega(n^{3/2})$ edges) where \Cref{thm:dense} applies.

The part of our result there that is sensitive to parallel edges is the following bound
\begin{align*}
    \Pi
    = \sum_{e \in E} \frac{1}{\tau(e)} 
    = \sum_{u \in V} \sum_{\substack{v \in V: \\ uv \in E}} \frac{1}{\tau(v) - \tau(u)}
    \le \sum_{u \in V} \sum_{i \in [n]} \frac{1}{i}
    = O(n \log n)
\end{align*}
as seen in Example 2 of \Cref{sec:dp} and, more importantly, \Cref{obs:dense-Pi}.
The bound
\begin{align*}
    \sum_{u \in V} \sum_{\substack{v \in V: \\ uv \in E}} \frac{1}{\tau(v) - \tau(u)}
    \le \sum_{u \in V} \sum_{i \in [n]} \frac{1}{i}
\end{align*}
ceases to be true in the presence of parallel edges since $uv$, perhaps with $\tau(v) - \tau(u) = 1$, could occur up to $m$ times.
To rectify this situation, note that in giving a $(1 + \eps, 1 + \eps)$-approximation algorithm, we may assume that any pair of vertices $u,v \in V$ has at most $O(\log(n) / \eps)$ parallel edges between them.
Indeed, for each integer $k$ such that $(1 + \eps)^k$ is in the range $[\delta, D]$, we are forced to keep only the edge $uv$ with the smallest length such that $d(uv) \le (1 + \eps)^k$.
Since there are $O(\log_{1+\eps}(D/\delta))=O(\log n / \eps)$ such $k$, we can keep $O(\log(n) / \eps)$ parallel edges $uv$ and discard the rest. Then, we can bound $\Pi$ as follows:
\begin{align*}
    \Pi
    = \sum_{e \in E} \frac{1}{\tau(e)} 
    = \sum_{u \in V} \sum_{\substack{v \in V: \\ uv \in E}} \frac{O(\log(n)/\eps)}{\tau(v) - \tau(u)}
    \le \sum_{u \in V} \sum_{i \in [n]} \frac{O(\log(n)/\eps)}{i}
    = O(n \log^2(n)/\eps).
\end{align*}

In view of this, \Cref{thm:dense} can be amended to run in $O(m + n^2 \log^{10}(n) \log(nW) / \eps^4)$ time on multi-graphs.

\section{Reduction to Gap $\rsp$}
\label{sec:reduction-to-gap}

\reduction*
\begin{proof}
    Assume without loss of generality that $\dist(s,t,D) \ge 1$ for all $t$ (see \Cref{sec:zero-paths}).
    Consider the following reduction.
    \begin{tbox}
        \underline{\textbf{\textsc{Reduction to Gap $\rsp$}}}
        \begin{enumerate}
            \item $\eps^* \gets \frac{\eps}{4 c \log n}$, where $c$ is the constant hidden by the big-$O$s error guarantees of the assumed algorithm $\cA$ in the $\mathtt{NO}$ case.
            \item Let $L(t):=\infty$ for all $t\in V$.
            \item For $i$ such that $L_i = (1 + \eps^*)^i \in [1, (1+\eps^*)nW]$:
                \begin{enumerate}
                    \item Repeat $2 \log n$ times with fresh randomness each time:
                        \begin{enumerate}
                            \item Run $\cA$ on $G$ with source $s$, length threshold $L_i$, delay threshold $D$, and approximation $\eps^*$.
                            \item \label{alg:reduction-ii} For each $t \in V$, if $\cA$ said $\mathtt{YES}$ for $t$, then $L(t) \gets \min(L(t), (1 + O(\eps^* \log n))L_i)$.
                        \end{enumerate}
                \end{enumerate}
            \item Output $L(t)_{t\in V}$.
        \end{enumerate}
    \end{tbox}
    
    \underline{Running Time}:
    Clearly, the running time is $O\paren{(T(m,n) + n) \log^{\alpha + 2}(n) \log(nW) / \eps^{\alpha + 1}}$ since we have rescaled $\eps^* \gets \eps / (4c \log n)$ and $\cA$ is run $O(\log(nW)\log^2(n)\cdot \eps)$ times.

    \underline{Correctness}:
    We will show that the reduction gives a $((1 + O(\eps^* \log n))^2, (1 + O(\eps^* \log n))^2)$-approximate solution to $\rsp$.
    Since $(1 + O(\eps^* \log n))^2 = (1 + \eps/4)^2 \le e^{\eps / 2} \le 1 + \eps$, this would yield a $(1 + \eps, 1 + \eps)$-approximate solution to $\rsp$ as desired.
    
    Let $t \in V$ be an arbitrary vertex.
    \begin{itemize}
        \item If in iteration $i$ of the outer-loop $\cA$ ever answers $\mathtt{YES}$ for $t$, this means that there exists an $(s,t)$-path $P$ with $\ell(P) \le (1 + O(\eps^* \log n)) L_i$ and $d(P) \le (1 + O(\eps^* \log n))^2 D$.
            Crucially, the delay constraint is never violated by more than a $(1 + O(\eps^* \log n))^2$ factor.
        \item If $\dist(s,t,D)<\infty$, then $\dist(s,t,D)\leq nW$. In such a case, let $i$ be such that\linebreak $1 \le L_{i-1} \le  \dist(s,t,D) \le L_i \le (1+\eps^*)nW$.
            In iteration $i$ of the outer-loop, $\cA$ answers $\mathtt{YES}$ for $t$ with probability at least $1/2$.
            Since $\cA$ is called $2 \log n$ times, some call to $\cA$ will output $\mathtt{YES}$ with probability at least $(1 - 1/n^2)$ and thus
            $$L(t) \le (1 + O(\eps^* \log n)) L_i \le (1 + O(\eps^* \log n))\cdot (1+\eps^*) \dist(s,t,D)\leq (1 + O(\eps^* \log n))^2\dist(s,t,D).$$ If $\dist(s,t,D)=\infty$, then clearly $L(t)\leq \dist(s,t,D)$.
        \item Finally, note that if for a vertex $t$ the algorithm $\cA$ always answers $\mathtt{NO}$, then in particular there exists no $s\to t$ path with delay at most $(1 + O(\eps^* \log n))^2 D$. It follows that $\dist(s,t,D)=\infty$
        and thus returning $L(t)=\infty$ is correct.
    \end{itemize}
    By union bounding over all vertices, the pairs $(L(t),(1 + O(\eps^* \log n))^2 D)$ constitute \linebreak
    ${((1 + O(\eps^* \log n))^2, 1 + O(\eps^* \log n)^2)}$-approximate (i.e., $(1+\eps,1+\eps)$-approximate) solutions to the $\rsp$ problem for all $t \in V$ with probability at least $1 - 1/n$.
\end{proof}

\begin{lemma}\label{lem:st-strong-poly}
    Suppose our goal is to find an $(1+\eps,1+\eps)$-approximate solution to the $\rsp$ problem for a \emph{single} source-target pair $(s,t)$, rather than for all targets $t\in V$ at once, as required in~\Cref{prob:rsp}. Then the dependence on the aspect ratio in~\Cref{lem:reduction-to-gap} can be removed. 
\end{lemma}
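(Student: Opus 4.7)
The plan is to obtain an a priori polynomial-range bracket on $\dist_G(s,t,D)$ in strongly polynomial time, and then replicate the reduction of~\Cref{lem:reduction-to-gap} over that narrower range of length thresholds. The extra $\log(nW)$ factor in~\Cref{lem:reduction-to-gap} arises solely from iterating $L_i=(1+\eps^*)^i$ over $[1,(1+\eps^*)nW]$ to locate $\dist(s,t,D)$; once we have a polynomial-ratio bracket, only $O(\log(n)/\eps)$ candidate values of $L_i$ are needed.

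First, I would compute the smallest edge length $\ell^*$ such that the subgraph $G^{\lambda}:=(V,\{e\in E:\ell(e)\leq \lambda\})$ admits an $(s,t)$-path with delay at most $D$. Sorting the edges by length and binary searching on $\lambda$, each test reduces to running Dijkstra on $G^{\lambda}$ weighted by delays and checking whether the resulting $s$-$t$ distance is $\leq D$; this costs $O((m+n\log n)\log m)$ time overall, and is strongly polynomial. I claim $\ell^*\leq \dist_G(s,t,D)\leq n\ell^*$: the optimal $(s,t)$-path certifies $\ell^*\leq \dist_G(s,t,D)$ (each of its edges has length at most $\dist_G(s,t,D)$), while the path witnessing feasibility at $\lambda=\ell^*$ has at most $n-1$ edges of length at most $\ell^*$, so its total length, and thus $\dist_G(s,t,D)$, is at most $n\ell^*$.

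With this bracket I then run the reduction of~\Cref{lem:reduction-to-gap} verbatim, except that the outer loop iterates $L_i=(1+\eps^*)^i$ only for $i$ with $L_i\in[\ell^*,(1+\eps^*)n\ell^*]$. This range contains $O(\log(n)/\eps^*)=O(\log^2(n)/\eps)$ values of $i$, replacing the $O(\log(nW)/\eps^*)$ of the original proof. Completeness still holds because there exists some $L_i$ in this range with $\dist_G(s,t,D)\leq L_i\leq (1+\eps^*)\dist_G(s,t,D)$, so the $\mathtt{YES}$-probability boosting argument goes through; soundness is unaffected since the algorithm $\cA$ in the $\mathtt{NO}$ case always answers correctly for every candidate $L_i$. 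Because we only care about a single target $t$, we may discard the answers $\cA$ produces for other vertices at no cost. Combining the preprocessing for $\ell^*$ (an additive strongly polynomial term) with the narrowed reduction yields the stated bound $O((T(m,n)+n)\log^{\alpha+2}(n)/\eps^{\alpha+1})$, free of any dependence on $W$. I do not foresee a substantive obstacle: the only care needed is verifying that the rescaling arguments used inside~\Cref{lem:gap-solver} (which convert lengths via $\hatell(e)=n\cdot \ell(e)/(\eps L)$) still behave well, but since they depend on $L/\ell(e)$ rather than on the global aspect ratio, restricting $L_i$ to the bracket $[\ell^*,n\ell^*]$ causes no issue.
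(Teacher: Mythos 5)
Your proposal is correct and matches the paper's proof essentially verbatim: your $\ell^*$ is exactly the paper's $\gamma$, computed the same way (binary search over the set of edge lengths, testing feasibility of each candidate subgraph with a delay-weighted Dijkstra), yielding the same bracket $\rdist(s,t,D)\in[\gamma,n\gamma]$ and the same narrowed outer loop in the reduction. The only nitpick is bookkeeping: replacing $\log(nW)$ by $O(\log n)$ gives $\log^{\alpha+3}(n)$ rather than the $\log^{\alpha+2}(n)$ you state, which is immaterial to the claim that the dependence on $W$ is removed.
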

\begin{proof}
Note that the dependence on $W$ comes from the fact that initially, we can only bound each $\rdist(s,t,D)$ by $1$ from the below, and by $nW$ from the above. However, if we were guaranteed that for all $t\in V$, $\rdist(s,t,D)\in [\gamma,n\cdot \gamma]$ for some number $\gamma$, then $\log(nW)$ factor incurred by the outer loop of the reduction would turn into a $\log((n\gamma)/\gamma)=\log(n)$ factor.

While it is not always the case that a single number $\gamma$ like this exists for all targets $t\in V$ in general, if one focuses on a single target $t$, such a $\gamma$ can be found in near-linear time. 
Indeed, define~$\gamma$ to be the smallest number such that the graph $G_\gamma=(V,E_\gamma)$, where
$E_\beta=\{e\in E:\ell(e)\leq \beta\}$, contains
an $(s,t)$-path with delay bounded by $D$.
Since every edge in $G_\gamma$ has length at most $\gamma$, we have $\rdist(s,t,D)\leq n\gamma$. On the other hand, every $(s,t)$-path in $G_\gamma$ with delay at most $D$ has to contain at least one edge with length at least $\gamma$, by the minimality of $\gamma$. This proves $\rdist(s,t,D)\geq \gamma$, as desired.

Note that for a candidate number $x$, one can test whether the sought value $\gamma$ satisfies $\gamma<x$ in $\Ot(m)$ time using Dijkstra's algorithm (wrt. delays) on $G_x$.
Moreover, observe that $\gamma$ is necessarily the length of some edge in $E$. As a result, $\gamma$ can be located via binary search on the set of edge lengths in $G$ in $\Ot(m)$ time.
\end{proof}

\section{Zero-Length Paths, and Normalizing Lengths}
\label{sec:zero-paths}

In this paper, we assume the lengths of edges in the input graph are in $\set{0} \cup [1,W]$ and, moreover, that the lengths of restricted shortest paths that our main results find are at least $1$.

For the first assumption, we can rescale all lengths \[\ell(e) \gets \frac{\ell(e)}{\min_{\substack{f \in E: \\ \ell(f) \neq 0}}(\ell(f))}.\]
This does not change the restricted shortest paths, since lengths of paths are sums over the edges and we have multiplied the length of each edge by the same scalar.

For the second assumption, we find all $t$ such that an $(s,t)$-path $P$ with $\ell(P) = 0$ and $d(P) \le D$ in $\Ot(m)$ time in the following way.

\begin{tbox}
    \begin{enumerate}
        \item $F \gets \set{e \in E : \ell(e) > 0}$
        \item Run Dijkstra's Algorithm on $G' = (V, E \setminus F, d)$
        \item Answer $0$ for each $t \in V$ if $\dist_{G'}(s,t) \le D$
    \end{enumerate}
\end{tbox}

\end{document}